\title[Triangulations and solitons]
{Triangulations and soliton graphs for \\
totally positive Grassmannian}
\author{Rachel Karpman and Yuji Kodama} 
\date{\today}
\thanks{The second author was partially
supported by NSF grant DMS-1714770.}
\address{Department of Mathematics, Ohio State University, Columbus, OH 43210}
\email{karpman.2@osu.edu, kodama.1@osu.edu}
\subjclass[2000]{}
\def\vblack(#1, #2)#3{\cnode*[linecolor=black](#1, #2){3}{#3}}
\def\vwhite(#1,#2)#3{\cnode[linecolor=black,fillcolor=white,fillstyle=solid](#1,#2){3}{#3}}
\def\tbox(#1,#2)#3{
\x=#1 \y=#2 
\multiply\x by 12 
\multiply\y by 12 
\z=\x \t=\y
\advance\z by 12 
\advance\t by 12 
\psline(\x,\y)(\x,\t)(\z,\t)(\z,\y)(\x,\y)
\advance\x by 6
\advance\y by 6 
\rput(\x,\y){{\bf #3}}}
\def\proof{\par{\it Proof}. \ignorespaces}
\def\endproof{{\ \vbox{\hrule\hbox{%
     \vrule height1.3ex\hskip0.8ex\vrule}\hrule }}\par}
\theoremstyle{definition}
\theoremstyle{remark}
\numberwithin{equation}{section}
\let\trueint=\int
\let\truesum=\sum
\def\int{\mathop{\textstyle\trueint}\limits}
\def\sum{\mathop{\textstyle\truesum}\limits}
\def\sech{\mathop{\rm sech}\nolimits}
\def\half{{\textstyle\frac12}}
\def\Gr{{\rm Gr}}
\def\u{\mathbf{u}}
\def\v{\mathbf{v}}
\def\pp{\hat{\mathbf{p}}}
\def\t{{\bf t}}
\def\r{{\bf r}}
\newcommand{\p}{\mathbf{p}}
\def\P{{\mathsf P}}
\def\Q{{\mathsf Q}}
\newtheorem{theorem}{Theorem}[section]
\newtheorem{definition}[theorem]{Definition}
\newtheorem{proposition}[theorem]{Proposition}
\newtheorem{lemma}[theorem]{Lemma}
\newtheorem{example}[theorem]{Example}
\newtheorem{remark}[theorem]{Remark}
\newtheorem{algorithm}[theorem]{Algorithm}
\newcommand{\R}{\mathbb R}
\newcommand{\PP}{{\sf P}}
\newcommand{\A}{\mathcal{A}}
\newcommand{\B}{\mathcal{B}}
\newcommand{\FF}{\mathcal{F}}
\newcommand{\back}{\backslash}
\let\k\kappa
\newcommand{\thmrefer}[1]{\renewcommand\thetheorem
  {\protect\ref{#1}}\addtocounter{theorem}{-1}}
\begin{document}

\begin{abstract}
The \emph{KP equation} is a nonlinear dispersive wave equation which provides an excellent model for resonant interactions of shallow-water waves. It is well known that regular \emph{soliton solutions} of the KP equation may be constructed from points in the \emph{totally nonnegative Grassmannian} $\Gr(N,M)_{\geq 0}$.  Kodama and Williams studied the asymptotic patterns (tropical limit) of KP solitons, called \emph{soliton graphs}, and showed that they correspond to Postnikov's \emph{Le-diagrams}. In this paper, we consider soliton graphs for the \emph{KP hierarchy}, a family of commuting flows which are compatible with the KP equation.  For the \emph{positive Grassmannian} $\Gr(2,M)_{>0}$, Kodama and Williams showed that soliton graphs are in bijection with triangulations of the $M$-gon.  We extend this result to  $\Gr(N,M)_{>0}$ when $N=3$ and
$M=6,7$ and $8$.  In each case, we show that soliton graphs are in bijection with Postnikov's \emph{plabic graphs}, which generalize Le-diagrams.
\end{abstract}

\maketitle
\setcounter{tocdepth}{1}
\tableofcontents

\section{Preliminaries}
\label{sec: prelim} In this section, we start to review the KP equation and soliton graphs.  Then we define the \emph{duality map}, which gives a correspondence between soliton graphs and \emph{soliton subdivisions}, and sketch some combinatorial background. At the end of this section, we describe the structure of the paper and state our main theorems.

\subsection{The KP equation and soliton solutions}
\label{subsec: KP}
The KP equation is a two-dimensional nonlinear dispersive wave equation given by
\begin{equation}\label{e:KP}
\frac{\partial}{\partial x}\left(-4\frac{\partial u}{\partial t}+6u\frac{\partial u}{\partial x}+\frac{\partial^3u}{\partial x^3}\right)+3\frac{\partial^2u}{\partial y^2}=0,
\end{equation}
where $u=u(x,y,t)$ represents the wave amplitude at the point $(x,y)$ for a fixed time $t$.
The KP equation was first proposed to study the stability of one-soliton solutions of the KdV equation
under the influence of weak transverse perturbations \cite{KP:70}.
The equation can also be derived from the three-dimensional Euler equation for an irrotational and incompressible fluid under the assumptions that it describes wave propagation of small amplitude,
long wavelength and quasi-two dimensional approximations (see e.g. \cite{K:10, KY:16}). 

We are interested in solutions of the KP equation that are regular in the entire $xy$-plane, where they are localized along certain line segments and rays.  We call such solutions \emph{line-soliton} solutions, or \emph{KP solitons} (see \cite{K:17} for a survey of the KP soliton).  To construct a KP soliton, it suffices to find a function $\tau(x,y,t)$ which satisfies the bilinear equation
\begin{equation}\label{e:HirotaKP}
P(D_x,D_y,D_t)\tau\cdot\tau:=\left(-4D_xD_t+D_x^4+3D_y^2\right)\tau\cdot \tau=0,
\end{equation}
where $D_z^n$ is the Hirota derivative defined by
\[
D_z^nf\cdot g:=\frac{\partial^n}{\partial s^n}f(z+s)g(z-s)\Big|_{s=0}.
\]
See for example \cite{H:04}.
The function
\begin{equation}\label{e:u-tau}
u(x,y,t)=2\frac{\partial^2}{\partial x^2}\ln\tau(x,y,t)
\end{equation}
then gives a solution.

\begin{remark}\label{rem:dispersion}
The dispersion relation of the KP equation is given by
\[
P(p,q,\omega)=-4p\omega+p^4+3q^2=0,
\]
which has the following parametrization,
\begin{equation}\label{e:parameterization}
p=\k_i-\k_j,\qquad q=\k_i^2-\k_j^2,\qquad \omega=\k_i^3-\k_j^3,
\end{equation}
with arbitrary constants $\k_i$ and $\k_j$.
\end{remark}

\begin{example}\label{Example:N = 1}
As a simple example, we may take the $\tau$-function
\[
\tau(x,y,t)=\sum_{i=1}^M a_ie^{\theta_i(x,y,t)}\qquad\text{with}\qquad \theta_i(x,y,t)=\kappa_i x+\kappa_i^2y+\kappa_i^3 t.
\]
where the parameters $a_i$ and $\kappa_i$ are real.
(Note that $P(D_x,D_y,D_t)e^{\theta_i}\cdot e^{\theta_j}=0$ from Remark \ref{rem:dispersion}.)
If the $a_i$ are nonnegative and not all zero, then $\tau>0$ everywhere; in other words, the solution is regular.  

Suppose $\tau$ has only two nonzero exponential terms, so that
$\tau=a_ie^{\theta_i}+a_je^{\theta_j}$
for some $i < j$, and we have
\begin{equation}\label{e:1soliton}
u(x,y,t)=\frac{(\k_i-\k_j)^2}{2}\sech^2\left(\frac{1}{2}\tilde\theta_{[i,j]}(x,y,t)\right)\qquad\text{with}\qquad
\tilde\theta_{[i,j]}=\theta_i-\theta_j+\ln\frac{a_i}{a_j}.
\end{equation}
Then $u(x,y,t)$ describes a wave with one peak, localized along the line $\tilde\theta_{[i,j]}(x,y,t)=0$.
We say this line is an $[i,j]$-soliton, or an $\{i,j\}$-soliton if we do not with to specify the order of $i$ and $j$.
\end{example}

We now describe a generalization of Example \ref{Example:N = 1}.  Fix real parameters $\kappa_1,\dots,\kappa_M$, and let $A = (a_{i,j})$ be a full-rank $N \times M$ matrix for some $N < M$.  We construct a $\tau$-function $\tau(x,y,t)$, and hence a KP soliton $u(x,y,t)$, from the matrix $A$. The function $\tau(x,y,t)$ 
is given by
the Wronskian determinant with respect to $x$ (see e.g. \cite{H:04, K:17} and the references listed therein)
\[\tau=\text{Wr}(f_1,\ldots,f_N),\]
where the scalar function $f_i(x,y,t)$ are given by 
\begin{equation}\label{e:f}
(f_1,\ldots,f_N)=(E_1,\ldots,E_M)A^T\qquad\text{with}\qquad E_j=\exp(\kappa_jx+\kappa_j^2y+\kappa_j^3t)\,.
\end{equation}
Throughout the paper, we assume the following order,
\begin{equation}\label{e:order}
\kappa_1~<~\kappa_2~<~\cdots~<~\kappa_M.
\end{equation}
Then using the Cauchy-Binet lemma for the determinant, the $\tau$-function with (\ref{e:f}) can be expressed as the following sum of exponential terms,
\begin{equation}\label{e:KPtau}
\tau(x,y,t)=\text{Wr}(f_1,\ldots,f_N)=\sum_{I\in\binom{[M]}{N}}\Delta_I(A)K_I\exp(\Theta_I(x,y,t)).
\end{equation}
Here $I=\{i_1<i_2<\ldots<i_N\}$, and $\binom{[M]}{N}$ denotes the set of all $N$-index subsets of $[M]:=\{1,\ldots,M\}.$
The term $K_I$ is defined by $K_I=\prod_{j>l}(\kappa_{i_j}-\kappa_{i_l})$, and 
the order \eqref{e:order} implies $K_I>0$.  The coefficient $\Delta_I(A)$ is the $N\times N$ minor of the matrix $A$ 
with the columns labeled by the index set $I$, and  the exponent $\Theta_I(x,y,t)$
is given by
\begin{equation}\label{e:Theta}
 \Theta_I(x,y,t)=\sum_{i\in I}\theta_i(x,y,t)=p_Ix+q_Iy+\omega_It,
 \end{equation}
where $p_I=\sum_{i\in I}\kappa_i, q_I=\sum_{i\in I}\kappa_i^2$ and $\omega_I=\sum_{i\in I}\kappa_i^3$.  It was then shown in \cite{KW:13} that  the $\tau$-function
is positive for all $(x,y,t)$ (i.e. the solution is regular) if and only if $\Delta_I(A)\ge 0$
for all the $N$-element subset $I$.  In this case, the matrix $A$ is called a \emph{totally nonnegative} (TNN) matrix \cite{postnikov}.

\begin{remark}\label{rem:KPhierarchy}
The KP equation admits an infinite number of commuting flows, and these flows all together
define the \emph{KP hierarchy}. Let $\{t_n:n=1,2,\ldots,\}$ denote the flow parameters.
Then the $\tau$-function for the KP hierarchy is also given in the same form as
\eqref{e:KPtau} where each $f_i$ now satisfies the linear equations
\[
\frac{\partial f_i}{\partial t_n}=\frac{\partial^n f_i}{\partial x^n}\qquad\text{for}\qquad n=1,2,\ldots,
\]
where $t_1=x, t_2=y, t_3=t$, and the rest of the $t_n$'s give the higher flows. (See e.g. \cite{MJD:00, K:17}.) 

We may extend the construction of the $\tau$-function to include some of these additional flow parameters. Let $A$ be an $N \times M$ matrix, and let $\hat{\mathbf{t}} = (t_1,\ldots,t_{M-1}).$  Let $x = t_1$ and $y = t_2$, and let 
$\mathbf{t} = (t_3,\ldots,t_{M-1})$, i.e. $\hat{\bf t}=(x,y,\mathbf{t})$.
Then we may substitute 
\[E_j(\hat{\mathbf{t}}) = \exp(\theta_j(\hat{\mathbf{t}})) :=\sum_{i = 1}^{M-1} \kappa_j^i t_i\]
for the $E_j$ in \eqref{e:f}.
Taking the Wronskian as above, we obtain the $\tau$-function for the KP hierarchy,
\begin{equation}\label{e:tau}
\tau(x,y,\mathbf{t}) = \sum_{I\in\binom{[M]}{N}}\Delta_I(A)K_I\exp(\Theta_I(x,y,\mathbf{t})),
\end{equation}
where the definition of $\Theta_I(x,y,\mathbf{t})$ is analogous to \eqref{e:Theta}, i.e.
$\Theta_I=\sum_{j\in I}\theta_j$ with
\[
\theta_j(x,y, \mathbf{t}) = \kappa_j x + \kappa_j^2 y + \sum_{i= 3}^{M-1} \kappa_j^i t_i.
\]
Setting $t = t_3$, and treating the remaining $t_i$-parameters as constants, we obtain a soliton solution to the KP equation.  
\end{remark}

\begin{remark}\label{Grassman}
The \emph{Grassmannian} $\Gr(N,M)$ is the parameter space of $N$-planes in $\mathbb{R}^M$.  Concretely, $\Gr(N,M)$ is the space of full-rank $N \times M$ matrices, modulo row operations.  A matrix $A$ corresponds to the span of its rows, and the map (the Pl\"ucker embedding)
\[
A~ \longmapsto~ \left\{\Delta_I(A):I \in \binom{[M]}{N}\right\},
\]
gives a system of homogeneous coordinates on $\Gr(N,M)$, known as \emph{Pl\"{u}cker coordinates}. Hence the construction $A \mapsto \tau(x,y,\mathbf{t})$ gives a soliton solution for each point in $\Gr(N,M)$, which is unique up to multiplication by a scalar.
Regular soliton solutions correspond to points in the \emph{totally nonnegative Grassmannian} $\Gr(M,N)_{\ge0}$, which is of considerable interest in its own right  \cite{postnikov, kodama}.
\end{remark}


\subsection{Soliton graphs}
\label{subsec: soliton graphs}

We are interested in the two-dimensional wave patterns generated by the regular KP solitons $u(x,y,\mathbf{t})$ constructed in the previous section. We may represent the wave pattern at a given time by a \emph{contour plot} showing the wave peaks in the $xy$-plane.  Figure \ref{fig:Sol36} shows the time evolution of the solution for a $3\times 6$ matrix $A$, with $\kappa$-parameters
\[(\kappa_1,\ldots,\kappa_6)=(-5/2, -5/4,-1/2,1/2,3/2,5/2).\] 
Here all $3\times 3$ minors of $A$ are nonzero (this type of matrix is called a \emph{totally positive} matrix). 
\begin{figure}[h!]
\begin{center}
\includegraphics[width=3.5cm]{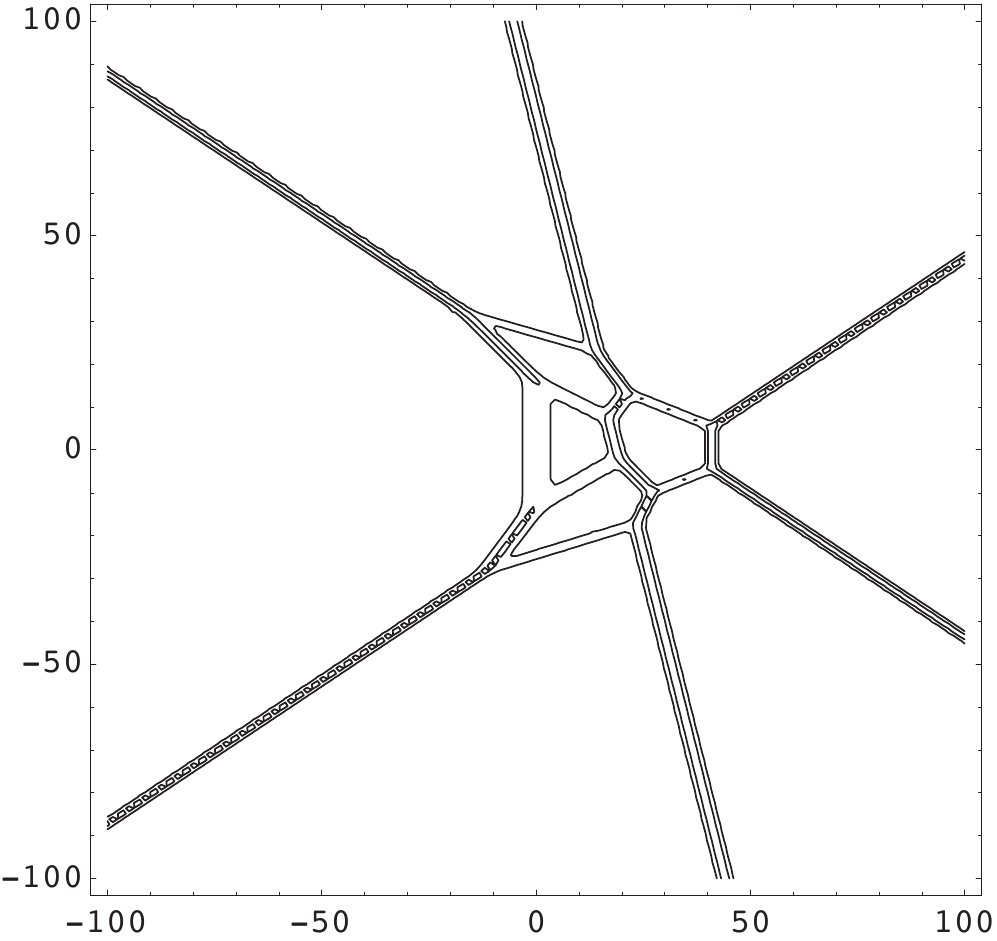}
\hskip0.3cm{\includegraphics[width=3.5cm]{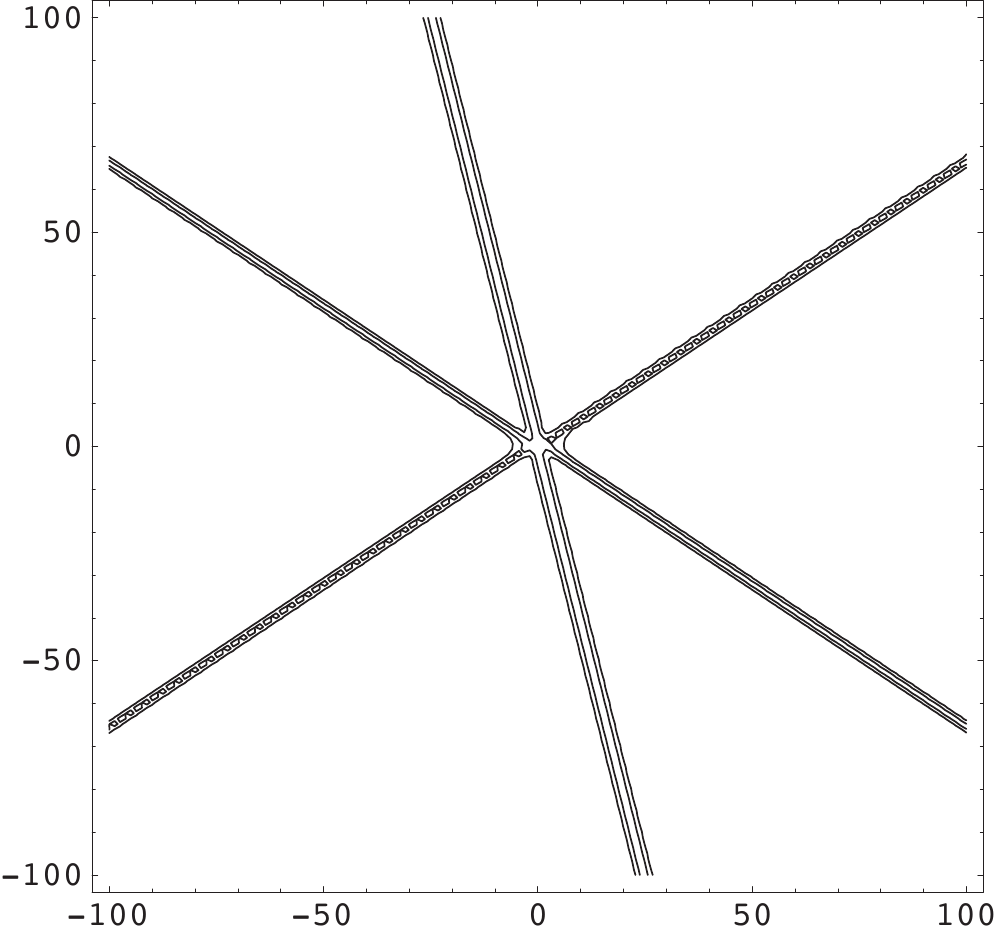}}
\hskip0.3cm\includegraphics[width=3.5cm]{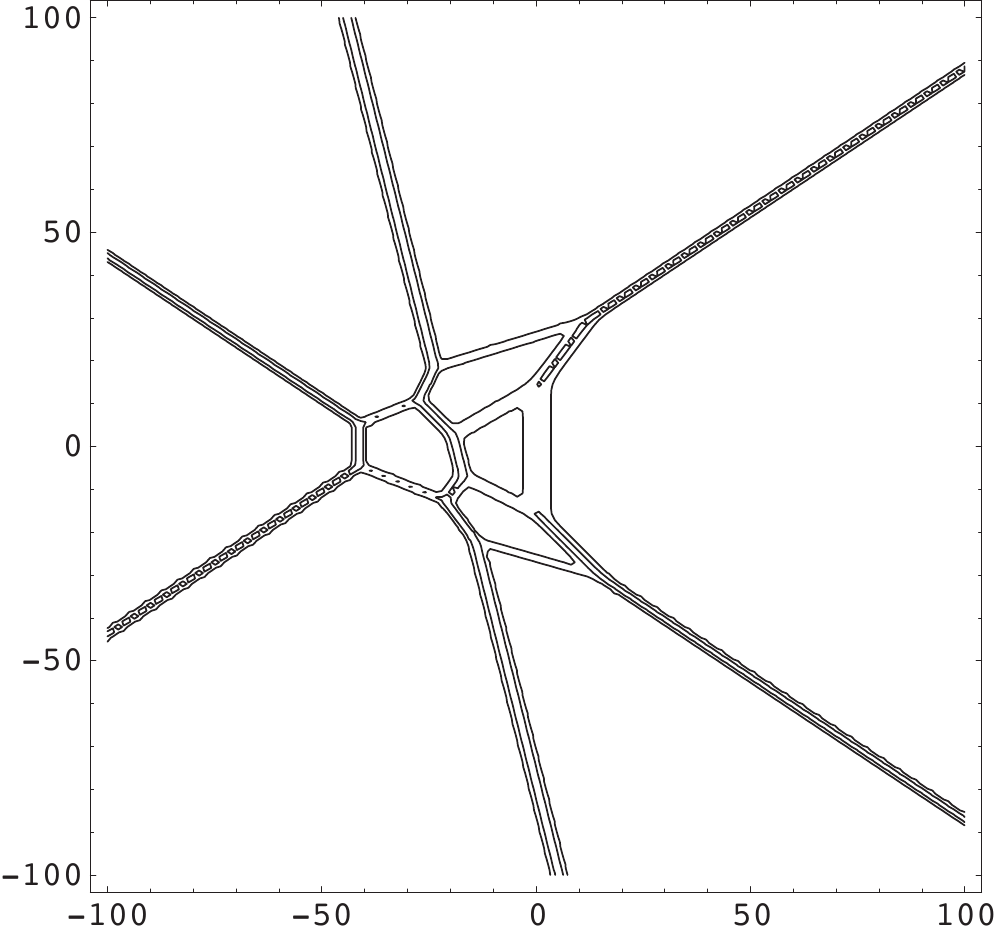} 
\hskip0.3cm{\includegraphics[width=3.5cm]{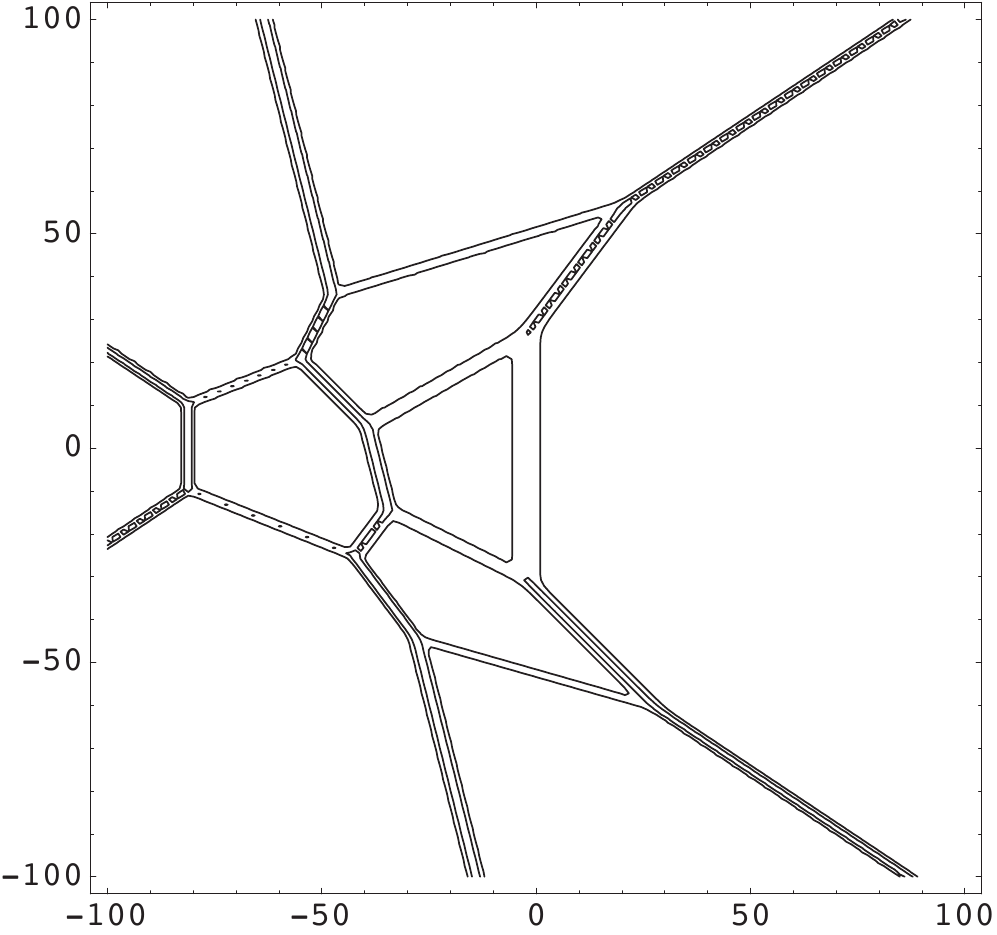}}
\end{center}
\vskip-0.4cm
\caption{The contour plots corresponding to a KP soliton for $\text{Gr}(3,6)_{>0}$.
 The panels show the time evolution of the solution $u(x,y,t)$ for $t=-10, 0, 10, 20$ from the left.}
\label{fig:Sol36}
\end{figure}

Each region in a contour plot represents the portion of the $xy$-plane where one of the exponential terms $\Delta_I(A)e^{\Theta_I}$ in the $\tau$-function \eqref{e:tau}
is \emph{dominant} over the others.
Hence to characterize the contour plot, we must determine which exponential term $\Delta_I(A)e^{\Theta_I}$
dominates at each point in the $xy$-plane. 
Equivalently, we may ask which of the linear terms
\begin{equation}\label{e:linear terms}
\ln(\Delta_J(A)K_J) + \Theta_J(x,y,\mathbf{t}) \quad\text{for}\quad J\in\binom{[M]}{N}
\end{equation}
dominates at each point.

Since the KP equation is nonlinear, arbitrary contour plots can be rather complicated.  To make the problem more tractable, we focus on the asymptotic behavior of these plots for large values of the variables.  We rescale the variables $(x,y,\mathbf{t})$, so that the constant terms $\ln(\Delta_J(A)K_J)$ are negligible. More precisely, we perform a change of variables
$x\to x/\epsilon, y\to y/\epsilon$ and $\mathbf{t}\to \mathbf{t}/\epsilon$ with a small positive number $0<\epsilon\ll 1$.  Then the $\tau$-function becomes
\begin{eqnarray*}
&{}&\tau\left(\frac{x}{\epsilon},\frac{y}{\epsilon},\frac{\mathbf{t}}{\epsilon}\right)=\sum_{I\in\mathcal{M}(A)}\exp\left(\frac{1}{{\epsilon}}{{\Theta}}_I(x,y,\mathbf{t})+\ln(K_I\Delta_J(A))\right)
\end{eqnarray*}
where $\mathcal{M}(A)$ is the \emph{matroid} associated to the matrix $A$, defined by
\[
\mathcal{M}(A):=\left\{I\in\binom{[M]}{N}:\Delta_I(A)>0\right\}.
\]

Then we define a piecewise linear function which is given by the \emph{tropical limit}
\begin{equation}\label{e:tropical-f}
f_{\mathcal{M}(A)}(x,y,\mathbf{t}):=\lim_{\epsilon\to 0}\left(\epsilon\ln\tau\right)=\underset{I\in\mathcal{M}(A)}{\text{max}}\left\{\Theta_I(x,y,\mathbf{t})\right\}.
\end{equation}
That is, $f_{\mathcal{M}(A)}(x,y,\mathbf{t})$ represents a dominant plane $z=\Theta_I(x,y,\mathbf{t})$ in $\R^3$  for fixed $\mathbf{t}$.  We define the \emph{soliton graph} for each $\mathbf{t}$
by
\[
\mathcal{C}_\mathbf{t}(\mathcal{M}(A)):=\{\text{the locus of the $xy$-plane where $f_{\mathcal{M}(A)}(x,y,\mathbf{t})$
is \emph{not} linear}\}.
\]
The soliton graph is hence a collection of bounded and unbounded line segments, which we call \emph{line solitons}.
Note that  each region of the complement of 
$\mathcal{C}_t(\mathcal{M}(A))$ 
is a domain of linearity for $f_{\mathcal{M}(A)}(x,y,\mathbf{t})$,  hence each region is  associated to a {\it dominant plane}  $z=\Theta_I(x,y,\mathbf{t})$ for a certain $I\in\mathcal{M}(A)$. We label this region $\Theta_I$ or simply $I$. 

Suppose a line-soliton separates two regions, labeled $I$ and $J$.  
Then we have
\begin{equation}\label{e:IJindices}
J = I \backslash \{i\} \cup \{j\}\quad\text{for some}\quad i,j\in[M],
\end{equation}
that is, their labels differ only by a single index for a generic choice of the $\k$-parameters,
i.e. $\k_i+\k_j\ne\k_n+\k_m$ if $\{i,j\}\ne\{n,m\}$ (see \cite{CK:09}).
  As in Example \ref{Example:N = 1}, 
we call this segment an $[i,j]$-soliton for $i<j$; if we do not wish to specify that $i < j$, we use the notation $\{i,j\}$-soliton instead.

Our goal is to understand the combinatorial structure of soliton graphs; that is, we want to classify the possible sets of region labels, and the adjacency relations among the regions.  Hence we may forget the original $xy$-coordinates, and represent a soliton graph as an abstract network with labeled faces. Edges represent line solitons, and vertices represent points where multiple solitons meet in a common endpoint.  (If multiple solitons cross at a point which is \emph{not} a common endpoint, we do not consider that a vertex.) We review some facts about the resulting networks, due to \cite{KW:13, kodama} (see also \cite{K:17} for a survey of these results).

Generically, a point where several solitons share an endpoint will have degree three.  So a generic soliton graph is a trivalent network, with regions labeled by elements of $\binom{[M]}{N}$ for some $N < M$. Let $I,J,L \in \binom{[M]}{N}$ be the labels of three regions which meet at a trivalent vertex $\v$ of a soliton graph.
Recall \eqref{e:IJindices}, that is, whenever two regions of the $xy$-plane are separated by a line soliton, their labels differ by a single index.  
Hence there are two possibilities for the labels $I,J$ and $L$:
\begin{enumerate}
\item 
$I=I_0\cup\{i\}, J=I_0\cup\{j\}$ and $L=I_0\cup\{l\}$ for some common $(N-1)$-index set $I_0$
\item 
 $I=K_0\setminus\{i\}, J=K_0\setminus\{j\}$ and $L=K_0\setminus\{l\}$ for some common $(N+1)$-index set $K_0$. 
\end{enumerate}
We color the vertex $\v$ white in the first case, and black in the second.  See Figure \ref{fig:Gr12-3} for an example.

In the previous works \cite{CK:08,CK:09, KW:13, kodama},
it was shown that the KP soliton \eqref{e:u-tau} with
the $\tau$-function \eqref{e:KPtau} consists of $N$ line-solitons 
as $y \gg 0$ and $M-N$ line-solitons as $y \ll 0$. 
Each of those asymptotic solitons is uniquely parametrized by a map 
$\pi$ such that $\pi(i)=j$ if the $[i,j]$-soliton appears
at $y \gg 0$, and $\pi(j)=i$ if the $[i,j]$-soliton appears
at $y \ll 0$. The map $\pi$ is well-defined, and is in fact a fixed-point
free permutation or {\it derangement} of the index set $\{1,\ldots,M\}$.
Moreover, the derangement 
is completely determined by the \emph{matroid} $\mathcal{M}(A)$ of the totally nonnegative matrix $A$, and vice versa (see \cite{K:17} for a survey of these results.)

A totally nonnegative matrix $A$ is \emph{totally positive} if $\mathcal{M}(A) = \binom{[M]}{N}$.  The corresponding derangement is given by $i \mapsto i-N$, where all values are taken modulo $M$. The space of totally positive matrices, modulo row operations, is the \emph{totally positive Grassmannian} $\Gr(N,M)_{>0}$.  
Soliton graphs for $\Gr(N,M)_{>0}$ have nice combinatorial properties, which make them easier to classify.  See Section \ref{subsec: weaksep} for details.
In what follows, we restrict our attention to soliton graphs for $\Gr(N,M)_{>0}$.


\subsection{Duality and soliton subdivisions}
\label{subsec: duality}

In order to study the soliton graphs for $\Gr(N,M)_{>0}$, we first define a bijection,
called the \emph{duality map}, which maps a plane in $\mathbb{R}^3$ to a point in $\mathbb{R}^3$, 
\begin{equation}\label{e:Dmap}
\mu: (p,q, \omega) ~\longleftrightarrow ~ \{(x,y,z): z=px+qy+\omega\},
\end{equation}
where $p=\k_i, q=\k_i^2$ and $\omega$ is some constant (we may take $\omega=\k_i^3t$ for a KP soliton).
The vector $\langle p,q,-1 \rangle$ is the normal vector of the plane,
and the vector $\langle p,q \rangle$ gives the increasing direction of the plane, i.e.
$\nabla z=\langle p,q \rangle$.   See Figure \ref{fig:Duality-line}. Using the map, we can classify the soliton graphs $\mathcal{C}_\mathbf{t}(\mathcal{M}(A))$ via the \emph{triangulations} of a polygon inscribed in a parabola as described below.
\begin{figure}[h]
\begin{center}
\includegraphics[height=3.5cm]{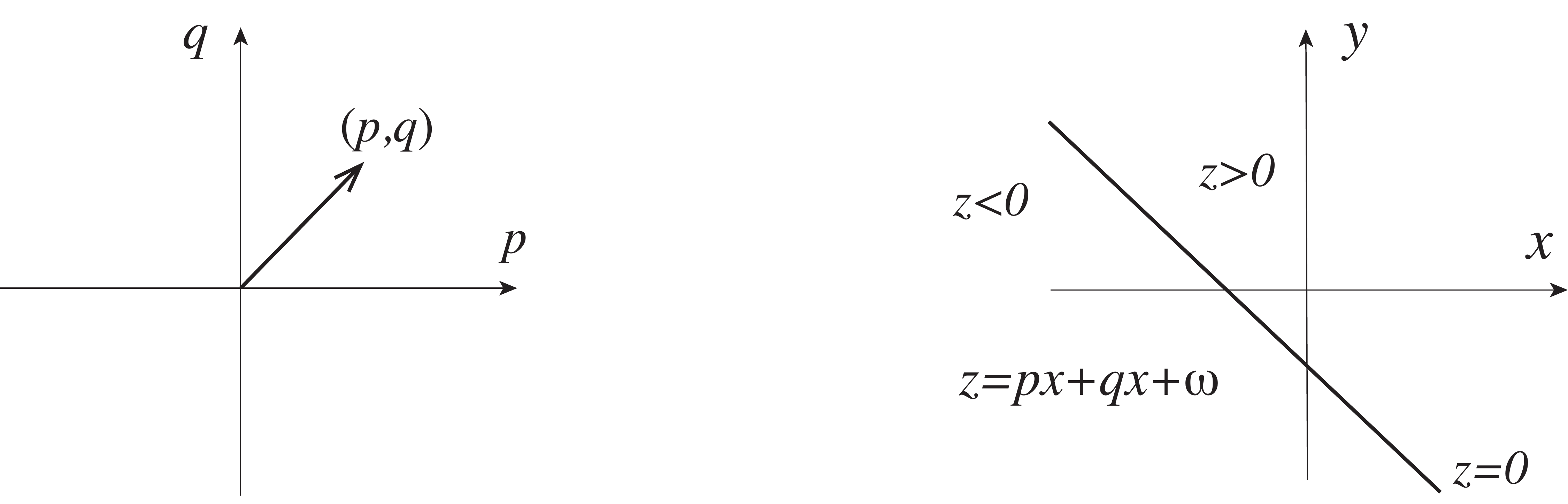}
\end{center}
\caption{Duality map. The vector $\langle p,q,-1 \rangle$ is the normal vector of the plane
$z = px + qy + \omega$ and the vector $\nabla z=\langle p,q \rangle$ gives the increasing direction of $z$.}\label{fig:Duality-line}
\end{figure}

As a simplest example, consider the case with three points $\hat\p_i=(p_i,q_i,\omega_i)\in\R^3$ with
$p_i=\kappa_i, q_i=\kappa_i^2$ and $\omega_i=\kappa_i^3t$ for $i=1,2,3$.  Then we have 
a triangle inscribed in the parabola $q=p^2$ whose vertices are $\{\p_1,\p_2,\p_3\}$,
and each vertex $\p_i=(p_i,q_i)$ has a \emph{weight} $\omega_i$.  Again for simplicity, take all $\omega_i=0$.
Then all the planes 
\[z=\theta_i(x,y)=p_ix+q_iy\]
intersect at the origin, and at each point $(x,y)$,
one of the planes becomes dominant.  Figure \ref{fig:Gr12-3} below shows the \emph{duality} between the triangles in the $pq$-plane and the soliton graphs in the $xy$-plane at $t=0$. The dynamics of the intersection point are linear in time $t$ as given by $\theta_1=\theta_2=\theta_3$.
\begin{figure}[h]
\begin{center}
\includegraphics[height=3.5cm]{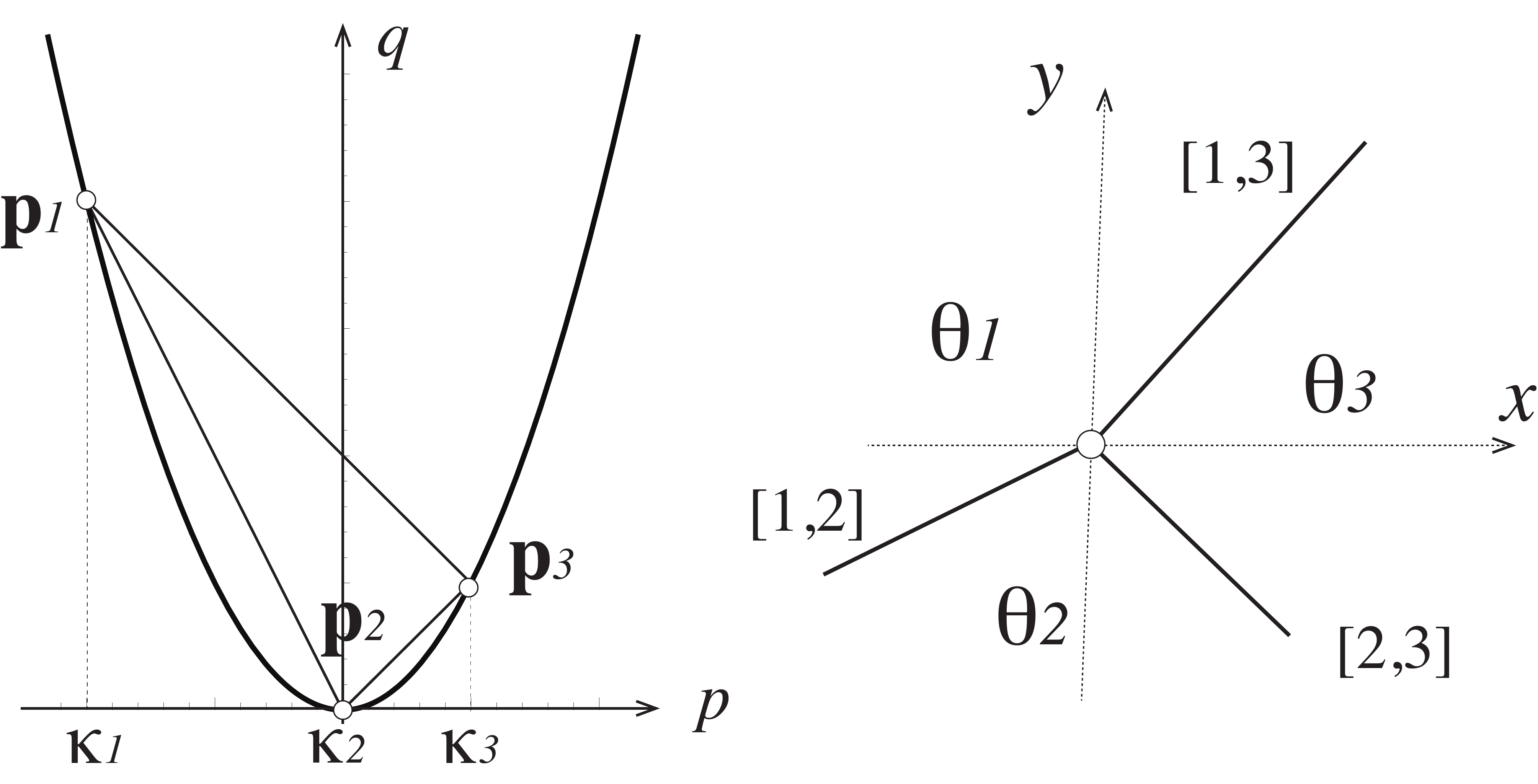}\hskip1.5cm
\includegraphics[height=3.5cm]{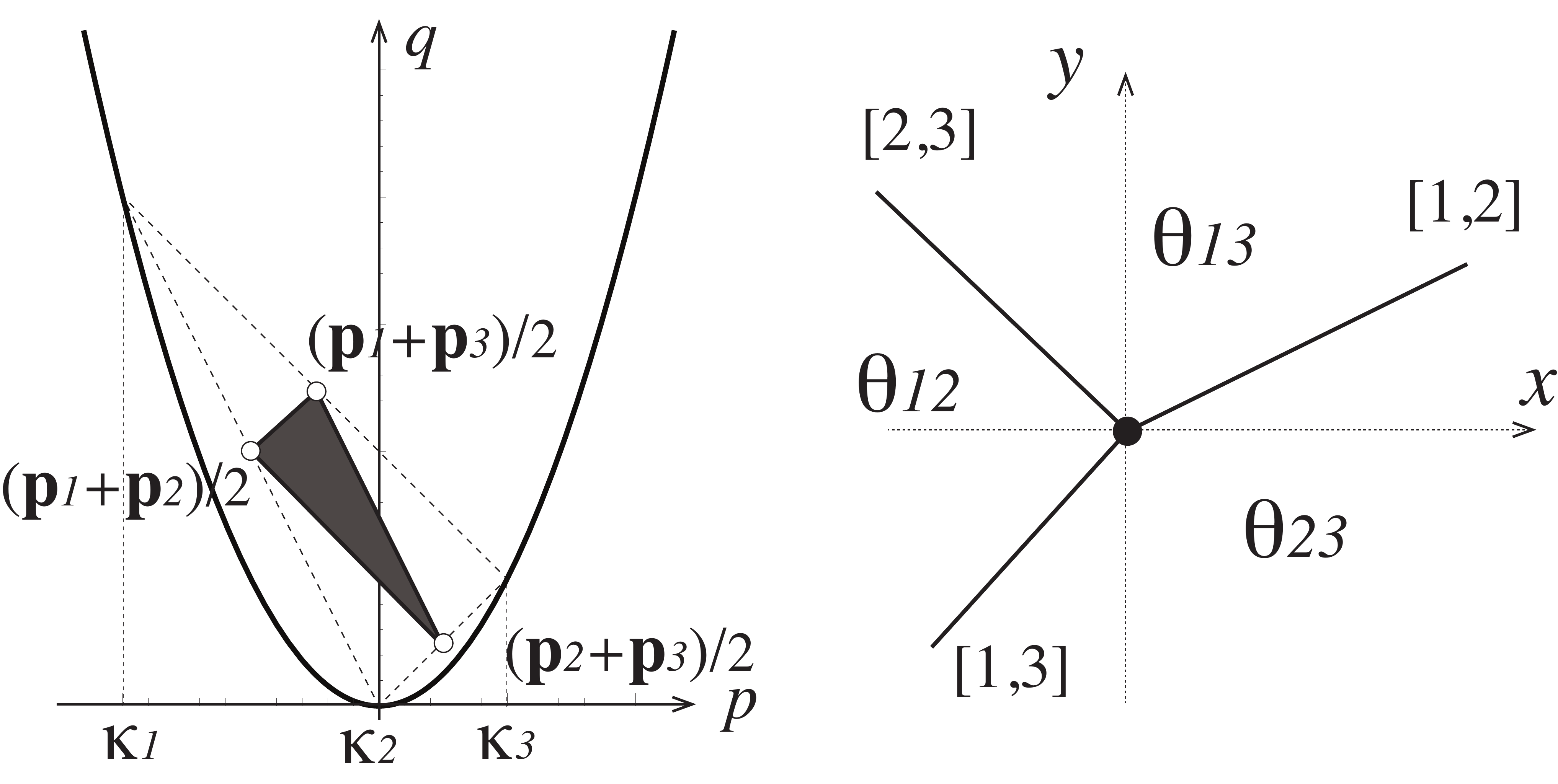}
\end{center}
\caption{Duality between the triangles prescribed in a parabola in the $pq$-plane and the soliton graphs in the $xy$-plane.
Trivalent vertices in the soliton graphs are colored white for
 $\Gr(1,3)_{>0}$ (left) and black for $\Gr(2,3)_{>0}$ (right).}\label{fig:Gr12-3}
\end{figure}
The left two panels show the case for $\text{Gr}(1,3)_{>0}$, that  is, 
 we have $\mathcal{M}(A)=\{1,2,3\}$ and
\[
f_{\mathcal{M}(A)}(x,y,0)=\text{max}\{\theta_i(x,y,0):i=1,2,3\}.
\]
In the soliton graph (the second figure from left), each region is labeled by the dominant plane
$z=\theta_i(x,y,0)$.  Since the trivalent vertex in the soliton graph is colored white, 
we define the triangle inscribed in the parabola as a \emph{white} triangle.  
Notice that each edge of the triangle, say $\p_{[i,j]}=\p_i-\p_j$, is perpendicular
to the line given by $\theta_i=\theta_j$ which corresponds to the $[i,j]$-soliton. 

The right two panels show the case for $\text{Gr}(2,3)_{>0}$, that is, 
we have $\mathcal{M}(A)=\{12,13,23\}$ and
\[
f_{\mathcal{M}(A)}(x,y,0)=\text{max}\{(\theta_i+\theta_j)(x,y,0):1\le i<j\le 3\},
\]
This triangle is defined as a \emph{black} triangle, which is dual to the black vertex in
the soliton graph.  In the figure, the black triangle is the convex hull of the vertices $\{\frac{1}{2}(\p_i+\p_j):1\le i<j\le 3\}$, i.e. the vertices are the mid points of the edges of the white triangle in the left figure.

In general,  the soliton graph for $\text{Gr}(N,M)_{>0}$ has only trivalent vertices
which are colored either white or black \cite{kodama}.  Hence for a generic choice of weights, the 
image of the duality map for the soliton graph is a triangulation with colored triangles.  We will only consider the soliton graphs for $\text{Gr}(N,M)_{>0}$, and their corresponding triangulations.

In the case of Gr$(2,4)_{>0}$, we have
\[
f_{\mathcal{M}(A)}(x,y,t)=\text{max}\{(\theta_i+\theta_j)(x,y,t):1\le i<j\le 4\},
\]
Figure \ref{fig:Gr24} illustrates the soliton graphs for $t<0$ (left),  and  for $t>0$ (right).  
\begin{figure}[h]
\begin{center}
\includegraphics[height=3.5cm]{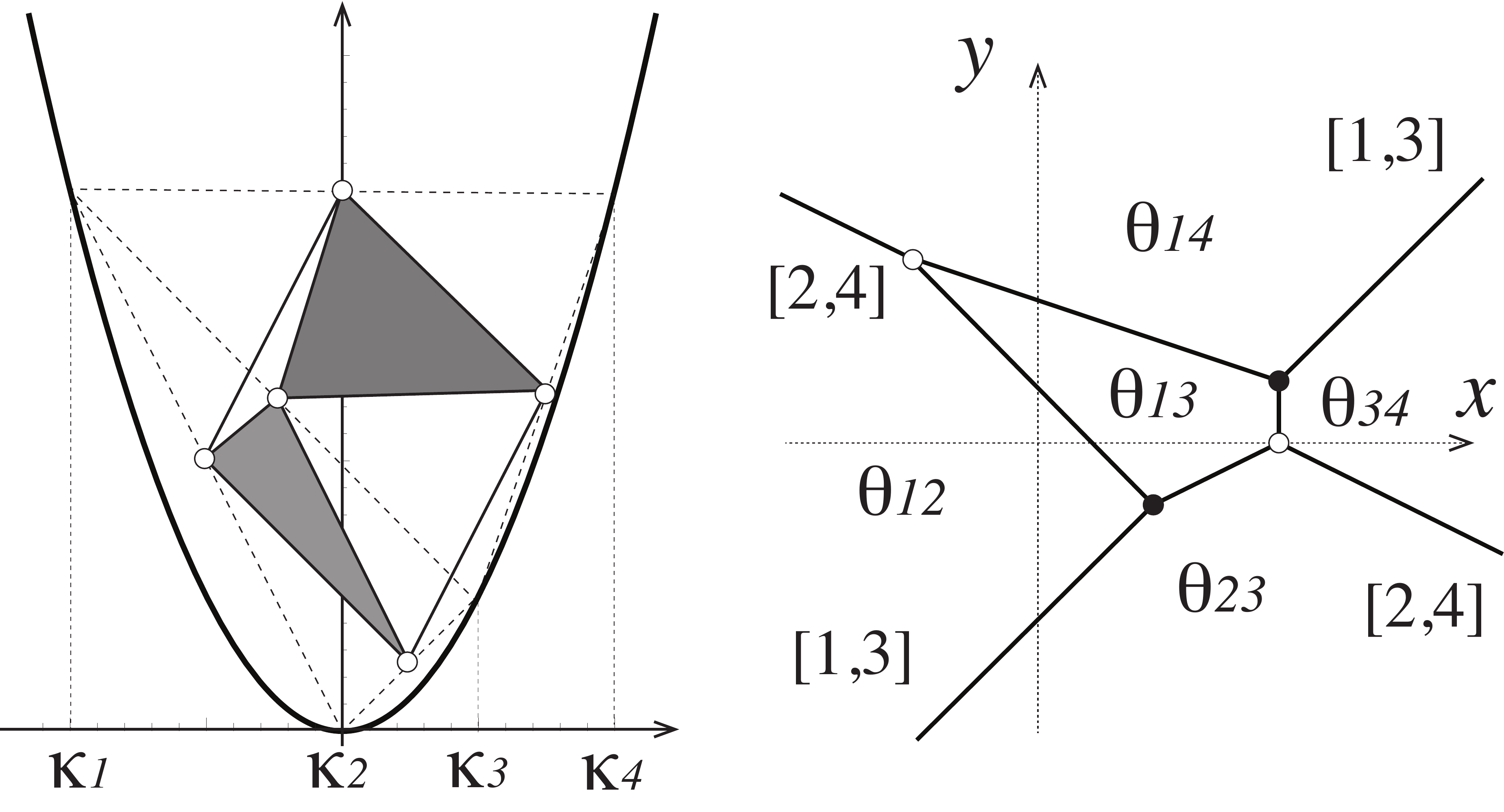}\hskip1.5cm
\includegraphics[height=3.5cm]{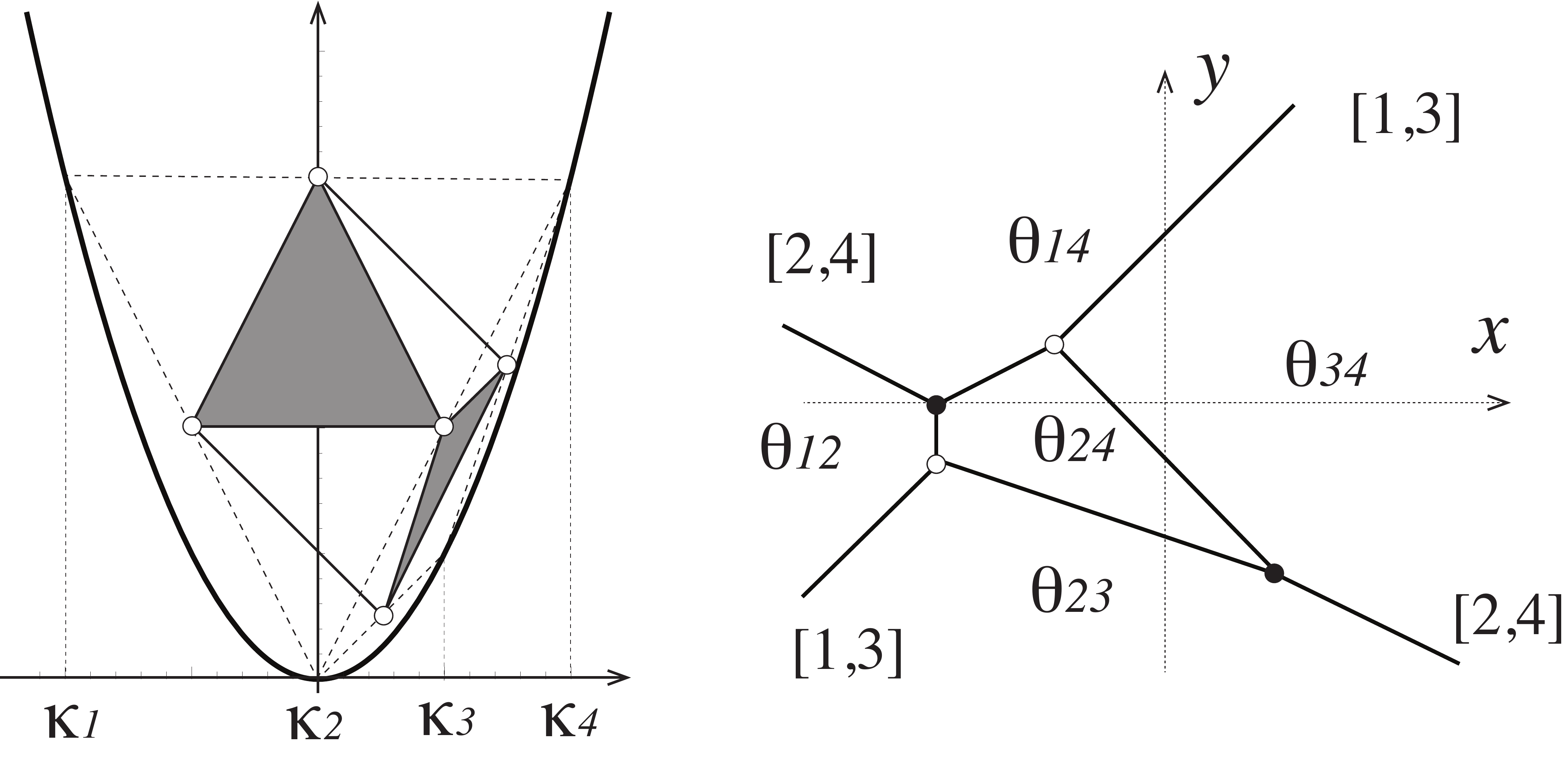}
\end{center}
\caption{Triangulations and the soliton graphs for $\Gr(2,4)$: Black-white flip.
The left two figures show the triangulation of the point set $\frac{1}{2}\{\p_{12},\p_{23},\p_{34},\p_{14},\p_{13},\p_{24}\}$ (each point is shown as an open circle) and the corresponding soliton graph for $t<0$.  The right two figures are for $t>0$.
The $\kappa$-parameters are $(-2,0,1,2)$.}\label{fig:Gr24}
\end{figure}
In the figures, the change of the graphs can be considered as a flip in the triangulation of the quadrilateral given by
the convex hull of the set of six points $\{\p_i+\p_j:  1\le i<j \le 4\}$.
The flip corresponds to the \emph{mutation} in the cluster algebra structure on $\Gr(N,M)$ \cite{fomin, FZ:03, scott, KW:11, kodama}, and  we call  the flip ``black-white flip'' (i.e. the colors of the vertices exchange).
Note that the quadrilateral (parallelogram) in the figures are given by the convex hull of $\{\frac{1}{2}(\p_i+\p_j):1\le i<j\le 4\}$.


\subsubsection{Definitions and notation}
We now give some definitions and notations that we use in the rest of this paper. 
Let  $I=i_1\cdots i_N$ denote the $N$-element subset $I=\{i_1,\ldots,i_N\}\in\binom{[M]}{N}$, and $Ii_{N+1}$ denote the $(N+1)$-element subset $I\cup\{i_{N+1}\}$.  Also let $I\backslash i_k$ denote the $(N-1)$-element subset $I\setminus\{i_k\}$ for $k\in[N]$.

For $A \in \Gr(N,M)_{>0}$, we have $\mathcal{M}(A) = {[M] \choose N}$.  We denote the corresponding point 
configuration by
\[
\mathcal{A}_{N,M}:=\left\{\p_I=(p_I,q_I):  I\in \binom{[M]}{N}\right\}, 
\]
where $p_I=\sum_{i\in I}\k_i, q_I=\sum_{i\in I}\k^2_i$ with the order $\k_1<\k_2<\cdots<\k_M$.
Also note that  the convex hull  
\[{\sf P}_{N,M} := {\sf P}_{N,M}^0 := \text{conv} (\mathcal{A}_{N,M})\]
is an $M$-gon.  This follows by considering the behavior of KP solitons for $|y| \gg 0$ and applying the duality map (see \cite{CK:09, kodama} for the asymptotic behavior of the KP solitons).

Given a weight vector $\omega=(\omega_1,\cdots,\omega_M)$, we assign each point $\p_i$ a weight $\omega_i$, and write $\hat \p_i= (\p_i,\omega_i)$.  For $I\in\binom{[M]}{N}$,
we have the weighted point $\hat\p_I=(\p_I,\omega_I)$ where $\p_I = \sum_{i \in I} \p_i$ and $\omega_I=\sum_{k \in I}\omega_{k}$. Then we consider the weighted (or lifted) point configuration
\[
\mathcal{A}^\omega_{N,M}:=\left\{\hat\p_I=(p_I,q_I,\omega_I)\in \R^3:I\in \binom{[M]}{N}\right\},
\]
and the convex hull of the lifted point configuration
\[
{\sf P}^\omega_{N,M} = \text{conv} (\mathcal{A}^\omega_{N,M}),
\]
which is a three-dimensional convex polytope.  Note here that ${\sf P}^0_{N,M}$ is an $M$-gon in the $pq$-plane. The  vertices of the $M$-gon are given by
\[
\left\{\p_{I_j}: I_j=\{j,j+1,\ldots,j+N-1\},~j=1,\ldots,M~(\text{cyclic order})\right\}
\]
For example, in the case of Gr$(2,6)_{>0}$, the convex hull of 15 points $\{\p_{i,j}=\p_i+\p_j:1\le i<j\le 6\}$ is a \emph{hexagon} with the vertices 
\[
\{\p_{1,2}, ~\p_{2,3},~\p_{3,4},~\p_{4,5},~\p_{5,6},~\p_{1,6}\}.
\]
Note here that all other $9$ points $\p_{i,j}$ with $|i-j|>1$ (cyclic sense) are \emph{inner} points of the hexagon.  

We also define the $m$-faces of the polygon ${\sf P}^{\omega}_{N,M}$ with $m$ being the dimension of the face, and in our case, $m=0, 1$ or $2$.
\begin{definition}\index{soliton subdivision}
A nonempty set ${\sf S}\subset {\sf P}^{\omega}_{N,M}$ is an \emph{upper $m$-face} of the convex polytope ${\sf P}^{\omega}_{N,M}$, if there exists a plane $\mathcal{P}:=\{z=ax+by+c\}$
such that 
\begin{itemize}
\item[(a)] ${\sf S}={\sf P}^\omega_{N,M}\cap \mathcal{P}$ with $\text{dim}\,{\sf S}=m$ and
\item[(b)] any point in the region $z > ax+by+c$ has no intersection with ${\sf P}^{\omega}_{N,M}$.
\end{itemize}
Similarly, a nonempty set ${\sf S}\subset {\sf P}^{\omega}_{N,M}$ is \emph{a lower $m$-face} of ${\sf P}^{\omega}_{N,M}$, if there exists a plane $\mathcal{P}:=\{z=ax+by+c\}$
with property (a), and the region $z < ax+by+c$ has no intersection with  ${\sf P}^{\omega}_{N,M}$.
\end{definition}

 Then projecting the \emph{upper hull}, the collections of all upper faces,
of ${\sf P}^{\omega}_{N,M}$ back on to the $pq$-plane
induces a \emph{regular subdivision} of the polygon ${\sf P}^0_{N,M}$ in $\R^2$.
Here the notions of \emph{subdivision} and \emph{regular} (or coherent) are defined in general as follows (see e.g. \cite{DRS:10, T:06}):
\begin{definition}\label{d:subdivision}
A set $\Q$ is a subdivision of the $M$-gon ${\sf P}^0$, if there are sets of indices
$\{\sigma_1,\ldots,\sigma_m\}$ with $\sigma_i\subset \binom{[M]}{N}$ such that
${\sf P}_{\sigma_i}:={\rm conv}\{\p_j: j\in\sigma_i\}$ satisfy
\begin{itemize}
\item[(i)] ${\sf P}_{\sigma_i}$ is a $k$-gon with $k\ge 3$,
\item[(ii)] $\Q=\bigcup_{i=1}^m{\sf P}_{\sigma_i}$,
\item[(iii)] ${\sf P}_{\sigma_i}\cap{\sf P}_{\sigma_j}$ is either empty or a common edge of those
polygons.
\end{itemize}
In particular, if all ${\sf P}_{\sigma_i}$ are triangles, then the subdivision is called a \emph{triangulation}.
We also say that a subdivision ${\sf Q}$ is \emph{regular}, if it is obtained by the projection of the upper hull of a
polytope ${\sf P}^\omega={\rm conv}\{(\p_I,\omega_I)\in\R^3\}$ for some weight $\omega$.
\end{definition}

We then define a \emph{soliton subdivision} to be a regular subdivision, denoted by $\Q_{N,M}(\omega)$, which is given by
the projection of the upper hull of ${\sf P}^\omega_{N,M}$, 
where each polygon $\P_{\sigma_i}$ in the subdivision is the projection of an upper face of $\P^\omega_{N,M}$.  We sometimes refer to 
``a regular subdivision (or regular triangulation) $\Q_{N,M}(\omega(\t))$ of the polygon $\P_{N,M}$ associated with the weight function $\omega(\t)$'' as simply  ``subdivision (or triangulation) $\Q_{N,M}$ of $\A_{N,M}^\omega$''.

For a polygon in $\Q_{N,M}(\omega)$,  its vertices are given by the set $\{\p_{I_1}, \cdots, \p_{I_k} \}$ when the polygon is a $k$-gon.
Each vertex $\p_I$ can be represented by its index set $I\in\binom{[M]}{N}$, and we may denote the polygon $\text{conv}\{\p_{I_1}, \cdots, \p_{I_k} \}$ by $\{\p_{I_1}, \cdots, \p_{I_k} \}$, or simply its index set $\{I_1, I_2, \cdots, I_k\}$ for short. For the corresponding face of the polytope in ${\sf P}^\omega_{N,M}$, we sometimes denote it as $\{\hat\p_{I_1}, \hat\p_{I_2}, \cdots, \hat\p_{I_k}\}$ or
$\{\hat{I_1}, \hat{I_2}, \cdots, \hat{I_N}\}$. 

We also define the following notions for the polygons appearing in the subdivision $\Q_{N,M}(\omega)$, which is the generalization of \emph{white}-\emph{black} triangles:
\begin{definition}\label{def:black-white}
For a convex polygon ($k$-gon) in the subdivision $\Q_{N,M}(\omega)$, we say that 
\begin{itemize}
\item[{\rm (a)}] the polygon is \emph{white} if the vertices of the polygon  are expressed by
 \[
 \{Ii_1,Ii_2,\cdots, Ii_k\}\qquad\text{for some}\quad I \in {[M] \choose N-1},
 \]
 and 
 \item[{\rm (b)}] the polygon is \emph{black} if the vertices are expressed by 
 \[
 \{J\back i_1,J\back i_2,\cdots,J \back i_k \}\qquad\text{for some}\quad J \in { [M] \choose N+1}. 
 \]
 \end{itemize}
 Since the index sets of two adjacent points differ only by a single index \cite{CK:09},  there are only these types of polygons in the subdivisions (recall that each edge in the subdivision corresponds to a line-soliton).
 \end{definition} 

We are interested in using soliton triangulations to study the \emph{combinatorial} structure of soliton graphs, forgetting the $xy$ coordinates.  Hence, we may forget the $pq$-coordinates of a soliton triangulation, and remember only the adjacency relations between the tiles.  For convenience, we often draw the vertices $\{\pp_{I_1},\ldots,\pp_{I_M}\}$ of the convex $M$-gon ${\sf P}_{N,M}$ as points on a circle, rather than on a parabola.

\subsection{Plabic graph, weakly separated collections and realizability}
\label{subsec: weaksep}

The main objective of this paper is to classify soliton graphs for $\Gr(N,M)_{>0}$. By results of \cite{kodama}, these graphs are planar and trivalent.
For convenience, we may embed a soliton graph in a bounding disk whose interior contains all vertices of the graph.  We place a \emph{boundary vertex} at the point where each $\{i,\pi(i)\}$-soliton intersects the disk, and label the boundary vertex $\pi(i)$. We forget the metric structure on the graph, and maintain only the combinatorial structure. 
As in the previous section, we color each internal vertex black or white, depending on the labels of the surrounding faces.  See figure \ref{plabic} for an example.

\begin{figure}[ht]
\centering
\includegraphics[width=5cm]{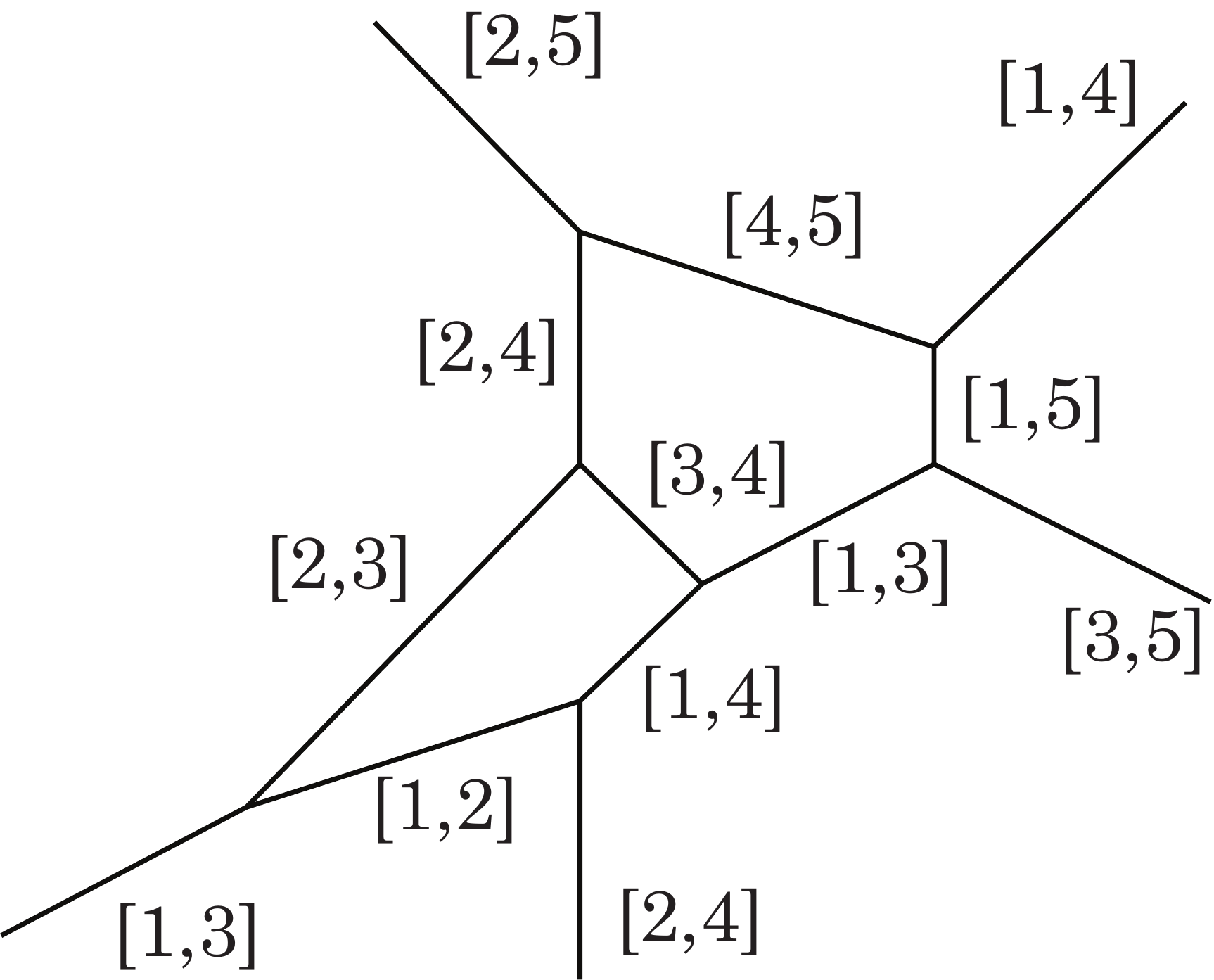}\hskip1cm
\includegraphics[trim = {2.5in 8.3in 2.3in 1in}, clip]{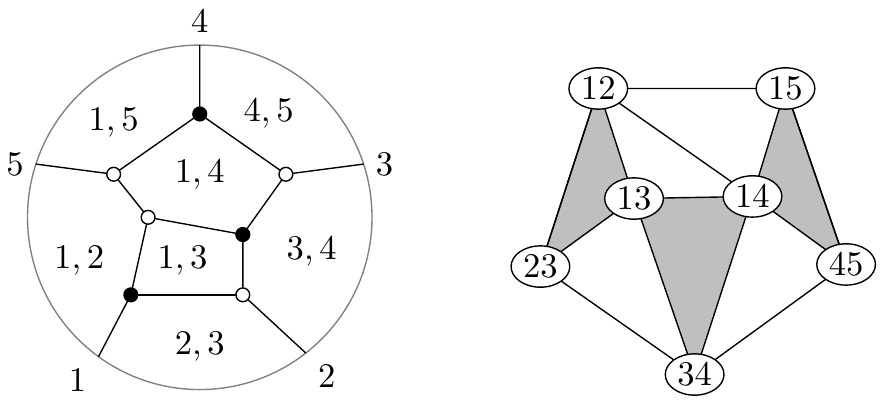}
\caption{A soliton graph for $\Gr(2,5)_{>0}$, and corresponding plabic graph and triangulation.}
\label{plabic}
\end{figure} 

With these conventions, every soliton graph for $\Gr(N,M)_{>0}$ is combinatorially (or topologically) equivalent to a \emph{reduced plabic graph} \cite{kodama}. First introduced by Postnikov, reduced plabic graphs play a key role in the combinatorial theory of $\Gr(M,N)_{>0}$ \cite{postnikov}.  We give a precise characterization of plabic graphs in Section \ref{subsec: combo background}.  For now, it suffices to remark that a plabic graph is a planar, bicolored network which satisfies some technical conditions; and whose faces have a natural labeling by elements of $\binom{[M]}{N}$ for some $N < M$.

Each plabic graph has an associated permutation $\pi$.  For soliton graphs, $\pi$ is the derangement defined by the soliton asymptotics \cite{CK:09, kodama}.  We say $G$ is a plabic graph for the \emph{totally positive Grassmannian} $\Gr(N,M)_{>0}$ if $\pi$ is the permutation corresponding to $\Gr(N,M)_{>0}$; that is, if $\pi$ is defined by $i \mapsto i - N$ with indices taken modulo $M$. Plabic graphs for $\Gr(M,N)_{>0}$ have an easy classification in terms of \emph{weakly separated collections}, as explained below.
    
For $G$ a reduced plabic graph, contracting an edge whose endpoints are vertices of the same color gives a reduced plabic graph $G'$ with the same face labels, and this operation is reversible.  We say that two plabic graphs are \emph{contraction equivalent} if we can transform one into another by contracting and un-contracting unicolored edges. Reduced plabic graphs, up to contraction equivalence, are determined uniquely by their face labels.  The possible collections of face labels can be easily classified, using the notion of \emph{weak separation} defined in \cite{leclerc}.

\begin{definition}
For $I, J \subseteq \binom{[M]}{N}$, we say $I$ and $J$ are \emph{weakly separated} if there \emph{do not} exist $a,b \in I \backslash J$ and $c,d \in J \backslash I$ such that if $M$ points $1, 2, \ldots, M$ are arranged counter-clockwise around a circle, the points $a, c, b$ and $d$ occur in cyclic order.  
\end{definition}
 
\begin{definition}
A \emph{weakly separated collection} is a collection of elements of $\binom{[M]}{N}$ whose members are pairwise weakly separated.  A weakly separated collection is \emph{maximal} if it is maximal by inclusion.
\end{definition}

\begin{theorem}\cite{oh}
A collection of elements of $\binom{[M]}{N}$ is the set of face labels of a plabic graph for $\Gr(N,M)_{>0}$ if and only it is a maximal weakly separated collection.
\end{theorem}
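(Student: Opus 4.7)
The plan is to prove both directions using Postnikov's moves on plabic graphs (square moves and contraction/expansion of unicolored edges) in tandem with the parallel notion of \emph{mutation} for maximal weakly separated collections, in the form $\mathcal{C} \leftrightarrow (\mathcal{C} \setminus \{Iac\}) \cup \{Ibd\}$ where $I \in \binom{[M]}{N-2}$ and $a<b<c<d$ are four indices with $\{Iab, Ibc, Icd, Iad\} \subset \mathcal{C}$. Two facts orchestrate everything: (i) Postnikov's theorem that any two reduced plabic graphs for the same decorated permutation are related by a finite sequence of square moves and unicolored contractions, and (ii) the Oh--Postnikov--Speyer \emph{purity} theorem, that any two maximal weakly separated collections in $\binom{[M]}{N}$ are related by a finite sequence of mutations, and all have the same cardinality $N(M-N)+1$.

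For the forward direction ($\Rightarrow$), first I would exhibit one specific reduced plabic graph $G_0$ for $\Gr(N,M)_{>0}$ whose face labels are manifestly pairwise weakly separated. A convenient choice is the plabic graph coming from the Le-diagram of the full $N \times (M-N)$ rectangle, whose face labels are exactly the $N$-subsets $\{i_1<\cdots<i_N\}$ indexing the ``staircase'' sub-rectangles of the Young diagram; weak separation between any two such labels follows by inspection of their cyclic arrangement. Next I would verify that a square move preserves the weakly-separated property of the face-label collection: the two admissible labels at the square's central face are $Iac$ and $Ibd$, and a direct case analysis against an arbitrary candidate $J$ shows that $J$ is weakly separated from $Iac$ (together with the four outer labels) if and only if it is weakly separated from $Ibd$. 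Unicolored contractions and expansions leave the face labels unchanged, so combining these three observations with Postnikov's connectivity theorem shows that every reduced plabic graph for $\Gr(N,M)_{>0}$ has weakly separated face labels. Finally, since a reduced plabic graph for $\Gr(N,M)_{>0}$ has exactly $N(M-N)+1$ faces, which matches the common cardinality of a maximal weakly separated collection, the collection is maximal by inclusion.

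For the reverse direction ($\Leftarrow$), given a maximal weakly separated collection $\mathcal{C} \subset \binom{[M]}{N}$, I would apply the purity theorem to obtain a sequence of mutations $\mathcal{C}_0 \to \mathcal{C}_1 \to \cdots \to \mathcal{C}_r = \mathcal{C}$, where $\mathcal{C}_0$ is the face-label collection of the base graph $G_0$ constructed above. Each combinatorial mutation corresponds to a square in the current plabic graph whose four surrounding faces carry the labels $Iab, Ibc, Icd, Iad$ and whose center is labeled $Iac$ (or $Ibd$); performing a square move at that face produces the next plabic graph in a sequence $G_0, G_1, \dots, G_r$, and an induction on $i$ shows that the face labels of $G_i$ are exactly $\mathcal{C}_i$. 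Hence $G_r$ is a reduced plabic graph for $\Gr(N,M)_{>0}$ whose face-label set equals $\mathcal{C}$.

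The main obstacle is invoking the purity theorem: mutation connectivity of maximal weakly separated collections, together with the equality of their cardinalities, is a delicate combinatorial statement that is not formal from the definitions. Granted purity, the rest of the argument reduces to the bookkeeping verification that a combinatorial mutation at the level of $N$-subsets is realized by a square move at the level of plabic graphs, together with the elementary case check that square moves preserve weak separation. Dualizing through the results of Section~\ref{subsec: soliton graphs}, the same statement then classifies soliton graphs for $\Gr(N,M)_{>0}$ up to contraction equivalence.
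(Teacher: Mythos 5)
The paper does not actually prove this statement; it is quoted verbatim from \cite{oh}, so there is no in-paper argument to compare against. Judged on its own terms, your forward direction is essentially sound and is the standard one: a base graph (e.g.\ the Le-diagram graph of the full rectangle) with visibly weakly separated face labels, invariance of weak separation under square moves, Postnikov's move-connectivity of reduced plabic graphs with a fixed trip permutation, and Scott's upper bound $N(M-N)+1$ on the size of a weakly separated collection in $\binom{[M]}{N}$ together show that the face labels of any reduced plabic graph for $\Gr(N,M)_{>0}$ form a maximal weakly separated collection. (You also implicitly use that distinct faces of a reduced plabic graph carry distinct labels, which is itself nontrivial but available from \cite{postnikov}.)

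The genuine gap is in the reverse direction, where the argument is circular. The ``purity theorem'' you invoke---in the strong form you need, namely that any two maximal weakly separated collections in $\binom{[M]}{N}$ are connected by a sequence of mutations---is not an independent input: in \cite{oh} both purity and mutation-connectedness are \emph{corollaries} of the very theorem you are proving, deduced from the plabic-graph realization together with move-connectivity. What was known beforehand (Leclerc--Zelevinsky, Scott) is the cardinality upper bound and the purity \emph{conjecture}; the hard content of \cite{oh} is precisely the construction showing that an arbitrary maximal weakly separated collection is the face-label set of some plabic graph, carried out via plabic tilings and a delicate decomposition and induction on ``domains'' rather than via mutation connectivity. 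So ``granted purity, the rest is bookkeeping'' concedes exactly the step that carries all the difficulty. Even modulo that, your claim that each combinatorial mutation is realized by a square move requires showing that the face labeled $Iac$ can be brought, by unicolored contractions and expansions, into a square whose four neighboring faces are labeled $Iab$, $Ibc$, $Icd$, $Iad$; this is true but is itself a lemma about the local structure of plabic tilings in \cite{oh}, not a formality. To repair the proof you would either need an independent proof of mutation-connectedness of maximal weakly separated collections (e.g.\ along the lines of Danilov--Karzanov--Koshevoy's generalized tilings), or you should follow the direct realization argument of \cite{oh}.
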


In \cite{oh}, the authors introduce planar diagrams called \emph{plabic tilings}, which correspond to weakly separated collections. We refer to \cite[Section 9]{oh} for the precise definition.  For our purposes, it suffices to describe \emph{triangulated} plabic tilings as the \emph{duals} of trivalent reduced plabic graphs.  That is, we can obtain a triangulated plabic tiling from a plabic graph by applying a purely combinatorial analogue of the duality map from Section \ref{subsec: duality}.  Deleting edges that separate triangles of the same color, we obtain a subdivision of the $M$-gon into black and white polygons, which we call a \emph{plabic tiling}.  Since soliton graphs for $\Gr(N,M)_{>0}$ are plabic graphs \cite{kodama}, soliton triangulations are  triangulated plabic tilings
(via the duality map).  

\begin{definition}
A plabic graph for $\Gr(N,M)_{>0}$ is \emph{realizable} if it is a soliton graph, up to contraction equivalence; a triangulated plabic tiling is realizable if it is a soliton triangulation.  A weakly separated collection for $\Gr(N,M)_{>0}$ is realizable if it is the set of face labels of a soliton graph, or equivalently, the set of vertex labels of a soliton subdivision.
\end{definition}

Kodama and Williams showed that every plabic graph for $\Gr(2,M)_{>0}$ is realizable, up to contraction equivalence \cite[Theorem 12.1]{kodama}.  In the language of tilings, their result says that every weakly separated collection for $\Gr(2,M)_{>0}$ is realizable.  We recover this result below, as a consequence of Algorithm \ref{InductiveAlgorithm}.  See Section \ref{subsec: zonotope} for details. 

In his PhD thesis, Huang showed that every weakly separated collection (or plabic tiling) for $\Gr(3,6)_{>0}$ is realizable \cite{H:15}. However, some collections are only realizable for certain choices of $\kappa$-parameters.  Huang then conjectured that \emph{every} weakly separated collection for any $\Gr(N,M)_{>0}$ is realizable for some choice of parameters, a conjecture we disprove in Section \ref{sec: non-realizable}.

\subsection{Summary of results}
\label{subsec: sum}

The structure of the rest of the paper is as follows.
In Section \ref{sec: constructing}, we describe an inductive algorithm from \cite{H:15} for constructing soliton subdivisions, which will be used in Sections \ref{sec: polyhedral} and \ref{sec: realizability}.  As a consequence, 
in Section \ref{subsec: zonotope}, we recover Kodama and Williams' classification of soliton graphs for $\Gr(2,M)_{>0}$ \cite{kodama}, by proving that every weakly separated collection for $\Gr(2,M)_{>0}$ is realizable.

In Section \ref{sec: polyhedral}, we construct a \emph{polyhedral fan} in the space of \emph{multi-time parameters} of the KP hierarchy, which can be used to check whether a given subdivision comes from a soliton graph.  In Section \ref{sec: Gr36}, we use the polyhedral fan to classify soliton graphs for $\Gr(3,6)_{>0}$, by showing that every possible soliton subdivision comes from a soliton graph.  In addition, we specify the subdivisions which are realizable for each choice of $\kappa$-parameters in the KP soliton
(Theorem \ref{realizable-checking-NM}). The main results of Sections \ref{sec: polyhedral} and \ref{sec: Gr36} first appeared in \cite{H:15}, but are presented here in greater detail.

In Section \ref{sec: realizability}, we show that every possible soliton subdivision for $\Gr(3,7)_{>0}$ or $\Gr(3,8)_{>0}$ occurs for \emph{some} choice of $\kappa$-parameters. For $\Gr(3,7)_{>0}$, we give a more precise classification in terms of the $\kappa$-parameters, just as we did for $\Gr(3,6)_{>0}$
(Theorem \ref{classify37}); we do not yet have a classification for each choice of the $\kappa$-parameters. 

Finally, in Section \ref{sec: non-realizable}, we give a subdivision that does \emph{not} come from a soliton graph, disproving a conjecture made in \cite{H:15}.
More precisely, we prove Theorem \ref{NotRealizable}, which states for some $\Gr(N,M)_{>0}$, there is a weakly separated collection which is not realizable for \emph{any} choice of the parameters.

\section{Constructing soliton graphs}
\label{sec: constructing}

In this section, we describe an explicit algorithm to construct soliton triangulations. We give the algorithm for $\Gr(1,M)_{>0}$ in Section \ref{subsec: A1M}, and for general $\Gr(N,M)_{>0}$ in Section \ref{subsec:induction}.  In Section \ref{subsec: zonotope} we present some useful consequences of the algorithm.

\subsection{Triangulations of the point configuration $\A_{1,M}^\omega$}
\label{subsec: A1M}
Let us start with the subdivisions ${\sf Q}_{1,4}^{\omega}$.
The polytope $\P_{1,4}^\omega$ is a tetrahedron, and the subdivision $\Q_{1,4}(\omega)$
given by the projection of $\P_{1,4}^\omega$  depends on the following determinant,
\begin{equation}\label{e:4-gonD}
D_{1,2,3,4} =
\begin{vmatrix}
1  &  p_{1}  & q_{1}  & \omega_{1} \\
1  &  p_{2}  & q_{2}  & \omega_{2} \\
1  &  p_{3}  & q_{3}  & \omega_{3} \\
1  &  p_{4}  & q_{4}  & \omega_{4}
\end{vmatrix}\qquad \text{with}\quad (p_i,q_i)=(\k_i,\k^2_i).
\end{equation}
That is, we have the following lemma, which we call the \emph{quadrilateral checking lemma}:
\begin{lemma}\label{lem:QCL}\index{quadrilateral checking lemma}
The subdivision $\Q_{1,4}(\omega)$ has the diagonal $\{1,3\}$ if the determinant $D_{1,2,3,4} < 0$; and $\Q_{1,4}(\omega)$ has the diagonal $\{2,4\}$ if $D_{1,2,3,4} > 0$.
\end{lemma}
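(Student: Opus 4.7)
The plan is to recognize $D_{1,2,3,4}$ as (six times) the signed volume of the tetrahedron $\P^\omega_{1,4}$ and to read off which diagonal appears in the upper hull from the sign of that volume. First I would note that the four projected points $\p_i = (\kappa_i,\kappa_i^2)$ lie on the parabola $q=p^2$ in the cyclic order $\p_1,\p_2,\p_3,\p_4$ (because $\kappa_1<\kappa_2<\kappa_3<\kappa_4$), so the projection of $\P^\omega_{1,4}$ is a convex quadrilateral and every triangulation of it has exactly one of the two possible diagonals $\{1,3\}$ or $\{2,4\}$. Because $\Q_{1,4}(\omega)$ is obtained by projecting the upper hull of the tetrahedron, the diagonal is $\{1,3\}$ precisely when both $\{\hat\p_1,\hat\p_2,\hat\p_3\}$ and $\{\hat\p_1,\hat\p_3,\hat\p_4\}$ are upper faces, equivalently, precisely when $\hat\p_4$ lies below the plane spanned by $\hat\p_1,\hat\p_2,\hat\p_3$ and $\hat\p_2$ lies below the plane spanned by $\hat\p_1,\hat\p_3,\hat\p_4$.

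Next I would translate each of these ``below'' conditions into a sign condition on $D_{1,2,3,4}$ by cofactor expansion along the fourth row. The relevant $(4,4)$-cofactor is the Vandermonde
\[
\det\begin{pmatrix} 1 & \kappa_1 & \kappa_1^2 \\ 1 & \kappa_2 & \kappa_2^2 \\ 1 & \kappa_3 & \kappa_3^2 \end{pmatrix} = (\kappa_2-\kappa_1)(\kappa_3-\kappa_1)(\kappa_3-\kappa_2),
\]
which is strictly positive under the order \eqref{e:order}; geometrically this says the projected triangle $\p_1\p_2\p_3$ is positively oriented. Writing $z=f(x,y)$ for the affine function whose graph is the plane through $\hat\p_1,\hat\p_2,\hat\p_3$, the expansion produces $D_{1,2,3,4} = (\omega_4 - f(p_4,q_4))\cdot (\kappa_2-\kappa_1)(\kappa_3-\kappa_1)(\kappa_3-\kappa_2)$, so $D_{1,2,3,4}>0$ is equivalent to $\hat\p_4$ being strictly above the plane through $\hat\p_1,\hat\p_2,\hat\p_3$. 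Permuting the rows of $D_{1,2,3,4}$ into the order $1,3,4,2$ is an even permutation, so the same determinant also equals the analogous expression for the triple $1,3,4$; the Vandermonde $(\kappa_3-\kappa_1)(\kappa_4-\kappa_1)(\kappa_4-\kappa_3)$ is again positive, so $D_{1,2,3,4}>0$ is simultaneously equivalent to $\hat\p_2$ being strictly above the plane through $\hat\p_1,\hat\p_3,\hat\p_4$.

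Combining these gives the lemma immediately. If $D_{1,2,3,4}>0$, then $\hat\p_4$ is above the plane of $\{\hat\p_1,\hat\p_2,\hat\p_3\}$, so the triangle $\{1,2,3\}$ cannot be an upper face; consequently the upper hull must be $\{1,2,4\}\cup\{2,3,4\}$ and the diagonal is $\{2,4\}$. If $D_{1,2,3,4}<0$, then the two ``below'' conditions both hold, so $\{1,2,3\}$ and $\{1,3,4\}$ are both upper faces and the diagonal is $\{1,3\}$. The only real obstacle here is sign-bookkeeping: one must check that the Vandermonde cofactor is positive (which the ordering $\kappa_1<\cdots<\kappa_4$ guarantees) and that the row permutation relating the two ``below'' statements is even, so that a single sign of $D_{1,2,3,4}$ controls both conditions simultaneously.
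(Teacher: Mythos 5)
Your proof is correct and rests on the same idea as the paper's: interpreting $D_{1,2,3,4}$ as the signed volume of the lifted tetrahedron and reading off the upper hull from its sign (the paper phrases this via the triple scalar product $D_{1,2,3,4}=-[(\hat\p_2-\hat\p_1)\times(\hat\p_4-\hat\p_1)]\cdot(\hat\p_3-\hat\p_1)$ comparing the heights of the two lifted diagonals, while you expand against the $(4,4)$-cofactor to locate $\hat\p_4$ relative to the opposite facet). Your version is somewhat more careful than the paper's, since it makes explicit where the ordering $\kappa_1<\cdots<\kappa_4$ enters, namely through the positivity of the Vandermonde cofactors and the evenness of the permutation $(2\,3\,4)$.
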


\begin{proof}
Using the vector notation for the points, i.e. $\hat\p_i:=(p_i,q_i,\omega_i)$,
the determinant \eqref{e:4-gonD} is expressed by a triple scalar product:
\[
D_{1,2,3,4}=-[(\hat\p_2-\hat\p_1)\times(\hat\p_4-\hat\p_1)]\cdot (\hat\p_3-\hat\p_1).
\]
Then $D_{1,2,3,4}<0$ implies that the edge $\{\hat 1,\hat 3\}$ given by the vector $\hat\p_3-\hat\p_1$ is in the upper face of the tetrahedron $\P_{1,4}^\omega$.
That is, the diagonal $\{1,3\}$ in the subdivision $\Q_{1,4}(\omega)$ is the projection
of the upper 1-face of $\P_{1,4}^\omega$.  The case $D_{1,2,3,4}>0$ implies that the edge $\{\hat 1,\hat 3\}$ is in the lower 
face of $\P_{1,4}^\omega$, and then the edge $\{2,4\}$ is the diagonal of $\Q_{1,4}(\omega)$.
\end{proof}

\begin{remark}
Since the formula $D_{1,2,3,4}$ is dealing with the relative position of two diagonals in $\mathbb{R}^3$, we may also state that the edge $\{\hat1,\hat3\}$ is vertically above $\{\hat2,\hat4\}$ when $D_{1,2,3,4} <0$. This means, when looking down from above at the intersection point of the diagonals $\{1,3\}$ and $\{2,4\}$ in $\mathbb{R}^2$, $\{\hat1, \hat3\}$ is vertically above $\{\hat2, \hat4\}$ in $\R^3$. 
\end{remark}

Thus, the sign of the determinant $D_{i_1,i_2,i_3,i_4}$  for a quadrilateral with vertices $\{\hat\p_{i_k}:k=1,\ldots,4\}$ determines the triangulation of the point configuration $\mathcal{A}^\omega_{1,4}$.  Repeatedly apply Lemma \ref{lem:QCL} leads to the following algorithm to construct a subdivision
$\Q_{1,M}(\omega)$ for given weights $\omega=(\omega_1,\ldots,\omega_M)$, for arbitrary $M$:

\begin{algorithm}[Soliton Triangulation for the point configuration $\A_{1,M}^\omega$] 
\label{algorithmGr1M} 

\noindent
\begin{itemize}
	\item[(1)] Starting with the triangle $\{1,2,3\}$, we add next vertex $\p_4$.  Then 
the original boundary edge $\{1, 3\}$ becomes an internal edge of the 4-gon $\{1,2,3,4\}$,
and we use Lemma \ref{lem:QCL} to check whether or not the edge $\{\hat{1},\hat{3}\}$ is an upper face of the tetrahedron $\PP_{1,4}^\omega$.  If $D_{1,2,3,4} < 0$, then the edge $\{\hat1, \hat3\}$ is an upper face of $\PP_{1,4}^\omega$, and it is a  diagonal for the triangulation  ${\sf Q}_{1,4}(\omega)$;  if $D_{1,2,3,4} > 0$, the edge $\{2, 4\}$ now becomes the diagonal instead of $\{1,3\}$.
	\item[(2)] Suppose that we have a triangulation of the polygon $\{1,2, \cdots, m\}$
	having a triangle $\{1,j,m\}$ for some $\p_j$ with $1<j<m$.  Then we add the next vertex $\p_{m+1}$, and
	consider the following process:
	\begin{itemize}
		\item[(i)] We consider the 4-gon $\{1,j,m,m+1\}$, and mark the original boundary edge $\{1, m\}$ as  a ``dashed'' line, meaning that this edge should be checked as a diagonal for the triangulation $\Q_{1,m+1}(\omega)$.  See Fig.~\ref{fig:Gr(1,5)adding}.
	\item[(ii)] Use Lemma \ref{lem:QCL} to check whether or not this dashed line $\{1,m\}$ remains to be the diagonal for $\{1,j,m,m+1\}$.  If $D_{1,j,m,{m+1}} < 0$, the line $\{1, m\}$ is the diagonal, and we move to the step (3). If $D_{1,j,m,{m+1}} > 0$, we have a new diagonal $\{j, {m+1}\}$ in $\Q_{1,m+1}(\omega)$, which breaks the whole polygon $\{1,2, \cdots, m+1 \}$ into two polygons, and we mark $\{1, j\}$ and $\{j, m\}$ as dashed lines. See Lemma \ref{Not1face} below.
	We then mark the original edge $\{1,m\}$ as a ``dotted'' line, meaning that it is below the line $\{j,m+1\}$.  See Fig. \ref{fig:Gr(1,5)adding}.
	\item[(iii)] For these polygons $\{1,2,\cdots,j,m+1 \}$ and $\{j,j+1 \cdots,m+1 \}$, we repeat the process in (ii). 
\end{itemize}	
	\item[(3)] We repeat the step (2), and finally obtain the soliton triangulation $\Q_{1,M}(\omega)$ after adding $\p_M$.
	\end{itemize}
\end{algorithm}

\begin{remark}
It should be noted that one can choose any order of adding process of the vertices in the algorithm. In particular, we may choose a suitable order from the weights. For example, we choose the order in the adding process $\{i_1,i_2,i_3,\ldots\}$, if the weights are in the order $\omega_{i_1}>\omega_{i_2}>\omega_{i_3}>\cdots.$
\end{remark}

\begin{example}
Consider $\mathcal{A}^\omega_{1,5}$.  We demonstrate the construction of the triangulation
of $\A_{1,5}^\omega$ for an arbitrary choice of the weights as shown in Figure \ref{fig:Gr(1,5)adding}. 
We start from the quarirateral $\{1,2,3,4\}$, and use the determinant $D_{1,2,3,4}$ to construct a triangulation.  Depending on the sign of the determinant, we have two triangulations.
Then we add a vertex $\p_{5}$.  We now check whether $\{1,4\}$ is a diagonal for the quadrirateral 
$\{1,3,4,5\}$ or $\{1,2,4,5\}$.  If the edge $\{1,4\}$ remains the diagonal for 
the quadrilateral, then we have the triangulation of the pentagon. If not, then we have a new edge depending on the sign of the determinant $D_{i,j,k,l}$.  See Fig.~\ref{fig:Gr(1,5)adding}.
\begin{figure}[h!]
  \centering
   \setlength\fboxsep{0pt}
   \setlength\fboxrule{0pt}
   \fbox{\includegraphics[width=12.5cm]{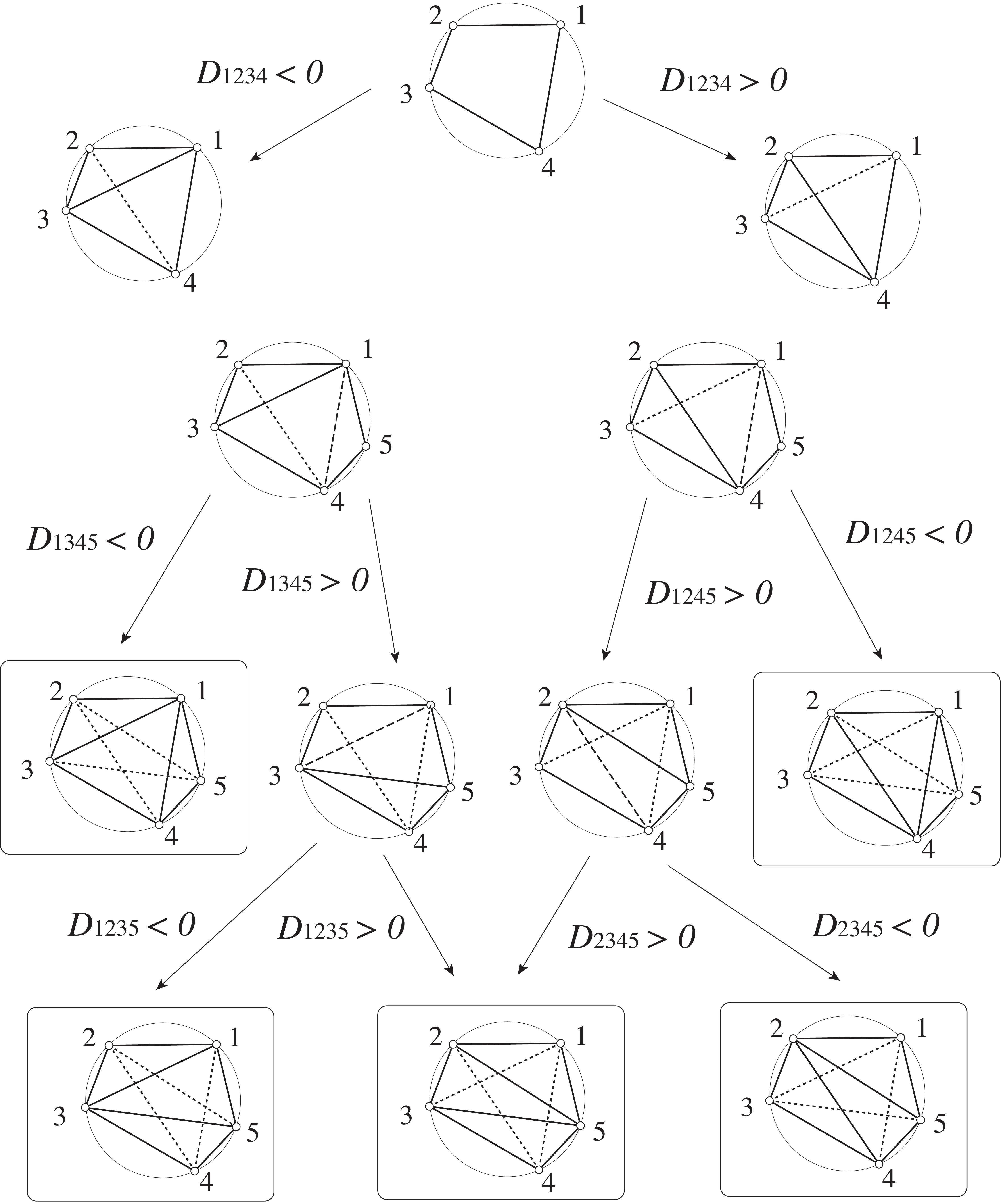} }
     \caption{Algorithm to construct the soliton subdivision $\Q_{1,5}$.
     Adding a new vertex $\p_{5}$, we mark the edge $\{1,4\}$ as a dashed line and  check whether or not it gives a diagonal for the new polygon $\P_{1,5}$.
     Each step is determined by the sign of the determinant $D_{i,j,k,l}$ for the quadrilateral $\{i,j,k,l\}$. The dotted lines are invisible edges obtained in the steps. } \label{fig:Gr(1,5)adding} 
\end{figure}
\end{example}

Let us now state some lemmas to verify the algorithm.
\begin{lemma}\label{Not1face}
Let $\{\p_{i}:i=1,\ldots,M\}$ be the vertices of the $M$-gon ${\sf P}^0_{1,M}$.
Then the following two statements are equivalent for fixed weights $\omega$ and fixed $a,c \in [M]$:
\begin{itemize}
\item[(1)] The edge $\{\hat a,\hat c\}$ is \emph{not} an upper $1$-face of ${\sf P}^\omega_{1,M}$.
\item[(2)] There exists $b,d$ with $a<b<c<d$ (in the cyclic order) such that $\{\hat{b}, \hat{d}\}$ is an upper 1-face of $\P_{1,M}^\omega$.
\end{itemize}
\end{lemma}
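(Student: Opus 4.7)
The plan is to identify upper $1$-faces of $\P_{1,M}^\omega$ with the edges of the induced regular subdivision $\Q_{1,M}(\omega)$ of the $M$-gon $\P_{1,M}^0$. I assume throughout that the weights are generic, so that $\Q_{1,M}(\omega)$ is actually a triangulation; this is the regime of the algorithm, in which all the quadrilateral determinants $D_{i,j,k,l}$ of Lemma~\ref{lem:QCL} are nonzero. Under this identification, the equivalence reduces to the combinatorial fact that two chords of a convex polygon that cross transversally in the interior cannot both be edges of a triangulation.

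For (2)~$\Rightarrow$~(1): the cyclic condition $a<b<c<d$ means that the chords $\{\p_a,\p_c\}$ and $\{\p_b,\p_d\}$ meet transversally in the interior of the $M$-gon. If $\{\hat b,\hat d\}$ is an upper $1$-face, then $\{\p_b,\p_d\}$ is an edge of $\Q_{1,M}(\omega)$, and since edges of a planar subdivision have pairwise disjoint interiors, $\{\p_a,\p_c\}$ cannot simultaneously be an edge; hence $\{\hat a,\hat c\}$ is not an upper $1$-face.

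For (1)~$\Rightarrow$~(2): if $a$ and $c$ are cyclically adjacent on the $M$-gon, then $\{\hat a,\hat c\}$ is automatically an upper $1$-face, since the vertical column above the boundary edge $\{\p_a,\p_c\}$ meets $\P_{1,M}^\omega$ in $\{\hat a,\hat c\}$ itself; this contradicts~(1). So $\{\p_a,\p_c\}$ is a proper diagonal of the $M$-gon which, by~(1), is not an edge of $\Q_{1,M}(\omega)$. The open segment from $\p_a$ to $\p_c$ cannot lie inside a single triangle of $\Q_{1,M}(\omega)$, else $\{a,c\}$ would be an edge of that triangle; so it exits the first triangle $T=\{a,v,w\}$ it enters through the opposite edge $\{\p_v,\p_w\}$. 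This crossing is transversal because the $\p_i=(\k_i,\k_i^2)$ lie on the parabola $q=p^2$ in strictly convex position, so no three are collinear and the segment cannot leave $T$ through a vertex. A transversal crossing of chords of the $M$-gon forces $a,v,c,w$ to occur in cyclic order, and relabeling $v,w$ as $b,d$ yields (2).

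The only subtle step is the transversality in the last paragraph; everything else is a direct translation between upper $1$-faces of the lifted polytope and interior edges of the projected subdivision, and the convex-position property of the points on the parabola $q=p^2$ is exactly what rules out the degenerate possibility of the segment $\{\p_a,\p_c\}$ passing through an intermediate vertex $\p_i$.
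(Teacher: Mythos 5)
Your argument is correct where it applies, and it takes a genuinely different route from the paper's. You work downstairs in the projected subdivision $\Q_{1,M}(\omega)$, reducing both implications to the combinatorial fact that a diagonal of a convex polygon which is not an edge of a triangulation must cross some edge of that triangulation transversally (with the convex position of the points on $q=p^2$ ruling out a crossing through a vertex). The paper works upstairs on the lifted polytope: it cuts the $M$-gon along the chord $\{a,c\}$, extracts an upper $2$-face of the half-polytope over one side whose vertices include $\hat a,\hat b,\hat c$, and then finds a vertex $\hat\p_d$ on the other side lying above the plane of that face, so that $\{\hat b,\hat d\}$ passes vertically over $\{\hat a,\hat c\}$; a rather terse final step upgrades this to $\{\hat b,\hat d\}$ being an upper $1$-face for a suitable $d$. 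Your version is cleaner and, in the generic regime, more airtight. The one point of divergence is scope: the lemma is stated for arbitrary fixed $\omega$, while you assume the weights are generic enough that $\Q_{1,M}(\omega)$ is a triangulation. Some hypothesis of this kind is in fact necessary: for $M=4$ and $\omega=(0,0,0,0)$ the polytope is a flat quadrilateral, neither lifted diagonal is an upper $1$-face, and the stated equivalence fails --- exactly the degeneracy that also undermines the paper's assertion that some $\hat\p_d$ on the other half lies \emph{strictly} above the plane $\mathcal{P}$. So your genericity assumption should be surfaced as a correction to the hypotheses rather than tucked in as a convenience; the precise failure mode in the non-generic case is that $\p_a$ and $\p_c$ can be non-adjacent vertices of a single cell of the coarser subdivision, in which case (1) holds but (2) does not.
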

\begin{proof}
If the edge $\{\hat a,\hat c\}$ is not an upper $1$-face, then $\{a,c\}$ is not a boundary edge of ${\sf P}^0_{1,M}$. Thus the edge $\{{a}, {c}\}$ breaks the polygon ${\sf P}_{1,M}^0$ into two parts. Consider one side, say $\PP_1$, which is also a polygon  having $\{{a}, {c}\}$ with $a<c$ as a boundary. We consider
the corresponding polytope ${\sf P}^\omega_1$, and we can find an upper $2$-face of ${\sf P}^\omega_1$ whose
vertices have indices including $a,c,b$ for some $b$ with $a<b<c$.  Consider a plane $\mathcal{P}$ spanned by those
three vertices $\{\hat a,\hat b,\hat c\}$. Then there exists a vertex $\hat \p_d$ on the other half of the polytope such that $\hat \p_d$ is above the plane $\mathcal{P}$.  This means that the edge $\{\hat b,\hat d\}$ is vertically above $\{\hat a,\hat c\}$.  There may be several vertices $\hat \p_d$, and
one can find at least one such vertex such that $\{\hat b,\hat d\}$ is a 1-face of $\P_{1,M}^\omega$.  If not, then $\{\hat a,\hat c\}$ should be 1-face of $\P_{1,M}^\omega$. But this contradicts.

The other direction, $(2)\to(1)$, is obvious.
\end{proof}

Lemma \ref{Not1face}  implies that if the edge $\{\hat{a}, \hat{c}\}$ with $a<c$ is an upper $1$-face, then for any $\{{b},{d}\}$ intersecting $\{{a},{c}\}$ in the $pq$-plane, $\{\hat{a}, \hat{c}\}$ is vertically above $\{\hat{b},\hat{d}\}$.

We also have the following lemma:

\begin{lemma}\label{separatelinelemma}
Let $\{a,b\} $ with $a<b$ be a diagonal in the subdivision $\Q_{1,M}(\omega)$, which divides the $M$-gon
$\P_{1,M}^0$ into two polygons, say $\P_1$ and $\P_2$. If $L$ is a diagonal of the 
triangulation for $\P_1$, then $L$ is also a diagonal in $\Q_{1,M}(\omega)$.
\end{lemma}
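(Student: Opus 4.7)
The plan is to identify the triangulation of $\P_1$ appearing in the statement (the soliton subdivision produced by Algorithm~\ref{algorithmGr1M} applied to $\P_1$ with the inherited weights) with the restriction of $\Q_{1,M}(\omega)$ to $\P_1$. Once this identification holds, any diagonal of the former is an interior edge of the latter, and is therefore a diagonal of $\Q_{1,M}(\omega)$.

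Let $V(\P_1)\subseteq [M]$ denote the vertex set of $\P_1$ and set $\P_1^\omega := \mathrm{conv}\{\hat\p_i : i \in V(\P_1)\}$. The cells of the triangulation of $\P_1$ are the maximal regions of linearity of the piecewise linear function
\[
f_{\P_1^\omega}(x,y)\ :=\ \max_{I \in V(\P_1)} \Theta_I(x,y),
\]
and analogously $\Q_{1,M}(\omega)$ is cut out by $f_{\P_{1,M}^\omega}(x,y) = \max_{I \in V([M])} \Theta_I(x,y)$. It therefore suffices to prove that $f_{\P_1^\omega} = f_{\P_{1,M}^\omega}$ on $\P_1$.

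The inequality $f_{\P_1^\omega} \leq f_{\P_{1,M}^\omega}$ on $\P_1$ is immediate from $V(\P_1) \subseteq V([M])$. For the reverse direction, I would invoke the hypothesis that $\{a,b\}$ is a diagonal of $\Q_{1,M}(\omega)$, which means that $\{\hat a,\hat b\}$ is an upper $1$-face of $\P_{1,M}^\omega$; the supporting plane realizing this upper $1$-face witnesses that no cell of $\Q_{1,M}(\omega)$ can straddle the chord $\{a,b\}$. Hence every cell of $\Q_{1,M}(\omega)$ meeting the interior of $\P_1$ has vertex set entirely in $V(\P_1)$, and for each $x \in \P_1$ the maximum defining $f_{\P_{1,M}^\omega}(x,y)$ is already realized at some $I \in V(\P_1)$, yielding the desired equality.

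Once the two functions agree on $\P_1$, their domains of linearity (the cells of the two subdivisions of $\P_1$) must coincide, and the conclusion follows. The main obstacle is the separation step—justifying that the diagonal $\{a,b\}$ genuinely partitions the cells of $\Q_{1,M}(\omega)$ into those in $\P_1$ and those in $\P_2$—which rests on the upper $1$-face property of $\{\hat a,\hat b\}$ and is essentially the standard fact that a regular subdivision inherits its separation behavior from the geometry of the lifted polytope.
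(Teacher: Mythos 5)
Your strategy is genuinely different from the paper's. The paper argues by contradiction: if a diagonal $L=\{i,j\}$ of the triangulation of $\P_1$ were not an upper $1$-face of $\P_{1,M}^\omega$, Lemma \ref{Not1face} would produce a crossing upper $1$-face $\{\hat c,\hat d\}$ lying above $\{\hat i,\hat j\}$, and a short computation with the plane through $\hat\p_a,\hat\p_b,\hat\p_c$ then forces an edge with both endpoints in $V(\P_1)$ to lie above $\{\hat i,\hat j\}$, contradicting $L$ being a diagonal of $\P_1$'s triangulation. You instead aim for the stronger and more conceptual statement that the restriction of $\Q_{1,M}(\omega)$ to $\P_1$ \emph{is} the regular subdivision of $\P_1$ with the inherited weights; the lemma is then an immediate corollary. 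That route is viable, and if carried out correctly it buys more than the paper's argument (it is essentially what the correctness of Algorithm \ref{algorithmGr1M} ultimately rests on), at the cost of invoking the face-to-face property of regular subdivisions rather than only Lemma \ref{Not1face}.

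There is, however, a concrete error in how you set this up. The function you define, $f(x,y)=\max_{I}\Theta_I(x,y)$, lives on the $xy$-plane, and its maximal regions of linearity are the regions of the \emph{soliton graph}, which are dual to the \emph{vertices} appearing in the subdivision --- they are not the cells of the triangulation. Consequently the pivotal claim ``$f_{\P_1^\omega}=f_{\P_{1,M}^\omega}$ on $\P_1$'' does not parse: $\P_1$ is a region of the $pq$-plane, while the domain of your $f$ is the dual $xy$-plane, and equality of the two maxima on all of $\R^2$ is false in general (vertices of $\P_2$ do dominate somewhere). The object your argument needs is the upper-hull height function $g(p,q)=\max\{t:(p,q,t)\in\P^\omega\}$, whose regions of linearity really are the cells of the regular subdivision and whose domain really is the $M$-gon. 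With that substitution the rest of your outline is sound: since $\{\hat a,\hat b\}$ is an upper $1$-face and the projection of the upper hull is a face-to-face subdivision, no $2$-cell of $\Q_{1,M}(\omega)$ straddles the chord $\{a,b\}$, so over $\P_1$ the upper hull of $\P_{1,M}^\omega$ is carried by lifted points indexed by $V(\P_1)$, giving $g_{\P_{1,M}^\omega}=g_{\P_1^\omega}$ on $\P_1$; hence the two subdivisions of $\P_1$ coincide, and every diagonal of the triangulation of $\P_1$ is an upper $1$-face of $\P_{1,M}^\omega$, i.e., a diagonal of $\Q_{1,M}(\omega)$. Please repair the statement of the key claim accordingly before treating the proof as complete.
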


\begin{proof}
Denote such $L$ by $\{i,j\}$, and assume $a\leq i<j \leq b$.
Suppose  $L$ is not a diagonal for the subdivision $\Q_{1,M}(\omega)$. Then by Lemma \ref{Not1face}, we can find $c,d$ such that $\{\hat{c}, \hat{d}\}$ is above $\{\hat i,\hat j\}$
and $i<c<j$ and $b<d<a$ (in the cyclic order), i.e. $\p_c$ is a vertex in $\P_1$ and $\p_d$ is in $\P_2$.  Then the edge $\{\hat i,\hat j\}$ cannot be above the plane spanned by $\{\hat a,\hat b,\hat c\}$. Since the plane contains the edges $\{\hat a,\hat c\}$ and $\{\hat b,\hat c\}$,
at least one of these edges is vertically above $\{\hat i,\hat j\}$.  But this contradicts $L$ being
a diagonal of the subdivision for $\P_1$.
\end{proof}

Now we can give the proof of Algorithm \ref{algorithmGr1M}:
\begin{proof} 
Assume we have a soliton triangulation ${\sf Q}_{1,m}(\omega)$ of the polygon,
\[
{\sf P}_{\{i_1 \cdots i_m\}}:=\text{conv}\{\p_{i_j}:j=1,\ldots,m\},
\]
 where $i_1<i_2<\cdots<i_m$ in cyclic order.
We then add one more vertex $\p_{i_{m+1}}$ with $i_1<i_{m+1}<i_m$ in cyclic order.
 Let $\{i_1,i_l,i_m\}$ be a triangle in the triangulation $\Q_{1,m}(\omega)$ for some vertex $\p_l$. 
If  $\{\hat i_1,\hat i_m\}$ is vertically above $\{\hat i_{m+1}, \hat i_l\}$, then we have the vertex $\hat\p_{i_{m+1}}$ is below the plane containing the vertices $\{\hat\p_{i_1},\hat\p_{i_m},\hat\p_{i_l}\}$, hence $\{{i_1},{i_m}\}$ a diagonal in ${\Q}_{1,m+1}(\omega)$.  If $\{\hat i_{m+1}, \hat i_l\}$ is vertically above $\{\hat i_1,\hat i_m\}$,  Lemma \ref{separatelinelemma}
implies that $\{i_{m+1}, {i_l}\}$ breaks the polygon ${\sf P}_{\{i_1,\cdots, i_{m+1}\}}$ into two sub-polygons ${\sf P}_1$, ${\sf P}_2$, and $\{\hat i_{m+1}, \hat i_l\}$ is an upper $1$-face of  the polytope ${\sf P}_{\{i_1, \cdots,i_{m+1}\}}^\omega$. Inductively we can consider the triangulations of these sub-polygons ${\sf P}_1$ and ${\sf P}_2$.
\end{proof}


\subsection{Inductive construction of the triangulation $\Q_{N,M}(\omega)$}\label{subsec:induction}
We now develop an inductive algorithm to construct the triangulation $\Q_{N+1,M}(\omega)$ from $\Q_{N,M}(\omega)$ for given weights $\omega$. For the case $N=1$, the triangulation $\Q_{1,M}(\omega)$ may be constructed by Algorithm \ref{algorithmGr1M} from the
previous section. We show how to construct $\Q_{2,M}(\omega)$ from $\Q_{1,M}(\omega)$, then give the inductive step in general.

The subdivision $\Q_{2,M}(\omega)$ corresponds to the point configuration
\[
\A_{2,M}^\omega=\left\{\hat\p_{ij}:1\le i<j\le M\right\}.
\]
where $\hat\p_{ij} = \hat\p_i +\hat \p_j$.
We must identify the points $\A_{2,M}^\omega$ which give the upper vertices
of the polytope $\P_{2,M}^\omega$. Since $\half(\p_i+\p_j)$ is the midpoint of the edge $\{\p_i,\p_j\}$,
the vertices in $\Q_{2,M}$ are induced by the edges in $\Q_{1,M}$. Moreover, 
the subdivision $\Q_{2,M}$ has the following structure:
\begin{itemize}
\item[(a)] The boundary vertices of $\Q_{2,M}$ are given by $\{\p_i+\p_{i+1}:i=1,\ldots,M,~\text{mod}(M)\}$.
\item[(b)] If the degree of the (boundary) vertex $\p_i$ in $\Q_{1,M}$ is $d_i\ge 3$, we have a $d_i$-gon with the vertices $\{\p_{ij_{\ell}}=\p_i+\p_{j_{\ell}}:\ell=1, \ldots,d_i\}$, 
which corresponds to a subdivision $\P_{1,d_i}$ of the point configuration $\A_{1,d_i}=\{\p_{j_{\ell}}:\ell=1,\ldots,d_i\}$.
\item[(c)] The number of internal vertices in $\Q_{2,M}$ is given by $M-3$, and we have
\[
\frac{1}{2}\sum_{i=1}^M(d_i-2)=M-3.
\]
\end{itemize}
Figure \ref{1Mto2M} demonstrates the induction process to construct a subdivision $\Q_{2,M}$ from the triangulation $\Q_{1,M}$ by an example with $M=11$.
\begin{figure}[h]
\begin{center}
\includegraphics[height=5cm]{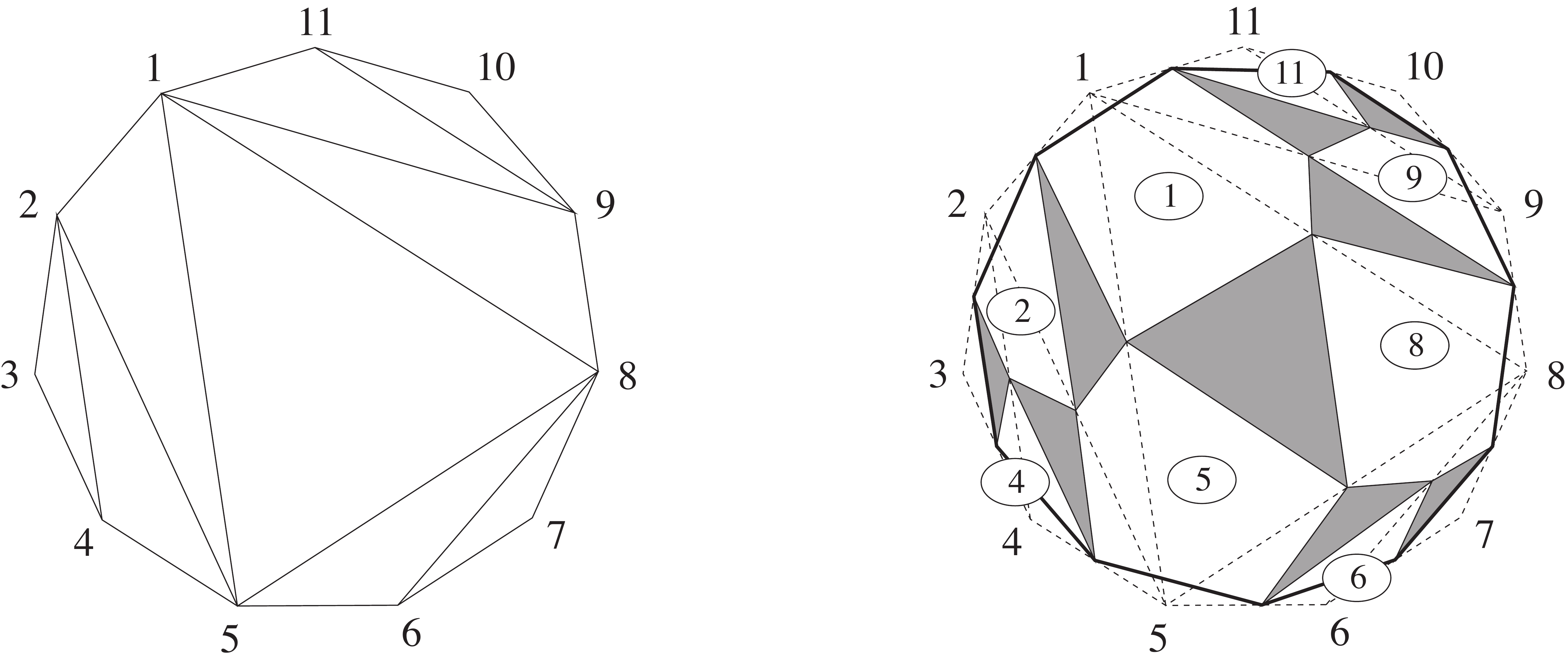}
\end{center}
\caption{A subdivision $\Q_{2,11}$ on the right is constructed from the triangulation $\Q_{1,11}$
in the left.  The vertices in $\Q_{2,11}$ are given by the midpoints of the edges in $\Q_{1,11}$ in this figure.
Each white polygon in $\Q_{2,11}$ corresponds to the vertex
$\p_i$ with the degree $d_i\ge3$ in $\Q_{1,11}$, and the number in each circle 
in the white polygon is the index of $\p_i$, which is the common index in Definition \ref{def:black-white}.
}\label{1Mto2M}
\end{figure}

We now describe the induction process from $\Q_{N,M}$ to $\Q_{N+1,M}$.  Let us first introduce a continuous process, called the \emph{$\epsilon$-blow up} or simply the \emph{blow-up}.
\begin{definition}
\label{def:e-blow-up}
Let $\Q_{N,M}$ be the triangulation for given weights $\omega$. 
For a number $0\le\epsilon\le1$, let $J+\epsilon a$ denote the vertex $\p_J+\epsilon\p_a$
for some index set $J$ and index $a \not\in J$.  Then
we define an \emph{$\epsilon$-blow up} of $\Q_{N,M}$
by the following procedure:
\begin{itemize}
\item[(a)] For each white triangle $\{{Ia},{Ib},{Ic}\}$ in $\Q_{N,M}$ for some $I\in\binom{[M]}{N-1}$, we replace it by the hexagon $\{{Ia}+\epsilon b, {Ia}+\epsilon c, {Ib}+\epsilon a, {Ib}+\epsilon c, {Ic}+\epsilon a,{Ic}+\epsilon b\}$. That is, when $\epsilon=0$, it is the original triangle,
and when $\epsilon=1$, it becomes a black triangle with the vertices $\{{Iab},{Ibc},{Iac}\}$.
\item[(b)] For each black triangle $\{{Kab},{Kbc},{Kac}\}$ in $\Q_{N,M}$ for some $K\in\binom{[M]}{N-2}$,  we replace it by the triangle 
 $\{{Kab} + \epsilon c,{Kbc}+\epsilon a, {Kac}+ \epsilon b\}$, which shrinks to the point $\p_{Kabc}$
 when $\epsilon=1$.
\end{itemize}
The $\epsilon$-blow up for $0<\epsilon<1$ is a $2M$-gon with the vertices $\{\p_{I_i}+\epsilon \p_{i+N},\p_{I_i}+\epsilon \p_{i-1}: i=1,\ldots, M ({\rm mod}~M)\}$ where $I_i=\{i,i+1,\ldots,i+N-1\}$ are the indices of the boundary vertices of $\Q_{N,M}$.
\end{definition}

To describe the structure of the $\epsilon$-blow up of the triangulation $\Q_{N,M}(\omega)$ for $\epsilon=1$, we first define the \emph{induced} degree for each vertex $\p$ in $\Q_{N,M}(\omega)$, denoted by I-deg$(\p)$, by 
\begin{equation}\label{def:Ideg}\index{induced degree (I-deg)}
\text{I-deg}(\p)=\{\text{\# of incoming edges to $\p$}\}~-~\{\text{\# of black triangles adjacent to $\p$}\}.
\end{equation}
Then one can see that the $\epsilon$-blow up of $\Q_{N,M}(\omega)$ consists of the following black and white polygons when $\epsilon=1$:
\begin{itemize}
\item[(a)] Each white triangle in $\Q_{N,M}(\omega)$ generates a black triangle.
\item[(b)] Each vertex $\p$ in $\Q_{N,M}(\omega)$ with I-deg$(\p)=m\ge 3$ generates a white $m$-gon.
\end{itemize}
Note that we can triangulate the $m$-gon in (b) using Algorithm \ref{algorithmGr1M}.

Now we have the following proposition:
\begin{proposition}\label{prop:Induction}
The $\epsilon$-blow up of the triangulation $\Q_{N,M}(\omega)$ generates a subdivision $\Q_{N+1,M}(\omega)$ for fixed $\omega$ when $\epsilon=1$.
\end{proposition}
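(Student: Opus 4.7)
The plan is to identify each polygon in the $\epsilon = 1$ blow-up with an upper 2-face of $\P_{N+1,M}^\omega$, and then verify that these faces tile $\P_{N+1,M}^0$. The underlying fact is that the height function $\omega_J = \sum_{j\in J}\omega_j$ is additive in $J$, so a supporting plane for an upper face of $\P_{N,M}^\omega$ transforms, by a shift of its intercept, into a supporting plane for an upper face of $\P_{N+1,M}^\omega$ whose vertices are obtained by adjoining one more index.

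Consider first the black triangles produced by case (a). Let $\{Ia, Ib, Ic\}$ be a white triangle in $\Q_{N,M}(\omega)$ with $|I| = N-1$, supported by a plane $\mathcal{P}\colon z = \alpha p + \beta q + \gamma$. Let $\mathcal{P}_0\colon z = \alpha p + \beta q + \gamma_0$ be the auxiliary plane through $\hat\p_a, \hat\p_b, \hat\p_c$, and set $h_k := \omega_k - \alpha p_k - \beta q_k$. The supporting-plane inequalities for $\mathcal{P}$ at the $N$-level yield $h_a = h_b = h_c = \gamma_0$, $h_j < \gamma_0$ for all other $j \notin I$, and $h_i \ge 0$ for $i \in I$, with $\gamma = \gamma_0 + \sum_{i\in I}h_i$. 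A direct calculation then shows that the plane $z = \alpha p + \beta q + (\gamma + \gamma_0)$ passes through $\hat\p_{Iab}, \hat\p_{Iac}, \hat\p_{Ibc}$ and is a supporting plane at these three points: for any $J \in \binom{[M]}{N+1}$, choosing $j^* \in J$ with $j^* \notin I$ (which exists since $|J| > |I|$) gives $\sum_{k\in J}h_k \le \gamma + h_{j^*} \le \gamma + \gamma_0$, with equality precisely when $J \in \{Iab, Iac, Ibc\}$. The same additivity argument, applied to a black triangle $\{Kab, Kbc, Kac\}$ in $\Q_{N,M}$ with $|K| = N-2$, shows that $\hat\p_{Kabc}$ is a vertex of the upper hull of $\P_{N+1,M}^\omega$, matching the blow-up collapsing this triangle to a single point.

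For the white polygons in case (b), the approach is to analyze the normal cone at a vertex $\hat\p_L$ of the upper hull of $\P_{N,M}^\omega$, with $|L| = N$. The set of supporting planes at $\hat\p_L$ projects to a convex polygonal region $R_L$ in the $(\alpha, \beta)$-plane, whose edges correspond to the upper 1-faces incident to $\hat\p_L$ (equivalently, edges at $\p_L$ in $\Q_{N,M}$) and whose vertices correspond to the incident upper 2-faces. For each $(\alpha, \beta) \in R_L$, the index $j^* \notin L$ maximizing the residual $h_{j^*}$ determines a supporting plane at the vertex $\hat\p_{Lj^*}$ of $\P_{N+1,M}^\omega$; as $(\alpha,\beta)$ varies over $R_L$, this dominant $j^*$ partitions $R_L$ into cells, and the vertices of the white polygon in the blow-up correspond precisely to these cells. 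The cell count equals $\text{I-deg}(\p_L)$, because each adjacent black triangle in $\Q_{N,M}$ corresponds to a vertex of $R_L$ whose collapsed point $\hat\p_{Kabc}$ is shared across its two adjacent edges, merging two neighboring cells into one and producing the $-1$ contribution per black triangle in the induced-degree formula. Completeness of the subdivision then follows by a direct area count combined with tracking the boundary of $\P_{N+1,M}^0$.

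The main obstacle is the normal-cone analysis in case (b): rigorously verifying that $\text{I-deg}(\p_L)$ enumerates the cells of the dominant-$j^*$ partition of $R_L$, including the correct bookkeeping for vertices simultaneously adjacent to several black triangles. The algebraic computation in case (a) is routine once $h_k$ is introduced, but case (b) requires a careful combinatorial-geometric analysis matching the local structure of $\Q_{N,M}$ near each vertex with the local face structure of $\P_{N+1,M}^\omega$ near the corresponding white polygon.
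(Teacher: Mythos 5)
Your treatment of the black triangles (case (a)) is correct and is essentially the paper's argument made quantitative: the paper simply asserts that since $\{\widehat{Ia},\widehat{Ib},\widehat{Ic}\}$ is an upper $2$-face of $\P^\omega_{N,M}$, the plane through $\hat\p_{Iab},\hat\p_{Ibc},\hat\p_{Iac}$ dominates all of $\P^\omega_{N+1,M}$, and your estimate $\sum_{k\in J}h_k\le\gamma+h_{j^*}\le\gamma+\gamma_0$ is precisely the computation that justifies that sentence. (One small slip: the supporting inequalities at level $N$ give $h_i\ge\gamma_0$ for $i\in I$, not $h_i\ge 0$; but you never use that claim, so nothing breaks.)

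For the white polygons you take a genuinely heavier route than the paper, and this is where your proof is incomplete, as you yourself flag. The paper's case (b) bypasses the cell-enumeration problem entirely: for a white triangle $\{Ii,Ij,Ik\}$ produced by the blow-up it takes the plane $\mathcal{L}$ through $\hat\p_{Ii},\hat\p_{Ij},\hat\p_{Ik}$, translates it to the parallel plane $\mathcal{L}'$ through $\hat\p_I$, observes that $\mathcal{L}'$ supports $\P^\omega_{N,M}$ at $\hat\p_I$ (in your language, its slope lies in $R_I$, which holds because $\{i,j,k\}$ is a triangle of the Algorithm~\ref{algorithmGr1M} triangulation of the neighbors of $\p_I$ with respect to the residual weights $h$), and then runs the same one-index-at-a-time estimate as in case (a) to conclude that $\mathcal{L}$ dominates every $\hat\p_K$ with $K\in\binom{[M]}{N+1}$. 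In other words, you do not need to prove a priori that the dominant-$j^*$ partition of $R_L$ has exactly $\text{I-deg}(\p_L)$ cells matching the blow-up; it suffices to verify, one tile at a time, that each polygon the blow-up produces is an upper face, and your own $h$-function formalism already does this. As for completeness, neither your proposed ``area count'' nor the paper spells it out, but it is a one-line remark: the $\epsilon=1$ blow-up and $\Q_{N+1,M}(\omega)$ both subdivide the same $M$-gon $\P^0_{N+1,M}$, so once every tile of the former is an upper face of $\P^\omega_{N+1,M}$ the two subdivisions coincide. With that substitution your argument closes; as written, the central verification in case (b) remains an open gap.
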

\begin{proof}
We need to show that all the polygons generated in the $\epsilon$-blow up correspond
to the upper faces of $\P_{N+1,M}^\omega$. This can be shown as follows:
\begin{itemize}
\item[(a)] We first show that each black triangle $\{Iab,Ibc,Iac \}$ in $\Q_{N+1,M}(\omega)$ is obtained from the white triangle $\{Ia,Ib,Ic \}$ in $\Q_{N,M}(\omega)$ with $I \in {[M] \choose N-1}$. Since $\{\widehat{Ia},\widehat{Ib}, \widehat{Ic}\}$ is an upper $2$-face of ${\sf P}^\omega_{N,M}$, all the other vertices $\hat\p_J$, $J \in {[M] \choose N}$, are below the plane containing this face. Thus we can see all the vertices $\hat\p_K$, $K \in {[M] \choose N+1}$ are below the plane containing the points $\{\hat\p_{Iab},\hat\p_{Ibc}, \hat\p_{Iac}\}$, that is,  $\{\widehat{Iab},\widehat{Ibc}, \widehat{Iac}\}$ is an upper $2$-face of ${\sf P}^\omega_{N+1,M}$. 
\item[(b)] 
Now we show the each white triangle $\{Ii,Ij,Ik \}$ in $\Q_{N+1,M}(\omega)$ is also given by the projection of an upper $2$-face of the polytope ${\sf P}^\omega_{N+1,M}$. First, we have a plane $\mathcal{L}: \; z=ax+by+c$ containing $\{\hat\p_{Ii},\hat{\p}_{Ij},\hat\p_{Ik} \}$. Consider the plane $\mathcal{L}'$ parallel to $\mathcal{L}$ containing $\hat\p_I$ in ${\sf P}^\omega_{N,M}$. Then we can see that it is above all other vertices $\hat\p_J$ for all $J \in {[M] \choose N}$, and consequently, we have that $\mathcal{L}$ containing $\{\hat\p_{Ii},\hat\p_{Ij},\hat\p_{Ik}\}$ is above all the vertices $\hat\p_K$ for all $K \in {[M] \choose N+1}$. This implies that the white triangle $\{Ii,Ij,Ik\}$ is given by the projection of the upper 2-face $\{\widehat{Ii},\widehat{Ij},\widehat{Ik}\}$.
\end{itemize}
The items (a) and (b) complete the proof.
\end{proof}

This proposition gives an inductive algorithm to construct the triangulation $\Q_{N+1,M}(\omega)$ from $\Q_{N,M}(\omega)$.
\begin{algorithm}[Inductive construction of $\Q_{N+1,M}(\omega)$ from $\Q_{N,M}(\omega)$]\label{InductiveAlgorithm}

\noindent
\begin{itemize}
\item[(1)]  Apply the $\epsilon$-blow up to $\Q_{N,M}(\omega)$, and take $\epsilon=1$ to construct a subdivision $\Q_{N+1,M}(\omega)$.
\item[(2)] Use Algorithm \ref{algorithmGr1M} to triangulate the white polygons in the subdivision
$\Q_{N+1,M}(\omega)$ obtained in the previous step.
\end{itemize}
\end{algorithm}

In Fig.~\ref{fig:M11}, we illustrate the inductive construction of the triangulations
$\Q_{N,11}$ for $N=2,3$.  Here the triangulation $\Q_{2,11}(\omega)$ is obtained from the triangulations
of the white polygons in the subdivision $\Q_{2,11}(\omega)$ in Fig.~\ref{1Mto2M} (i.e. Step (2) in Algorithm \ref{InductiveAlgorithm}).
\begin{figure}[h]
\begin{center}
\includegraphics[width=14cm]{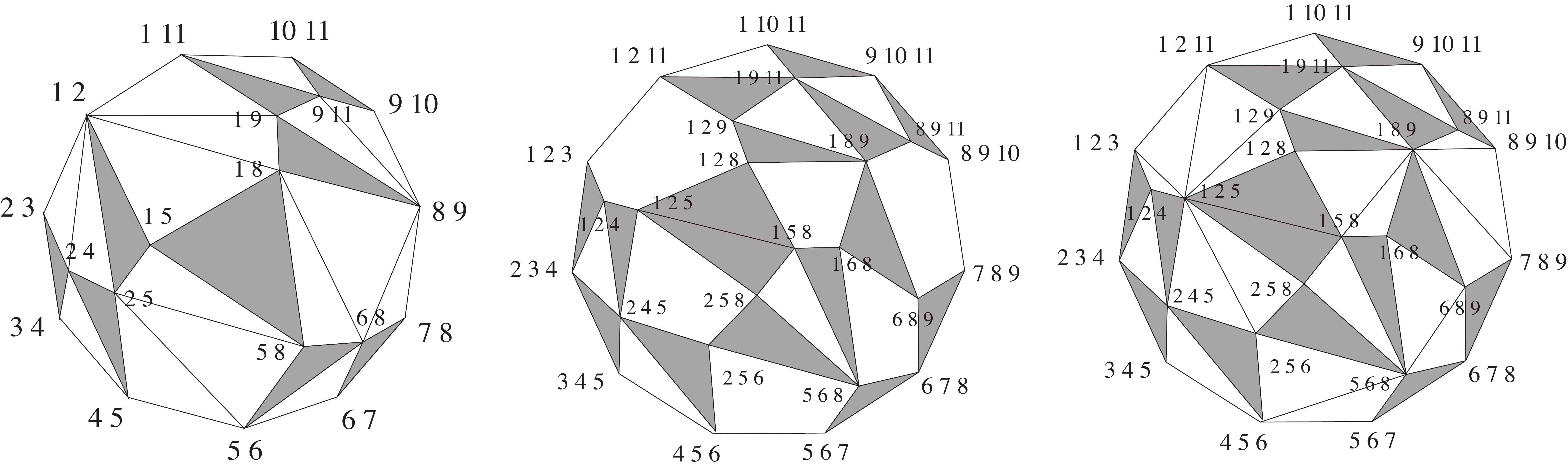}
\end{center}
\caption{Inductive construction of $\Q_{3,11}$ from $\Q_{2,11}$. Step (1) in Algorithm \ref{InductiveAlgorithm} shows the process from the triangulation $\Q_{2,11}(\omega)$ (left) to a subdivision $\Q_{3,11}(\omega)$ (middle). Then Step (2) provides the triangulation $\Q_{3,11}(\omega)$ (right).  }
\label{fig:M11}
\end{figure}

One can also show the following proposition about the topological structure of the triangulation $\Q_{N,M}(\omega)$:
\begin{proposition}
The triangulation $\Q_{N,M}(\omega)$ has
\begin{itemize}
\item[{\rm(1)}] $F_N^B:=N(M-N)-M+N$  black triangles, 
\item[{\rm(2)}] $F_N^W:=N(M-N)-N$  white triangles,
\item[{\rm(3)}] $V_N:=N(M-N)+1$ vertices, and
\item[{\rm(4)}] $E_N=3N(M-N)-M$  edges.
\end{itemize}
\end{proposition}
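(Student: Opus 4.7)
The plan is to prove all four identities simultaneously by induction on $N$, using Algorithm \ref{InductiveAlgorithm} for the inductive step and Euler's formula plus an edge-face double count to tie everything together.

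Base case $N=1$. The triangulation $\Q_{1,M}(\omega)$ is an ordinary triangulation of the $M$-gon $\P_{1,M}^0$. By Definition \ref{def:black-white}, every triangle has the form $\{Ia,Ib,Ic\}$ with $I\in\binom{[M]}{0}$, so every triangle is white: $F_1^B=0$ and $F_1^W=M-2$. Moreover $V_1=M$ (the vertices of the $M$-gon) and $E_1=2M-3$ ($M$ boundary edges plus $M-3$ diagonals). Substituting $N=1$ into the claimed formulas gives precisely these values.

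Inductive step. Assume the four formulas hold for $\Q_{N,M}(\omega)$. I would compute two of the four counts directly from the $\epsilon$-blow up (Algorithm \ref{InductiveAlgorithm}) and deduce the other two from Euler's formula and the edge-face double count. First, part (a) in the proof of Proposition \ref{prop:Induction} gives a bijection between white triangles of $\Q_{N,M}(\omega)$ and black triangles of $\Q_{N+1,M}(\omega)$ via $\{Ia,Ib,Ic\}\mapsto\{Iab,Ibc,Iac\}$; hence $F_{N+1}^B=F_N^W$. Second, each vertex $v$ of $\Q_{N,M}(\omega)$ with induced degree $\text{I-deg}(v)=m$ generates a white $m$-gon in Step (1) of Algorithm \ref{InductiveAlgorithm}, and Step (2) triangulates this $m$-gon into $m-2$ white triangles (with the convention that $m=2$ collapses to a single edge contributing $0$ triangles). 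Therefore
\[
F_{N+1}^W=\sum_v\bigl(\text{I-deg}(v)-2\bigr)=2E_N-3F_N^B-2V_N,
\]
where the last equality comes from double counting: $\sum_v\deg(v)=2E_N$ (each edge is incident to two vertices) and each of the $F_N^B$ black triangles is subtracted once at each of its three vertices.

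Finishing. Substituting $V_N,E_N,F_N^B,F_N^W$ from the inductive hypothesis, a direct algebraic simplification gives $F_{N+1}^B=(N+1)(M-N-1)-M+(N+1)$ and $F_{N+1}^W=(N+1)(M-N-1)-(N+1)$, as claimed. The edge count $E_{N+1}$ is then recovered from the identity $3(F_{N+1}^W+F_{N+1}^B)=2E_{N+1}-M$, which holds because each triangle has three edges, each of the $M$ boundary edges is incident to one triangle, and each interior edge is incident to two; and $V_{N+1}$ is obtained from Euler's formula $V_{N+1}-E_{N+1}+F_{N+1}^W+F_{N+1}^B=1$ for a disk subdivision. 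Both simplify to the stated $E_{N+1}=3(N+1)(M-N-1)-M$ and $V_{N+1}=(N+1)(M-N-1)+1$, completing the induction.

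Main obstacle. The one non-cosmetic subtlety is justifying that $\text{I-deg}(v)\geq 2$ at every vertex of $\Q_{N,M}(\omega)$; without this the sum $\sum_v(\text{I-deg}(v)-2)$ could be an undercount. I expect to handle this with a short lemma showing that no vertex of $\Q_{N,M}(\omega)$ can be surrounded by only black triangles (or by black triangles together with a single white triangle), since then the blow-up of Algorithm \ref{InductiveAlgorithm} would leave the neighborhood of $v$ with no consistent set of vertex labels in $\binom{[M]}{N+1}$; equivalently, that each of the two boundary edges at a boundary vertex and each edge at an interior vertex is adjacent to at least one white triangle, which follows by inspection of the index structure forced by Definition \ref{def:black-white}.
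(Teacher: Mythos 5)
Your proof is correct and follows essentially the same route as the paper: induction on $N$ via the $\epsilon$-blow up, with the key counts $F_{N+1}^B=F_N^W$ and $F_{N+1}^W=\sum_v(\text{I-deg}(v)-2)=2E_N-2V_N-3F_N^B$ identical to the paper's. The only (cosmetic) difference is in the bookkeeping at the end: the paper also reads off $V_{N+1}=E_N-2F_N^B$ directly from the blow-up and then gets $E_{N+1}$ from Euler's formula, whereas you recover both $E_{N+1}$ and $V_{N+1}$ from the face counts via the double count $3F=2E-M$ and Euler's formula; the subtlety you flag about $\text{I-deg}(v)\ge 2$ is real but is glossed over in the paper's proof as well.
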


\begin{proof}
First note that the item (4) is a direct consequence of the items (1), (2) and (3) by the Euler characteristics.
That is, we have
\[
E_N=V_N+F_N-1=3N(M-N)-M,
\]
where $F_N=F_N^B+F_N^W$ is the total number of faces in $\Q_{N,M}(\omega)$.

We give an inductive proof based on the construction of the subdivision $\Q_{N+1,M}(\omega)$ via the $\epsilon$-blow up in Definition \ref{def:e-blow-up}.
When $N=1$, the subdivision $\Q_{1,M}$ is a triangulation of the $M$-gon with $M-3$ diagonals, thus it satisfies all items. 
\begin{itemize}
\item[(i)] From Definition \ref{def:e-blow-up}, it is clear that the number of black triangles in $\Q_{N+1,M}(\omega)$
is given by the number of white triangles in $\Q_{N,M}(\omega)$.  This means that we have
\begin{align*}
F_{N+1}^B&=F_N^W=N(M-N)-N\\
&=(N+1)(M-(N+1))-M+(N+1),
\end{align*}
which agrees with the formula in (1).
\item[(ii)]  Note that each edge in $\Q_{N,M}(\omega)$ becomes a vertex in $\Q_{N+1,M}(\omega)$. 
Also note that each black triangle in $\Q_{N,M}(\omega)$ shrinks to a vertex in $\Q_{N+1,M}(\omega)$.  This leads to
\begin{align*}
V_{N+1}&=E_N-2F_N^B=3N(M-N)-M-2(N(M-N)-M+N)\\
&=(N+1)(M-(N+1))+1,
\end{align*}
which gives the formula in (3).
\item[(iii)]  Since each vertex in $\Q_{N,M}(\omega)$ generates white triangles and the number of these triangles is related to the degree of the vertex, we first calculate the total degree of the vertices. Let $d_i$ be the degree of each vertex $\p_i$ in $\Q_{N,M}(\omega)$.  Then
the total degree of the vertices in $\Q_{N,M}(\omega)$ is given by
\[
\sum_{i=1}^{V_N} d_i=2E_N=2(3N(M-N)-M).
\]
Then note that each vertex $\p_i$ with degree $d_i\ge 2$ generates $d_i-2$ white triangles via the $\epsilon$-blow up at $\epsilon=1$.  However, three vertices of each black triangle shrink to a point at $\epsilon=1$, hence those vertices do not generate any triangles, and we have
\begin{align*}
F_{N+1}^W&=\sum_{i=1}^{V_N}(d_i-2)-3F_N^B=2E_N-2V_N-3F_N^B\\
&=(N+1)(M-(N+1))-(N+1).
\end{align*}
\end{itemize}
This completes the proof.
\end{proof}

\subsection{Connection to zonotopal tilings}
\label{subsec: zonotope}
By \cite[Corollary 10.9]{kodama}, every non-degenerate soliton graph for $\Gr(N,M)_{>0}$ is a reduced plabic graph.  Equivalently, every non-degenerate soliton subdivision ${\sf Q}_{N,M}$ is a triangulated plabic tiling.  In this section, we outline an alternate proof of this result, using Algorithm \ref{InductiveAlgorithm} and Galashin's results on Zonotopal tilings; see \cite{galashin}.  As a consequence, we obtain Kodama and Williams' classification of soliton graphs for $\Gr(2,M)_{>0}$; and derive a key lemma that we use in the proofs of Theorems \ref{realize37} and \ref{realize38}.

For polytopes $Z_1$ and $Z_2$, the \emph{Minkowski sum} of $Z_1$ and $Z_2$ is the set of points 
\[\{ \p_1 + \p_2 : \p_1 \in Z_1 \text{ and }\p_2 \in Z_2\}.\]
A \emph{zonotope} is a polytope which is a Minkowski sum of line segments.  The \emph{cyclic zonotope} $Z(3,M)$ is the Minkowski sum of $M$ segments $[0,\pp_i]$ in $\mathbb{R}^3$, where 
\[\pp_i = (\k_i,\k_i^2,1)\] 
and $\k_1 < \k_2 < \ldots < \k_M$.  A \emph{zonotopal tiling} of $Z(3,M)$ is a subdivision of $Z(3,M)$ into smaller zonotopes (called tiles), each of which is the Minkowski sum of 
\[\{ \pp_i : i \in I\} \cup \{ [0,\pp_j] : j \in J\}\] 
for some disjoint $I, J \subseteq [M]$; and such that the intersection of any two tiles is either empty, or a lower-dimensional tile.  We say the tiling is \emph{fine} if each top-dimensional tile is a translate of the Minkowski sum of at most three segments.

Let $\p_1,\ldots,\p_M$ be points on the parabola $q = p^2$, and let 
\[\omega = (\omega_1,\ldots,\omega_M)\]
be a weighting of the points.  Without loss of generality, assume $\omega_1,\ldots,\omega_M > 0$. Repeatedly applying the induction algorithm gives a subdivision of $\Q_{N,M}(\omega)$ for each $0 \leq N \leq M$. This collection of subdivisions induces a tiling of the cyclic zonotope, whose top-dimensional tiles are in one-to-one correspondence with the white triangles in the family of subdivisions. Figure \ref{fig:zonotope} illustrates this construction for $M = 4$, and $1 \leq N \leq 3$.
The horizontal direction shows the $\epsilon$-blow up of the vertices, and the number of blowing-up
$\epsilon\to1$ directions are given by the I-degree of the vertex.

\begin{figure}[h]
\begin{center}
\includegraphics[height=3.2cm]{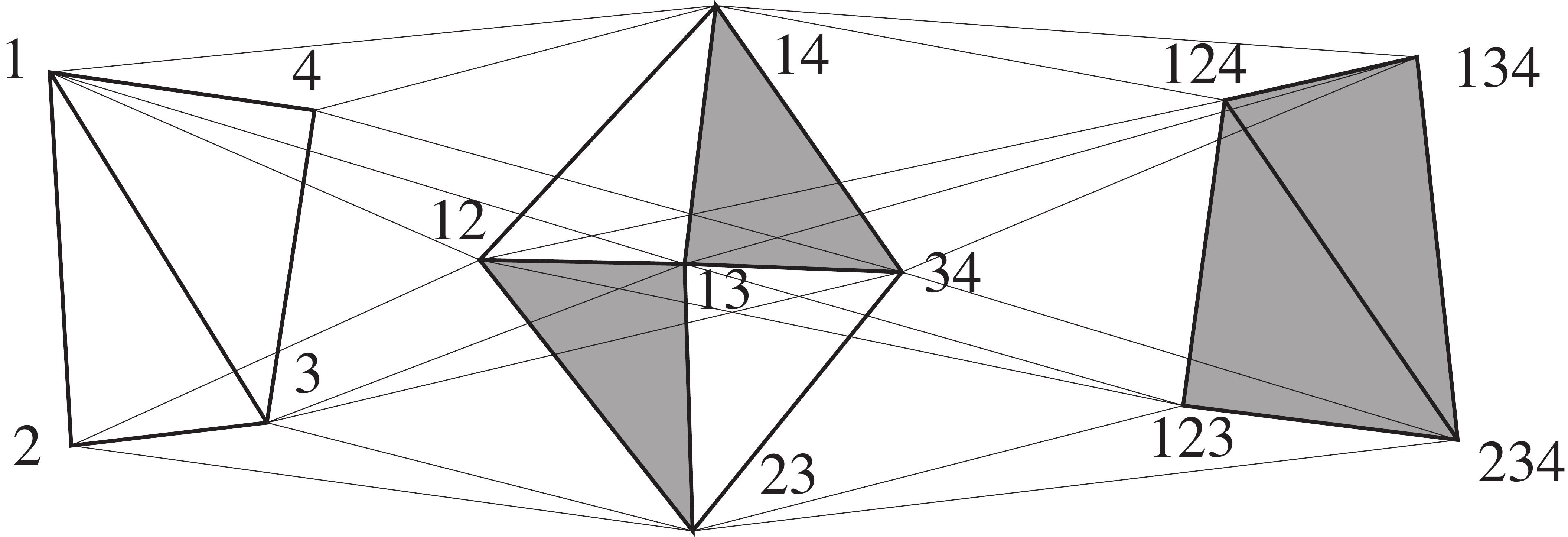}
\end{center}
\vskip-0.3cm
\caption{Zonotope structure in the $\epsilon$-blow up.  
Starting with the triangulation of the 4-gon $\Q_{1,4}$ at left, we have the triangulation $\Q_{2,4}$
in the middle by blowing-up each vertex of $\Q_{1,4}$. Note that each edge shrinks to a point
via the $\epsilon$-blow up when $\epsilon=1$.}
\label{fig:zonotope}
\end{figure}
 
Each vertex of a zonotopal tiling which lies in the plane $z = N$ has the form
 \[\pp_I = \sum_{i \in I} \pp_i\]
 for some $I \in {{[M]}\choose{N}}$, and the vertex labels $I$ form a maximal weakly separated collection $\mathcal{D}_N$ in ${{[M]}\choose{N}}$.  Intersecting a fine zonotopal tiling of $Z(3,M)$ with the plane $z = N$ gives a triangulated plabic tiling, i.e. soliton triangulation, with vertex set $\mathcal{D}_N$ \cite[Theorem 2.1]{galashin}. It follows from the above discussion that every soliton subdivision $\Q_{N,M}$ is combinatorially equivalent to a section of a zonotopal tiling, and is therefore a triangulated plabic tiling for $\Gr(N,M)_{>0}$.  Hence we recover the result of \cite{kodama} that soliton graphs for $\Gr(N,M)_{>0}$ are plabic graphs.
 
It follows from the discussion in \cite[Section 4]{galashin} that for $M \geq 2$, every plabic tiling for $\Gr(N,M)_{>0}$ may be obtained from some \emph{triangulated} plabic tiling of $\Gr(N-1,M)_{>0}$, by applying a purely combinatorial analog of the induction algorithm, Algorithm \ref{InductiveAlgorithm}.  The following observation, which we will use in the proofs of Theorem \ref{realize37} and Theorem \ref{realize38}, is immediate. 

\begin{lemma}
\label{tilings}
If every \emph{triangulated} plabic tiling for $\Gr(N-1,M)_{>0}$ is realizable, then every plabic tiling for $\Gr(N,M)_{>0}$ is realizable. 
\end{lemma}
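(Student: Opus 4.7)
The plan is to chain together the combinatorial induction from Galashin's work with the geometric induction given by Algorithm \ref{InductiveAlgorithm}. Let $\T$ be an arbitrary plabic tiling for $\Gr(N,M)_{>0}$; we want to exhibit a weight $\omega$ whose soliton subdivision $\Q_{N,M}(\omega)$ is (combinatorially) $\T$.

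First I would invoke the statement from \cite[Section 4]{galashin} quoted just before the lemma: there exists a \emph{triangulated} plabic tiling $\T'$ for $\Gr(N-1,M)_{>0}$ whose image under the purely combinatorial analog of the $\epsilon$-blow up (Definition \ref{def:e-blow-up}, at $\epsilon=1$) is $\T$. By the hypothesis of the lemma, $\T'$ is realizable, so there exists a weight vector $\omega$ such that $\Q_{N-1,M}(\omega) = \T'$ as triangulated plabic tilings.

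Next, apply the geometric $\epsilon$-blow up to $\Q_{N-1,M}(\omega)$ with the \emph{same} weight $\omega$. By Proposition \ref{prop:Induction}, letting $\epsilon\to 1$ produces a soliton subdivision $\Q_{N,M}(\omega)$ in which (a) each white triangle of $\Q_{N-1,M}(\omega)$ becomes a black triangle, (b) each black triangle shrinks to a single vertex, and (c) each vertex $\p$ of $\Q_{N-1,M}(\omega)$ with $\text{I-deg}(\p) = m \geq 3$ becomes a white $m$-gon. Crucially, in the non-triangulated setting we \emph{stop before} applying Step (2) of Algorithm \ref{InductiveAlgorithm}, i.e.\ we do not subdivide the white $m$-gons. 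This is exactly what distinguishes an arbitrary plabic tiling from a triangulated one, and it is precisely the output of the combinatorial blow up applied to $\T'$.

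The main point to verify, and the only genuine obstacle, is that the geometric blow up performed on $\Q_{N-1,M}(\omega)$ and the combinatorial blow up performed on $\T'$ produce the same labeled subdivision. But both operations are described by the same rules on the cell complex (Definition \ref{def:e-blow-up}), and both rules depend only on the adjacency data of the underlying triangulation, not on the specific geometric realization. Since $\Q_{N-1,M}(\omega) = \T'$ as triangulated plabic tilings, the two blow-ups coincide, and we obtain $\Q_{N,M}(\omega) = \T$. Hence $\T$ is realizable, completing the proof.
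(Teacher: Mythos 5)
Your proposal is correct and follows essentially the same route as the paper, which simply records the lemma as ``immediate'' from the preceding discussion: Galashin's result supplies a triangulated plabic tiling $\T'$ for $\Gr(N-1,M)_{>0}$ blowing up combinatorially to $\T$, the hypothesis realizes $\T'$ by some weight $\omega$, and Proposition \ref{prop:Induction} shows the geometric blow-up of $\Q_{N-1,M}(\omega)$ yields a soliton subdivision with the vertex labels of $\T$. One small imprecision: geometrically you cannot ``stop before Step (2)''---the weight $\omega$ fully determines $\Q_{N,M}(\omega)$, and a generic $\omega$ will triangulate the white polygons---but this is harmless, since realizability of the (non-triangulated) plabic tiling $\T$ is a statement about its weakly separated collection of vertex labels, which is unchanged by triangulating white tiles.
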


For any $M$, it is easy to see that triangulated plabic tilings for $\Gr(1,M)_{>0}$ are realizable. Hence we obtain Kodama and Williams' result \cite[Theorem 12.1]{kodama}, which states that all plabic tilings for $\Gr(2,M)_{>0}$ are realizable.  Moreover, from Algorithm \ref{InductiveAlgorithm} it is clear that these plabic tilings are in one-to-one correspondence with the triangulations of the $M$-gon, just as described in \cite{kodama}.


\section{KP hierarchy and the polyhedral structure of multi-time space}
\label{sec: polyhedral}
In this section, we define a \emph{polyhedral fan} structure on the space of \emph{multi-time parameters} $\mathbf{t}$ in the KP hierarchy. The \emph{cones} in this fan structure correspond to realizable soliton subdivisions $\Q_{N,M}$ for a fixed choice of $\kappa$ parameters.  That is, one can construct the soliton graph dual to the subdivision $\Q_{N,M}(\omega(\t))$ by choosing multi-time parameters $\t\in\R^{M-3}$
in the corresponding cone.  See Theorem \ref{realizable-checking-NM} for a precise statement.  We will apply these results to classify soliton graphs for $\Gr(3,6)_{>0}$.

\subsection{The KP hierarchy and the multi-time space}
\label{subsec: multi-time}
In this section, we show that we can realize triangulations corresponding to arbitrary weight vectors $\omega$ simply by varying the multi-time parameters $\mathbf{t}$.  We also review \emph{polyhedral cones}.

Recall that the piecewise linear function $f_{N,M}(x,y,{\bf t})$ for a KP soliton is given by
\begin{align*}
f_{N,M}(x,y,{\bf t})&={\text{max}}\left\{\Theta_I(x,y,{\bf t})=p_Ix+q_Iy+\omega_I({\bf t}):~~ I\in\binom{[M]}{N}\right\}\\[0.5ex]
&\text{with}\quad p_I=\sum_{i\in I}\k_i,\quad q_I=\sum_{i\in I}\k_i^2,\quad\omega_I=\sum_{i\in I}\omega_i({\bf t}),
\end{align*}
where  $\omega_i({\bf t})$ with ${\bf t}=(t_3,\ldots,t_{{M-1}})\in\R^{M-3}$ is given by
\[
\omega_i({\bf t})=\sum_{k=3}^{M-1}\kappa_i^kt_k.
\]
The point configuration for this case is
\[
\A^{\omega({\bf t})}_{N,M} = \left\{ \hat\p_I=(p_I,q_I,\omega_I({\bf t}))\in\mathbb{R}^3 \; : \; I \in {[M] \choose N} \right\}.
\]

Let $\Omega = (\Omega_1,\ldots,\Omega_M)$ be a weight vector, and let $Q_{N,M}(\Omega)$ be the corresponding soliton triangulation.  We claim that there exists $\bf{t}$ such that $Q_{N,M}(\omega(\mathbf{t}))$ is combinatorially equivalent to $Q_{N,M}(\Omega)$.

To see this, we first define the plane $\ell_i(t_0,x,y,\t):=t_0+\theta_i(x,y,{\bf t})$, and
consider the system $\ell_i=\Omega_i$ for $i=1,\ldots,M$, i.e.
\[
\begin{pmatrix}
1 & \kappa_1 &\kappa_1^2 &\cdots & \kappa_1^{M-1}\\
1&\kappa_2 &\kappa_2^2 &\cdots &\kappa_2^{M-1}\\
1&\kappa_3&\kappa_3^2&\cdots&\kappa_3^{M-1}\\
\vdots & \vdots &\vdots &\ddots &\vdots\\
1&\kappa_M&\kappa_M^2 &\cdots &\kappa_M^{M-1}
\end{pmatrix} 
\begin{pmatrix}
t_0 \\ x\\  y \\ \vdots \\ t_{M-1}
\end{pmatrix}=
\begin{pmatrix}
\Omega_1 \\ \Omega_2 \\ \Omega_3 \\ \vdots \\ \Omega_{M}
\end{pmatrix}.
\]
Since the coefficient matrix is the Vandermonde matrix with distinct $\kappa_j$'s,
the system has a unique solution, which we denote by
$(a_0,x_0,y_0,{\bf a})$ with ${\bf a}\in\R^{M-3}$.   Next, we consider the plane defined by
\[
z=\ell_i(a_0,x,y,{\bf a})=a_0+\kappa_ix+\kappa_i^2y+\omega_i({\bf a}),
\]
which can be rewritten in the form
\[
z=\kappa_i(x-x_0)+\kappa_i^2(y-y_0)+\Omega_i,\qquad i=1,\ldots,M.
\]
Setting $x_0 = y_0 = 0$ translates the contour plot in the $xy$-plane, but does not change its combinatorial structure.
In other words, setting $\mathbf{t} = \mathbf{a}$ gives a choice of multi-time parameters corresponding to the soliton triangulation with the $\k$-parameters and weight vector $\Omega$. Hence
giving the weight vector $\Omega$ is equivalent to choosing a particular direction in the time-space.  Our aim is to identify the polyhedral structure in the time-space
with $\t=(t_3,\ldots, t_{M-1})$ variables.  

The plane $\ell_i=\Omega_i$ in the $M$-dimensional space with $(t_0,x,y,\t)$
has the normal vector $(1,\kappa_i,\ldots,\kappa_i^{M-1})$, and
the hyperplane arrangements,
\[
\ell_i-\ell_j=\theta_i-\theta_j=0\qquad 1\le i<j\le M,
\]
divides $\mathbb{R}^{M}$ into $M!$ regions.
The dual to the set of those regions gives a permutohedron for the symmetric group
$S_M$.  

We now give a few definitions.
\begin{definition}\label{cone}\index{soliton graph!polyhedral cone}
A \emph{polyhedral cone} (or \emph{cone} for short) generated by a finite set of vectors 
\[\B_J:=\{{\bf b}_j:j\in J\subset [M]\}\] is defined by
\[
{\rm cone}(\B_J):=\left\{\sum_{j\in J}\lambda_j{\bf b}_j:~{\bf b}_j\in \B_J,~ \lambda_j\ge0\right\}.
\]
If the dimension of  ${\rm cone}(\B_J)$ is $k$, we define the \emph{relative interior} of ${\rm cone}(\B_J)$, denoted by 
${\rm relint}({\rm cone}(\B_J))$, 
as the collection of all points $\p$ in ${\rm cone}(\B_J)$ such that
 there exists a small ball $B_\p$ of dimension $k$ centered at $\p$ such that $B_\p \subset {\rm cone}(\B_J)$. That is, the relative interior is the interior with in the topology of the 
 subspace spanned by the cone.
 \end{definition}

\begin{example}
Consider $e_1 = (1,0,0)$, $e_2=(0,1,0)$ in $\R^3$. Then 
$\text{cone}(\{e_1,e_2 \})$ is the region $\{ (x,y,0) : \,x \ge 0, y \ge 0 \}$, and has dimension two; while $\text{relint(cone}(\{e_1,e_2 \}))$ is the region $\{ (x,y,0) :\;x > 0, y > 0 \}$, which is not an interior of the topology of $\R^3$.
\end{example}


\subsection{Polyhedral cones for $\Gr(1,M)_{>0}$ and $\Gr(2,M)_{>0}$}
\label{subsec: fan}
Let us first consider a non-generic subdivision of $\A_{1,M}^{\omega}$ induced from the weight $\omega = (-1,0,\cdots, 0)$. With this weight vector, the lifted points $\{\hat \p_2, \hat \p_3,\cdots, \hat \p_M\}$ have the same height, while $\hat \p_1$ is below the plane containing these points. Thus we have a non-generic subdivision with only one diagonal $\{2, M\}$. There exists a vector, say $\r_1^-$, in the $\t$-space $\R^{M-3}$, such that each point on the cone$\{\r_1^-\}$ generates this non-generic subdivision. The negative $-$ sign in the notation $\r_1^-$ means the $-1$ in the weight $\omega=(-1,0,\cdots, 0)$. Similarly, we define $\r_i^-$ for all other $i=2,\cdots, M$. We also define $\r_i^+ = -\r_i^-$, which will be explained in more detail below.  We call these $\r_i^\pm$ the \emph{main rays} in $\t$-space for non-generic subdivisions.

Before giving explicit coordinates for the vectors $\r_i^-$, we note the following lemma:
\begin{lemma}\label{XY-equivalence}
Two weights $\omega=(\omega_1,\cdots, \omega_M)$ and $\omega'=(\omega'_1,\ldots,\omega_M')$
with $\omega_i'=t_0+\k_ix_0+\k_i^2y_0+\omega_i$ for arbitrary $(t_0,x_0,y_0)$  give the same subdivision, that is, we have $\Q_{N,M}(\omega)=\Q_{N,M}(\omega')$.
\end{lemma}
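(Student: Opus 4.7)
The plan is to reduce the lemma to the elementary observation that adding an affine function of $(p,q)$ to all the heights of a lifted point configuration leaves the projection of its upper hull unchanged. The bulk of the work is simply to check that the hypothesized transformation on the individual weights $\omega_i$ induces exactly such an affine shift on the aggregate weights $\omega_I$.

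First, I would compute directly: for any $I \in \binom{[M]}{N}$,
\[
\omega_I' \;=\; \sum_{i \in I} \omega_i' \;=\; \sum_{i \in I}\bigl(t_0 + \kappa_i x_0 + \kappa_i^2 y_0 + \omega_i\bigr) \;=\; N t_0 + x_0\, p_I + y_0\, q_I + \omega_I.
\]
Thus the new lifted point configuration $\A^{\omega'}_{N,M}$ is obtained from $\A^{\omega}_{N,M}$ by replacing each $\hat\p_I = (p_I,q_I,\omega_I)$ with $\hat\p_I' = (p_I,q_I,\omega_I + L(p_I,q_I))$, where $L(p,q) := N t_0 + x_0 p + y_0 q$ is a fixed affine function on the $pq$-plane.

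Next, I would translate this into a statement about upper faces. Consider any candidate supporting plane $\mathcal{P}: z = ap + bq + c$ for $\P^{\omega}_{N,M}$. By the computation above, the relation $\omega_I \le ap_I + bq_I + c$ holds for all $I$ if and only if
\[
\omega_I' \;=\; \omega_I + L(p_I,q_I) \;\le\; (a+x_0)p_I + (b+y_0)q_I + (c + N t_0),
\]
which says that the plane $\mathcal{P}': z = (a+x_0)p + (b+y_0)q + (c + Nt_0)$ is a supporting plane for $\P^{\omega'}_{N,M}$. Moreover, equality in the first inequality for a given $I$ is equivalent to equality in the second, so the set of vertices of $\P^{\omega}_{N,M}$ lying in $\mathcal{P}$ is identical to the set of vertices of $\P^{\omega'}_{N,M}$ lying in $\mathcal{P}'$. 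Hence $\mathcal{P} \mapsto \mathcal{P}'$ sets up a bijection between upper supporting planes of $\P^{\omega}_{N,M}$ and those of $\P^{\omega'}_{N,M}$ such that the upper face defined by $\mathcal{P}$ has exactly the same vertex set (as a subset of $\binom{[M]}{N}$) as the upper face defined by $\mathcal{P}'$.

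Finally, since the soliton subdivision $\Q_{N,M}(\omega)$ is by definition the projection of the upper hull of $\P^{\omega}_{N,M}$ to the $pq$-plane, and this projection only depends on the combinatorial data of which index sets $I$ lie together on a common upper face, the bijection above shows $\Q_{N,M}(\omega) = \Q_{N,M}(\omega')$. There is essentially no obstacle here beyond the initial algebraic identity for $\omega_I'$; everything else is the standard fact that an affine tilt of a lifted configuration does not change its upper-hull subdivision.
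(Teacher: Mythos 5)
Your proof is correct. The key computation $\omega_I' = Nt_0 + x_0 p_I + y_0 q_I + \omega_I$ is right (using $p_I = \sum_{i\in I}\kappa_i$, $q_I = \sum_{i\in I}\kappa_i^2$), and the rest is the standard fact that adding an affine function of $(p,q)$ to the heights of a lifted configuration induces a bijection $\mathcal{P} \mapsto \mathcal{P}'$ on upper supporting planes preserving the vertex sets of the faces they cut out; since the $(p_I,q_I)$ themselves do not move, the projected subdivisions are literally identical. Where you differ from the paper is which side of the duality map you work on: the paper translates the $(x,y,z)$ coordinates by $(x+x_0, y+y_0, z-t_0)$ so that each plane $z=\theta_i(x,y)$ becomes $z = \kappa_i x + \kappa_i^2 y + \omega_i'$, and observes that dominance relations among planes are unaffected by translation; you instead work directly with the lifted point configuration $\A^{\omega}_{N,M}$ and its upper hull, which is the object actually appearing in the definition of $\Q_{N,M}(\omega)$. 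Your version is somewhat more self-contained, since it does not route through the dominant-plane picture and the (implicit) identification of the subdivision with the dual of the plane arrangement; it also makes explicit the $Nt_0$ constant that the paper's one-line translation argument glosses over when passing from the single-index planes $z=\theta_i$ to the $N$-subset planes $z=\Theta_I$. The paper's version, on the other hand, connects more immediately to the soliton-graph interpretation (invariance of the contour plot under translation of the $xy$-plane). Both arguments are sound and of comparable length.
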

\begin{proof}
Translating the coordinates $(x,y,z)$ by $(x+x_0,y+y_0,z-t_0)$, each plane 
\[z=\theta_i(x,y,{\bf t})=\k_ix+\k_i^2y+\omega_i({\bf t})\] becomes $z=\k_ix+\k_i^2y+\omega'_i({\bf t})$.
It is obvious that the dominance relation among the planes does not change under the translations of coordinates.
\end{proof}

\begin{remark}\label{rem:Gale}
Recall that for given weights $(\Omega_1,\ldots,\Omega_M)\in\mathbb{R}^M$,
one can find a unique point $(t_0,x,y,{\bf t})\in \R^M$ such that the planes $\ell_i$ are given by $t_0+\k_ix + \k_i^2y + \omega_i(\t) =\Omega_i$ for $i=1,\ldots,M$. 
Then Lemma \ref{XY-equivalence} implies that the subdivision can be determined by only
the time variable ${\bf t}=(t_3,\ldots,t_{M-1})\in \R^{M-3}$.
 \end{remark}

To find the vector $\r_i^-\in\R^{M-3}$, we consider the following system of equations
for $(t_0,x,y,t_3,\ldots,t_{M-1})$,
\begin{equation}\label{e:GaleV}
t_0+\k_ix+\k_i^2y+\sum_{k=3}^{M-1}\k_i^kt_k=\Omega_i\qquad\text{for}\quad i=1,\ldots,M,
\end{equation}
where we assign the weights $(\Omega_1,\ldots,\Omega_M)$ with $\Omega_j=-\delta_{i,j}$
(Kronecker delta).  Then, by Lemma \ref{XY-equivalence} and Remark \ref{rem:Gale}, the (column) vector $\r_i^-$ is given by
the last $M-3$ components in the solution of this system, i.e.
\[
\r^-_i=(t_3,t_4,\ldots,t_{M-1})^T \in\R^{M-3}.
\] 
Equation \eqref{e:GaleV} can be written in the $M\times M$ matrix form,
\[
R V=-Id\qquad\text{or}\qquad \sum_{k=1}^Mr_{i,k}\k^{k-1}_j=-\delta_{i,j},
\]
where $V = (\kappa_j^{i-1})_{1\le i,j\le M}$ is the Vandermonde matrix, and $Id$ is the identity matrix.  The solution matrix $R=-V^{-1}$ can be obtained by the Lagrange interpolation.  Consider the polynomial,
\[
p_i(\k)=\sum_{k=1}^Mr_{i,k}\k^{k-1}\qquad \text{with}\qquad p_i(\k_j)=-\delta_{i,j}.
\]
The Lagrange interpolation formula then gives
\[
p_i(\k)=-\prod_{l\ne i}\frac{\k-\k_l}{\k_i-\k_l}=\frac{-1}{\prod_{l\ne i}(\k_i-\k_l)}\sum_{k=1}^{M}(-1)^{M-k}e^{(i)}_{M-k}\k^{k-1},
\]
where $e^{(i)}_k$ is the $k$-th elementary symmetric polynomial of $(\k_1,\ldots,\hat\k_i,\ldots,\k_M)$
(missing  the $\k_i$ variable). Explicitly, we have

\[e^{(i)}_k = \sum_{\substack{1 \leq s_1 < s_2 \cdots < s_{k} \leq M \\ i \not\in \{s_1,\ldots,s_k\}}} \kappa_{s_1}\kappa_{s_2}\cdots \kappa_{s_k}.\]
 
Thus, we have
\begin{equation}\label{e:GaleVector}
r_{i,k}=\frac{-1}{\prod_{l\ne i}(\k_i-\k_l)}(-1)^{M-k}e^{(i)}_{M-k}.
\end{equation}
The vector $\r_i^-$ is then given by
\[
\r_i^-=(r_{i,4},r_{i,5},\ldots,r_{i,M})^T\in\R^{M-3}.
\]

\begin{definition}\label{def:GaleT}\index{Gale transform}\index{Gale transform!Gale vector}
The set of vectors $\B:=\{\r_1^-,\ldots,\r_M^-\}$ is called the \emph{Gale transform} of the point configuration
$\A_{1,M}=\{(\k_i,\k_i^2)\in\R^2:i=1,\ldots,M\}$ and the vectors $\r_i^{-}$ are referred to as the \emph{Gale vectors}. The Gale transform is defined formally using the following procedure, which applies more generally to point configurations (see e.g. \cite{DRS:10, T:06}).  Consider the $3\times M$ matrix representing $\A_{1,M}$,
\[A:=
\begin{pmatrix}
1&  1& \cdots & 1 \\
\kappa_1 &  \kappa_2 & \cdots & \kappa_{M} \\
\kappa_1^2 &  \kappa_2^2 & \cdots & \kappa_{M}^2 \\
\end{pmatrix},
\]
and consider the kernel of $A$ 
\[
{\rm ker}_{\R}(A):=\{\u \in \R^M \,:\, A\, \u=\mathbf{0}\}. 
\]
Let $\{\u_1, \cdots, \u_{M-3}\}$ be a basis for the vector space ${\rm ker}_{\R}(A)$. We organize these vectors as the columns of an $M \times (M-3)$ matrix $B$, so $AB=O_{3\times (M-3)}$, the
$3\times (M-3)$ zero matrix.
Then
\[
B := [\, \u_1, \, \u_2, \, \cdots \,, \u_{M-3}].
\] 
The $M$ ordered rows of $B$ give $B^T=[\r_1^-,\ldots,\r_M^-]$, which is the Gale transform $\B$.
\end{definition}

The Gale transform $\mathcal{B}$ is a useful tool to read off the polygons in the regular subdivision $\Q_{N,M}(\omega)$, and the faces of the polytope ${\sf P}^\omega=\text{conv}(\{\p_1, \cdots, \p_M\})$. 
The following theorem gives the method to check the regularity of a subdivision using the Gale transform.

\begin{theorem}[\cite{Le:91}]\label{Regular-checking}
 Let $\Q = \{\sigma_1, \cdots, \sigma_m \}$, $\sigma_i \subset [M]$ for $i=1,\cdots,m$,  be a subdivision of  a point configuration $\A$ and let $\B$ be a Gale transform of $\A$. Then $\Q$ is regular if and only if 
\[
\bigcap_{i=1}^m ~{\rm {relint(cone}}(\B_{\bar{\sigma_i}})) \neq \emptyset,
\]
where $\B_{\bar{\sigma}_i}:= \{\r_j^-:j\in\bar{\sigma}_i=[M]\setminus \sigma_i\}$.
\end{theorem}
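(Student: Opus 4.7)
The plan is to exploit the Gale-duality identity $AB=O$, which (by transposition) says that $B^T\omega=0$ if and only if $\omega$ is an affine function of the $\p_j$'s, i.e.\ $\omega_j=a+b\k_j+c\k_j^2$ for some constants $a,b,c$. By Lemma~\ref{XY-equivalence} such affine weights produce the same subdivision, so the quotient map $\pi:\omega\mapsto B^T\omega$ descends to an isomorphism between weight vectors modulo affine equivalence and $\R^{M-3}$. The Gale vectors are, by the construction carried out in \eqref{e:GaleVector}, the images $\r_j^-=\pi(-e_j)$ of the negative standard basis vectors. This reinterprets the theorem as a statement about which cones in $\R^{M-3}$ contain $B^T\omega$ for various $\omega$.

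Forward direction. Suppose $\Q$ is regular and realized by a weight $\omega$. For each maximal cell $\sigma_i\in\Q$ there is an affine function $L_i(p,q)=a_ip+b_iq+c_i$ such that $L_i(\p_j)=\omega_j$ for $j\in\sigma_i$ and $L_i(\p_j)>\omega_j$ for $j\in\bar{\sigma_i}$; this is precisely what it means for the lifted cell $\{\hat\p_j:j\in\sigma_i\}$ to be an upper $2$-face of ${\sf P}^\omega$. Viewing $L_i$ as the vector $(L_i(\p_1),\ldots,L_i(\p_M))\in\R^M$, which lies in the row space of $A$, Gale duality gives $B^TL_i=0$, so
\[
B^T\omega \;=\; B^T(\omega-L_i) \;=\; \sum_{j\in\bar{\sigma_i}}\bigl(\omega_j-L_i(\p_j)\bigr)\,B^T e_j \;=\; \sum_{j\in\bar{\sigma_i}}\bigl|\omega_j-L_i(\p_j)\bigr|\,\r_j^-,
\]
where the last step uses $\omega_j-L_i(\p_j)<0$ together with $\r_j^-=-B^Te_j$. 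All coefficients are strictly positive, so $B^T\omega\in\mathrm{relint}(\mathrm{cone}(\B_{\bar{\sigma_i}}))$ simultaneously for every $i$, and the intersection is non-empty.

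Converse direction. Let $\phi$ be a point in the common intersection. Since $B^T:\R^M\to\R^{M-3}$ is surjective, pick $\omega$ with $B^T\omega=\phi$. The hypothesis $\phi\in\mathrm{relint}(\mathrm{cone}(\B_{\bar{\sigma_i}}))$ gives, for each $i$, strictly positive coefficients $\lambda_j^{(i)}$ with $B^T\omega=\sum_{j\in\bar{\sigma_i}}\lambda_j^{(i)}\r_j^-$; unwinding the quotient produces an affine function $L_i$ with $L_i(\p_j)=\omega_j$ on $\sigma_i$ and $L_i(\p_j)-\omega_j=\lambda_j^{(i)}>0$ on $\bar{\sigma_i}$. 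Hence each $\sigma_i$ is an upper $2$-face of ${\sf P}^\omega$. Because the $\sigma_i$ already cover $\mathrm{conv}(\A)$ with pairwise disjoint interiors, they must be all of the maximal upper faces of ${\sf P}^\omega$, which shows $\Q=\Q_{N,M}(\omega)$ is regular.

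The main obstacle is twofold. First, one needs careful bookkeeping so that the sign convention for $\r_j^-$ (which records the weight $-e_j$) aligns with the sign of the slackness $\omega_j-L_i(\p_j)$, ensuring the combination really lies in the positive cone rather than its negative. Second, and subtler, is the last line of the converse: one must rule out that the regular subdivision induced by the reconstructed $\omega$ is a proper refinement of $\Q$. This is resolved by the observation that each $\sigma_i$ being an upper face forces all of its vertices onto a single supporting plane (so no refinement within $\sigma_i$ is possible), while the fact that $\Q$ already tiles $\mathrm{conv}(\A)$ prevents any merging of cells.
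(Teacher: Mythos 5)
The paper does not actually prove this statement: it is imported from Lee \cite{Le:91} as a black box, so there is no internal proof to compare against. Your argument is the standard Gale-duality proof of Lee's criterion, and it is essentially correct. The load-bearing identity is that $\ker(B^T)$ equals the row space of $A$, i.e.\ the space of weight vectors that are affine functions of the $\p_j$; hence the supporting affine function $L_i$ of an upper face can be subtracted from $\omega$ without changing its image in $\R^{M-3}$, leaving a strictly positive combination of the Gale vectors indexed by $\bar{\sigma}_i$, and conversely a point of the common relative interior lifts to a weight admitting, for each $i$, a supporting function that is tight exactly on $\sigma_i$. Your closing observation in the converse --- that the $\sigma_i$ must exhaust the upper $2$-faces because distinct upper faces have disjoint projections while the $\sigma_i$ already tile $\mathrm{conv}(\A)$ --- correctly closes the one step that is usually glossed over.

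Two points of bookkeeping deserve mention. First, your identification $\r_j^-=\pi(-e_j)$ with $\pi=B^T$ carries a sign discrepancy with the paper's conventions: by Definition \ref{def:GaleT} the rows of $B$ \emph{are} the $\r_j^-$, so $\r_j^-=B^Te_j$, and the weight-to-time map of Remark \ref{rem:Gale} is $-B^T$ rather than $B^T$ (one checks, e.g., on Example \ref{ex:Q15} that the two definitions of $\r_j^-$ in the paper agree under this sign). Since a global sign flip is a linear automorphism of $\R^{M-3}$, this does not affect the nonemptiness statement you are proving, but it would matter if one wanted to say \emph{which} $\t$ realizes which subdivision, as in Theorem \ref{cone2tile-NM}. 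Second, the strict inequalities $L_i(\p_j)>\omega_j$ for $j\in\bar{\sigma}_i$ in the forward direction tacitly assume that each cell $\sigma_i$ lists \emph{all} points of $\A$ whose lifts lie on the corresponding upper face; for the configurations used in this paper (points on a parabola, and their sums) no point of $\A$ can lie on a face without being listed among its vertices, so this is automatic, but for a general point configuration it is a convention one must impose for the theorem to hold as stated. Finally, both directions rest on the fact that the relative interior of a finitely generated cone is exactly the set of strictly positive combinations of its generators; this is standard, but since it is used twice it would be worth stating explicitly.
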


Note that if $\t\in{\rm cone}\{\r_i^-\}$, then the subdivision $\Q_{1,M}(\omega(\t))$ consists of
the triangle $\{i-1,i,i+1\}$ and the $(M-1)$-gon with the vertices $\{\p_j:j\in[M]\setminus\{i\}\}$.
One can then easily find that the subdivision having just one diagonal, say $\{i,j\}$, can be
constructed by choosing the time variable $\t$ in the following cone,
\[
{\rm cone}\{\B_{\sigma_{i,j}}\}~\cap~{\rm cone}\{\B_{\tau_{i,j}}\},
\]
where $\sigma_{i,j}$ and $\tau_{i,j}$ are defined by
\[
\sigma_{i,j}=\{i+1,\ldots,j-1\},\quad \tau_{i,j}=\{j+1,\ldots, i-1\}\quad (\text{in the cyclic order}),
\]
that is, $\sigma_{i,j}\cup\tau_{i,j}=[M]\setminus\{i,j\}$. Note that the dimension of the intersection is one:
Writing $j=i+k+1$ (mod $M$), then we have
\[
\text{dim}({\rm cone}\{\B_{\sigma_{i,j}}\})=k,\qquad \text{dim}({\rm cone}\{\B_{\tau_{i,j}}\})=M-k-2.
\]
Then we define a vector $\r_{[i,j]}$ such that
\[
{\rm cone}\{\r_{[i,j]}\}={\rm cone}\{\B_{\sigma_{i,j}}\}~\cap~{\rm cone}\{\B_{\tau_{i,j}}\}.
\]
Notice that $\r_i^-=\r_{[i-1,i+1]}$.  Then it is immediate that we have the following propositions:

\begin{proposition}\label{prop:Q1M-checking}
A white polygon $\P_{\sigma}$ with vertex set $\sigma\subset [M]$ shows up in the subdivision $\Q_{1,M}(\omega(\t))$ 
if and only if the time variable $\t\in\R^{M-3}$ belongs to the relative interior
${\rm {relint(cone}}\{\B_{\bar{\sigma}}\})$ where $\bar{\sigma} = [M] \setminus \sigma$.
\end{proposition}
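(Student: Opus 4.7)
The plan is to reduce this statement to a single-cell version of the Gale duality principle underlying Theorem \ref{Regular-checking}. I would proceed in three short steps: reformulate the condition geometrically, normalize via Lemma \ref{XY-equivalence}, and then invoke linearity of the defining system \eqref{e:GaleV} for the Gale vectors.

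First, I would observe that $\P_\sigma$ appears in $\Q_{1,M}(\omega(\t))$ if and only if the lifted vertex set $\{\hat\p_i : i \in \sigma\}$ forms an upper $2$-face of $\P^\omega_{1,M}$, which in turn is equivalent to the existence of an affine function $L(p,q) = ap + bq + c$ on $\R^2$ satisfying $L(\p_i) = \omega_i(\t)$ for every $i \in \sigma$ and $L(\p_j) > \omega_j(\t)$ for every $j \in \bar\sigma$. By Lemma \ref{XY-equivalence}, replacing $\omega_i(\t)$ by $\omega_i(\t) - L(\p_i)$ does not alter $\Q_{1,M}$, so without loss of generality I may normalize so that $L \equiv 0$. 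Setting $\eta_j := -\omega_j(\t)$, the upper-face condition becomes $\eta_i = 0$ for $i \in \sigma$ and $\eta_j > 0$ for $j \in \bar\sigma$, i.e.\ $\omega = -\sum_{j \in \bar\sigma} \eta_j\, e_j$ with $\eta_j > 0$.

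Second, I would use the defining property of the Gale vectors: by construction, $\r_j^-$ is the $\t$-component of the unique solution of the Vandermonde system \eqref{e:GaleV} when the right-hand side is $\Omega = -e_j$. Since \eqref{e:GaleV} is linear in $\Omega$, the $\t$-parameter corresponding to $\omega = -\sum_{j \in \bar\sigma}\eta_j e_j$ is
\[
\t \;=\; \sum_{j \in \bar\sigma} \eta_j\, \r_j^-,\qquad \eta_j > 0 \text{ for all } j \in \bar\sigma.
\]
By the standard fact that the relative interior of a finitely generated cone equals the set of strictly positive combinations of its generators (see \cite{DRS:10, T:06}), this is exactly the condition $\t \in \mathrm{relint}(\mathrm{cone}(\B_{\bar\sigma}))$. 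Both implications of the proposition follow at once by reading the chain of equivalences in either direction.

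The main obstacle I would expect is a bookkeeping one rather than a conceptual one: when $|\bar\sigma|$ exceeds $M - 3$ the Gale vectors $\{\r_j^- : j \in \bar\sigma\}$ are linearly dependent, so the representation $\t = \sum \eta_j \r_j^-$ is not unique, and one must be careful that the identification of ``some strictly positive combination exists'' with ``$\t$ lies in the relative interior'' genuinely holds in both directions. This is where the standard cone lemma cited above is essential, and once invoked the rest of the argument is immediate.
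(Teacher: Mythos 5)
Your proof is correct and follows essentially the same route as the paper: the paper presents this proposition as an immediate consequence of the Gale‑vector construction together with Lemma \ref{XY-equivalence}, and the argument you supply (normalize the supporting affine function to zero, use linearity of the Vandermonde system \eqref{e:GaleV} to get $\t=\sum_{j\in\bar\sigma}\eta_j\r_j^-$ with $\eta_j>0$, then identify strictly positive combinations with the relative interior) is exactly the reasoning the paper spells out when proving the more general Theorem \ref{cone2tile-NM}. Your explicit handling of the relative-interior identification when the generators are linearly dependent is a welcome detail that the paper leaves implicit.
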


\begin{proposition}\label{prop:Q1Msub}
A subdivision $\Q_{1,M}(\omega(\t))$ has the diagonals $\{i_l,j_l\}$ for $l=1,\ldots,m$, if and only if
\[
\t~\in~{\rm relint}\left({\rm cone}\{\r_{[i_l,j_l]}:l=1,\ldots,m\}\right).
\]
\end{proposition}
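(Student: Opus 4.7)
The plan is to reduce to Proposition \ref{prop:Q1M-checking} and then carry out a combinatorial identification of cones. First, I would observe that the non-crossing diagonals $\{i_l,j_l\}_{l=1}^m$ partition the $M$-gon $\P_{1,M}^0$ into exactly $m+1$ sub-polygons $\P_{\sigma_1},\ldots,\P_{\sigma_{m+1}}$, where $\sigma_k \subset [M]$ is the vertex set of the $k$-th piece. The subdivision $\Q_{1,M}(\omega(\t))$ has exactly these diagonals if and only if all $m+1$ of the polygons $\P_{\sigma_k}$ appear. By Proposition \ref{prop:Q1M-checking}, this happens if and only if
\[
\t~\in~\bigcap_{k=1}^{m+1}{\rm relint}\bigl({\rm cone}(\B_{\bar{\sigma_k}})\bigr).
\]
So the proof reduces to identifying this intersection with ${\rm relint}({\rm cone}\{\r_{[i_l,j_l]}:l=1,\ldots,m\})$.

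For the containment $\supseteq$, I would use the defining intersection
\[
{\rm cone}\{\r_{[i_l,j_l]}\}={\rm cone}(\B_{\sigma_{i_l,j_l}})\cap{\rm cone}(\B_{\tau_{i_l,j_l}})
\]
together with the simple fact that each sub-polygon $\P_{\sigma_k}$ lies entirely on one side of each diagonal $\{i_l,j_l\}$. Concretely, $\sigma_k$ is contained in $\sigma_{i_l,j_l}\cup\{i_l,j_l\}$ or $\tau_{i_l,j_l}\cup\{i_l,j_l\}$, so $\bar{\sigma_k}$ contains either $\tau_{i_l,j_l}$ or $\sigma_{i_l,j_l}$. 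In either case ${\rm cone}(\B_{\bar{\sigma_k}})$ contains $\r_{[i_l,j_l]}$. Hence each $\r_{[i_l,j_l]}$ lies in every ${\rm cone}(\B_{\bar{\sigma_k}})$, and taking positive combinations gives the desired containment at the level of cones.

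For the reverse direction I plan to argue by a dimension comparison. I would first show that the vectors $\{\r_{[i_l,j_l]}:l=1,\ldots,m\}$ are linearly independent, so the right-hand cone has dimension exactly $m$. This can be done by induction on $m$: each diagonal can be added consistently to a partial triangulation, so appending $\r_{[i_m,j_m]}$ introduces a genuinely new direction, since by the single-diagonal case this vector is supported on a cone $\B_{\bar{\sigma_k}} \cap \B_{\bar{\sigma_{k'}}}$ that is transverse to the span of the earlier generators. Completing the partial set of diagonals to a full triangulation (possible since non-crossing diagonals extend) and invoking Proposition \ref{prop:Q1M-checking} in the fine triangulation case gives a full-dimensional cone $(M-3)$-dimensional split by the diagonals into $m$-dimensional faces, yielding the needed dimension count.

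Finally, once both sides are known to have the same dimension $m$ and one is contained in the other, I would conclude equality of the cones, hence equality of the relative interiors. The main obstacle I anticipate is the linear-independence/dimension step: the containment direction is essentially formal from the definitions, but verifying that no unexpected degeneracies occur among the $\r_{[i_l,j_l]}$ for arbitrary non-crossing configurations is where the real content lies. The natural way around this is to use the fact that any non-crossing diagonal set extends to a triangulation, reducing to the maximal case where the cone is $(M-3)$-dimensional and the correspondence between diagonals and rays is forced by Proposition \ref{prop:Q1M-checking}.
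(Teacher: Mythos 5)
The paper does not actually supply a proof of this proposition: it is declared ``immediate'' from Theorem \ref{Regular-checking} and the construction of the rays $\r_{[i,j]}$, so there is no detailed argument to compare against. Your reduction is nonetheless the right skeleton: identifying the subdivision with its set of tiles $\P_{\sigma_1},\ldots,\P_{\sigma_{m+1}}$ and applying Proposition \ref{prop:Q1M-checking} to each tile converts the claim into the cone identity $\bigcap_{k}{\rm relint}\bigl({\rm cone}(\B_{\bar\sigma_k})\bigr)={\rm relint}\bigl({\rm cone}\{\r_{[i_l,j_l]}:l=1,\ldots,m\}\bigr)$. The problem is how you establish that identity.

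The fatal step is the last one: for convex cones, containment plus equality of dimension does \emph{not} imply equality. For example ${\rm cone}\{(1,0),(1,1)\}$ is properly contained in ${\rm cone}\{(1,0),(0,1)\}$ and both are two-dimensional. So even granting your linear-independence claim and your dimension count for $\bigcap_k{\rm cone}(\B_{\bar\sigma_k})$ (both of which are only sketched, and the latter comes close to assuming the face structure you are trying to prove), the conclusion does not follow. There is also a smaller gap in the forward direction: showing $\r_{[i_l,j_l]}\in{\rm cone}(\B_{\bar\sigma_k})$ for all $k,l$ gives containment of \emph{closed} cones, but a positive combination of the $\r_{[i_l,j_l]}$ could a priori land on the boundary of some ${\rm cone}(\B_{\bar\sigma_k})$, so containment of relative interiors needs its own argument. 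The missing ingredient is the face structure of the secondary fan: the closures of the regions $\{\t:\Q_{1,M}(\omega(\t))=S\}$ form a polyhedral fan in which the cone of $S'$ is a face of the cone of $S$ precisely when $S$ refines $S'$; hence the extreme rays of the (pointed, $m$-dimensional) cone of your subdivision are the cones of its single-diagonal coarsenings, namely the rays ${\rm cone}\{\r_{[i_l,j_l]}\}$ by the one-diagonal case worked out just before the proposition, and a pointed polyhedral cone is the positive hull of its extreme rays. That is the standard fact the authors are implicitly invoking (see \cite{DRS:10}); without it, or an explicit substitute, the identification of the two cones is not established.
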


Since any triangulation of the $M$-gon has $M-3$ diagonals, Proposition \ref{prop:Q1Msub} implies that one can realize a unique triangulation $\Q_{1,M}(\omega(\t))$ with $M-3$ diagonals $\{i_l,j_l\}$ for $l=1,\ldots,M-3$ by choosing $\t$ in the proposition.

\begin{example}\label{ex:Q15}
Consider the point configuration for $N=1$ and $M=5$:
\[
\A_{1,5}=\{(\k_i,\k_i^2):i=1,\ldots,5\},
\]
where we take $(\k_1,\ldots,\k_5)=(-2,-1,0,1,2)$.  Then, from \eqref{e:GaleVector},  the Gale transform $\B=\{\r_1^-,\ldots,\r_5^-\}$ is given by
\[
\r_1^-=\frac{1}{12}\begin{pmatrix} 2\\-1\end{pmatrix},\quad
\r_2^-=\frac{1}{6}\begin{pmatrix}-1\\1\end{pmatrix},\quad
\r_3^-=\frac{1}{4}\begin{pmatrix}0\\-1\end{pmatrix},\quad
\r_4^-=\frac{1}{6}\begin{pmatrix}1\\1\end{pmatrix},\quad
\r_5^-=\frac{1}{12}\begin{pmatrix}-2\\1\end{pmatrix}.
\]
The left figure in Fig.~\ref{fig:triangulationP15} illustrates the triangulations of the pentagon $\P_{1,5}$.
Each triangulation for $\P_{1,5}$ can be obtained by using Proposition \ref{prop:Q1M-checking}.
For example, the triangulation containing the triangle $\sigma=\{1,3,5\}$ can be obtained by choosing 
a point $(t_3,t_4)$ in the cone spanned by $\r_2^-$ and $\r_4^-$ (note $\{2,4\}=[5]\setminus\{1,3,5\}$), i.e.
\[
\t=(t_3,t_4)~\in~{\rm relint}\left({\rm cone}\{\r^-_2,\r^-_4\}\right).
\]
One should note that the triangulation $\Q_{1,5}=\{\sigma_1,\sigma_2,\sigma_3\}$ with
$\sigma_1=\{1,2,3\},\, \sigma_2=\{1,3,5\},\, \sigma_3=\{3,4,5\}$ can be realized with $\t=(t_3,t_4)$
in relint(cone$\{\B_{\bar\sigma_2}\}$), since
\[
{\rm cone}\{\B_{\bar\sigma_1}\}~\cap~{\rm cone}\{\B_{\bar\sigma_2}\}~\cap~{\rm cone}\{\B_{\bar\sigma_3}\}~=~{\rm cone}\{\B_{\bar\sigma_2}\}.
\] 

We also remark that in terms of the determinant $D_{i_1,i_2,i_3,i_4}$ in \eqref{e:4-gonD},
each direction $\r_i^-$ can be described by $D_{[5]\setminus i}=0$, i.e. the vertices
$\{\p_j:j\in[5]\setminus i\}$ are coplanar.
\end{example}
\begin{figure}[t!]
\begin{center}
\includegraphics[height=6cm]{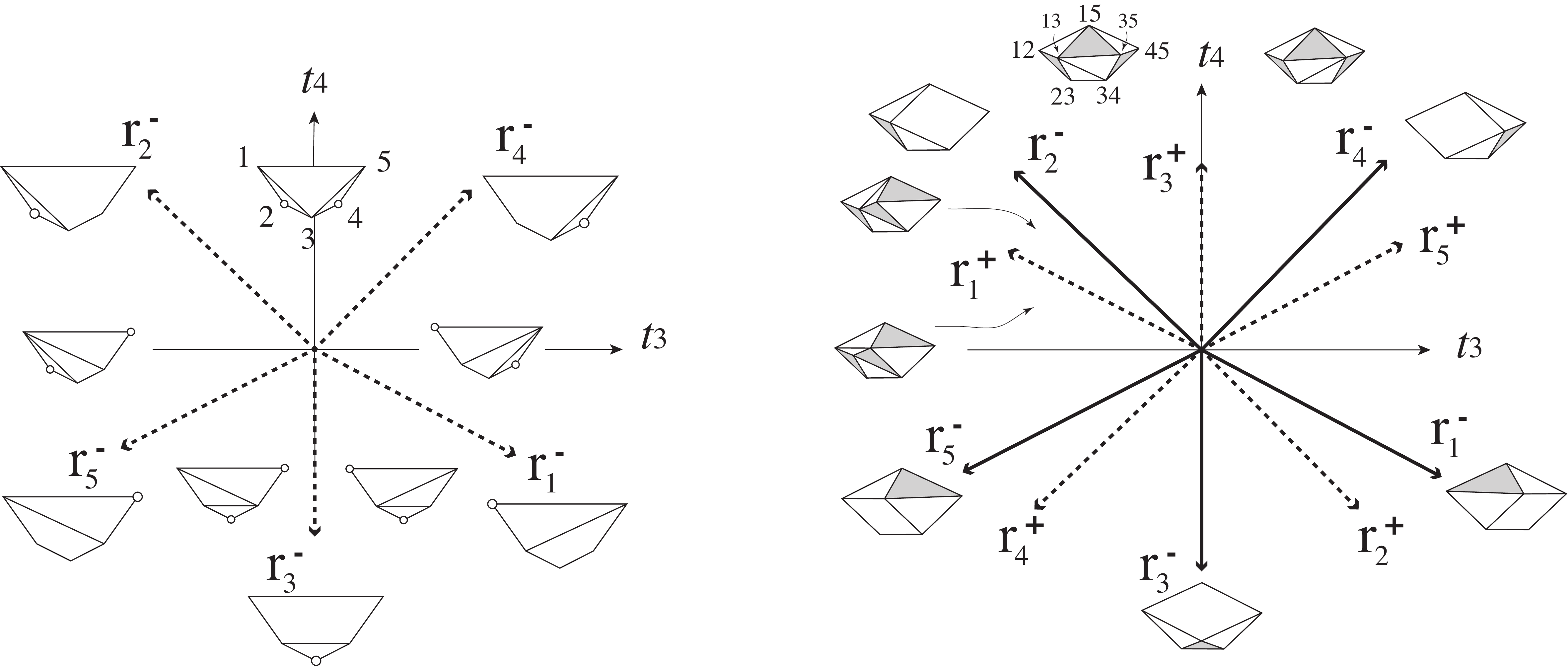}
\end{center}
\caption{Triangulations $\Q_{1,5}(\omega(\t))$ (left) and $\Q_{2,5}(\omega(\t))$ (right) for the weights $\omega_i(\t)=\k_i^3t_3+\k_i^4t_4$.
The set  $\B=\{\r_i^-:i=1,\ldots,5\}$ is the Gale transform and $\r_i^+=-\r_i^-$.
Here the $\k$-parameters are given by $(\k_1,\ldots,\k_5)=(-2,-1,0,1,2)$.
}
\label{fig:triangulationP15}
\end{figure}

Now we consider the subdivision for the configuration $\A_{2,M}^{\omega(\t)}$. We use Algorithm \ref{InductiveAlgorithm} 
to construct $\Q_{2,M}(\omega(\t))$. The right figure in Fig.~\ref{fig:triangulationP15} shows
the subdivisions obtained from the triangulations $\Q_{1,M}(\omega(\t))$ in the left figure for $M=5$. Each black triangle  in $\Q_{2,M}(\omega(\t))$ corresponds to a white triangle in $\Q_{1,M}(\omega(\t))$
for the same $\t$.  Notice that the dotted lines in the left figure become 
the solid lines which are the boundaries corresponding to the \emph{black-white flips}. For example, the solid line of $\r_2^-$ is the boundary corresponding to the black-white flip of the parallelogram $\{13, 34, 45, 15\}$.

The subdivision obtained by the algorithm contains some white $k$-gons where $k$ is given by the degree of the corresponding vertex. More precisely, such a $k$-gon has the index set $\{i_0i_1,i_0i_2,\ldots,i_0i_k\}\subset \binom{[M]}{2}$ for a common index $i_0\in[M]$ where $i_0$ is the index of the vertex $\p_{i_0}=(\k_{i_0},\k^2_{i_0})$ in the $M$-gon $\P_{1,M}$. Then one can triangulate this white polygon using the algorithm for the configuration $\A_{1,k}=\{\p_{i_1},\ldots,\p_{i_k}\}$.  
For example,
consider the subdivision $\Q_{2,5}$ in cone$\{\r_2^-,\r_4^-\}$ which has the white quadrilateral with vertex set
$\{13,23,34,35\}$.  Then 3 is the common index, and we triangulate the $\A_{1,4}=\{\p_1,\p_2,\p_4,\p_5\}$.  A triangulation is given by choosing the triangle $\{1,2,4\}$ as shown in Fig.~\ref{fig:triangulationP15}, and it is obtained by choosing the time variable 
\[
\t=(t_3,t_4)~\in~{\rm relint}\left({\rm cone}\{\r_5^-,\r_3^+\}~\cap~{\rm cone}\{\r_2^-,\r_4^-\}\right).
\]
The $\r_3^+$ in the first cone indicates the dominant (or common) index, and the $\r_5^-$ indicates
the missing index in $\A_{1,4}$ for the white 4-gon. Note that  the intersection is also given by
${\rm cone}\{\r_2^-,\r_3^+\}$ which is obtained by taking the other triangle $\{1,4,5\}$ in the 4-gon.
In the next section, we discuss the general case.

\subsection{A realizability theorem for $\Q_{N,M}(\omega(\t))$}
\label{subsec: main theorem}
We first define the following pair of indices $(K_\sigma^+, K_\sigma^-)$ for the triangles $\sigma=\{I,J,L\}$ with $I,J,L\in \binom{[M]}{N}$ in the subdivision $\Q_{N,M}(\omega)$:
\begin{itemize}
\item[(a)] If $\sigma$ is a {white} triangle, the vertices 
of the triangle $\sigma$ are expressed by
$\{K_{\sigma}a,K_{\sigma}b,K_{\sigma}c\}$ for some $K_{\sigma}\in\binom{[M]}{N-1}$.
We define a pair of indices $(K_\sigma^+,K_{\sigma}^-)$ as
\begin{equation}\label{def:whiteK}
K_{\sigma}^+:=K_{\sigma},\quad K_{\sigma}^-=[M]\setminus (K_{\sigma} \cup\{a,b,c\}).
\end{equation}
\item[(b)] If $\sigma$ is a {black} triangle,  the vertices are expressed by
$\{K_{\sigma}\setminus a,K_{\sigma}\setminus b,K_{\sigma}\setminus c\}$ for some $ K_{\sigma}\in\binom{[M]}{N+1}$. We then define $(K_\sigma^+,K_\sigma^-)$ as
\begin{equation}\label{def:blackK}
K_{\sigma}^+:=K_{\sigma}\setminus\{a,b,c\},\quad K_{\sigma}^-=[M]\setminus K_{\sigma}.
\end{equation}
\end{itemize}
That is,  $K_\sigma^+$ represents the common indices, and $K^-_\sigma$ represents the missing indices
for the triangle $\sigma$.  Also notice that $|K_\sigma^+|+|K_\sigma^-|=M-3$.
Then we have:
\begin{theorem}\label{cone2tile-NM}
A subdivision $\Q_{N,M}(\omega(\t))$ contains a triangle $\sigma$
if and only if 
\begin{align*}
\t~\in~{\rm relint}\left({\rm cone}\left\{{\bf r}_\alpha^+,{\bf r}_\beta^-:\alpha\in K_\sigma^+,
\beta\in K^-_{\sigma}\right\}\right).
\end{align*}
The dimension of the cone is $M-3$, i.e. the full dimension of the $\bf{t}$-space.
\end{theorem}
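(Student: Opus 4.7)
The plan is to bypass induction on $N$ and translate the geometric condition ``$\sigma$ is an upper $2$-face of $\P_{N,M}^{\omega(\t)}$'' directly into a sign condition on the single-index weights $\omega_i(\t)$, then apply the Gale-transform setup of Section \ref{subsec: fan} to convert that sign condition into the stated cone membership. The same argument will handle the white and black cases of \eqref{def:whiteK}--\eqref{def:blackK} uniformly, because the sign pattern produced only sees $K_\sigma^\pm$ and the three ``variable'' indices $\{a,b,c\}=[M]\setminus(K_\sigma^+\cup K_\sigma^-)$.

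First I would record the face condition: $\sigma$ lies in $\Q_{N,M}(\omega(\t))$ iff there exist $a_0,b_0,\gamma\in\R$ with $a_0 p_I+b_0 q_I+\gamma=\omega_I(\t)$ at each of the three vertices $I$ of $\sigma$ and $a_0 p_L+b_0 q_L+\gamma\ge \omega_L(\t)$ for every other $L\in\binom{[M]}{N}$. Setting $\omega_i':=\omega_i(\t)-a_0\kappa_i-b_0\kappa_i^2$ and using the additivity $\omega_L(\t)=\sum_{i\in L}\omega_i(\t)$, the three equality constraints collapse---in both the white and black cases---to the single assertion $\omega_a'=\omega_b'=\omega_c'=C$ for a common constant $C$. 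A case analysis on $|L\cap K_\sigma^+|$ then shows that the remaining inequalities are equivalent to the sign pattern $\omega_m'\ge C$ for $m\in K_\sigma^+$ and $\omega_l'\le C$ for $l\in K_\sigma^-$; the tight extremes---$L=K_\sigma^+\cup\{l\}$ and $L=(K_\sigma^+\setminus\{m\})\cup\{a,b\}$ in the white case, and their analogues in the black case---supply both directions of the implication.

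Next, absorbing $C$ into a constant $c_0$ by setting $\omega_i'':=\omega_i'-C$, the existence of $(a_0,b_0,c_0)$ realizing the sign pattern on $\omega_i''$ becomes the solvability of the Vandermonde-type system
\[
-c_0-a_0\kappa_i-b_0\kappa_i^2+\sum_{k=3}^{M-1}\kappa_i^k t_k=\omega_i''\qquad (i=1,\ldots,M).
\]
By the Lagrange-interpolation derivation of \eqref{e:GaleVector} and Lemma \ref{XY-equivalence}, the last $M-3$ coordinates of its unique solution depend linearly on $(\omega_1'',\ldots,\omega_M'')$ and equal $\t=\sum_{i=1}^{M}\omega_i''\,\r_i^+$. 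The sign pattern, together with $\r_l^+=-\r_l^-$ for $l\in K_\sigma^-$, then writes $\t$ as a nonnegative combination of $\{\r_\alpha^+:\alpha\in K_\sigma^+\}\cup\{\r_\beta^-:\beta\in K_\sigma^-\}$; strict inequalities in the face condition---equivalently, $\sigma$ being an honest $2$-face rather than a facet of some coarser subdivision---correspond exactly to the relative interior of the cone, giving the asserted equivalence.

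Finally, the cone has $|K_\sigma^+|+|K_\sigma^-|=M-3$ generators in $\R^{M-3}$, and by a standard Gale-duality fact any $M-3$ of the Gale vectors $\r_i^-$ of $\A_{1,M}$ are linearly independent iff the complementary three points are affinely independent. Here that complement is $\{\p_a,\p_b,\p_c\}$, three distinct points on the parabola $q=p^2$, which are always affinely independent, so the cone has full dimension $M-3$. The main obstacle in executing this plan is the inequality analysis of the second paragraph: while the tight extremal $L$'s immediately force the sign pattern as a \emph{necessary} condition, verifying \emph{sufficiency} for $L$ with small $|L\cap K_\sigma^+|$ (where $L$ differs from the vertices of $\sigma$ in many indices) requires a short telescoping argument that combines the bounds on $\omega_m'$ over $K_\sigma^+$ with those on $\omega_l'$ over $K_\sigma^-$.
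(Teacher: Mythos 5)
Your proposal is correct and follows essentially the same route as the paper's proof: both translate membership of $\t$ in the cone into a sign pattern on the single-index weights (positive on $K_\sigma^+$, negative on $K_\sigma^-$, constant on the three remaining indices $\{a,b,c\}$ after the affine normalization of Lemma \ref{XY-equivalence}), and then identify that sign pattern with the condition that the three lifted vertices of $\sigma$ tie for the maximum while all other lifted points lie strictly below. You fill in several steps the paper leaves implicit --- the sufficiency check for arbitrary $L$ via the tightness computation, the uniform treatment of the black case, and the proof of the dimension claim via Gale duality --- but the underlying argument is the same.
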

\begin{proof}
We here consider only the white triangle case with $\{K_\sigma a,K_\sigma b,K_\sigma c\}$ 
and $K_\sigma=K^+_\sigma$ (the other case is similar). Recall that the time $\t\in{\rm cone}\{\r_i^-\}$ corresponds to a subdivision $\Q_{N,M}(\omega(\t))$
having the weight $\omega_k=-\delta_{i,k}$ for $k=1,\ldots,M$.
Similarly, the time $\t\in{\rm cone}\{\r_i^+\}$ with $\r_i^+=-\r_i^-$ implies the weight  $\omega_k=+\delta_{i,k}$ for $k=1,\ldots,M$.

First assume that $\t\in{\rm relint(cone}\{{\bf r}_\alpha^+,{\bf r}_\beta^-:\alpha\in K_\sigma^+,\,
\beta\in K^-_{\sigma}\})$.
This means that the vertices $\{\hat \p_{K_{\sigma}a},\hat \p_{K_{\sigma}b},\hat \p_{K_{\sigma}c}\}$ have the same positive weight, and all other vertices in $\A_{N,M}^\omega$ have smaller weights. Thus the subdivision contains the triangle $\{K_\sigma a,K_\sigma b,K_\sigma c\}$. 

Now assume that the points $\{\hat \p_{K\sigma a},\hat \p_{K_\sigma b},\hat \p_{K_\sigma c}\}$ form an upper $2$-face of the polytope $\P_{N,M}^\omega$.   Then, from Lemma \ref{XY-equivalence}, one can find  $(c,x_0,y_0)$ such that 
each point in $\{\hat \p_{K_\sigma a},\hat \p_{K_\sigma b},\hat \p_{K_\sigma c}\}$ has the same positive weight while all other vertices have smaller weights. This means that $\t$ is a point in

\[{\rm relint(cone}\{{\bf r}_\alpha^+,{\bf r}_\beta^-:\alpha\in K_\sigma^+,\,
\beta\in K^-_{\sigma}\}).\]
\end{proof}

Using Theorem \ref{cone2tile-NM} we now obtain the main theorem, which can be proven in the same way as in the proof of Theorem \ref{cone2tile-NM}.
\begin{theorem}\label{realizable-checking-NM}
A triangulation $\Q_{N,M}(\omega(\t))$ having the set of triangles $\triangle=\{\sigma_1,\cdots,\sigma_m\}$ is realizable if and only if the following set is not empty, i.e.
\begin{align*}
&\bigcap_{\sigma\in\triangle}{\rm relint}\left({\rm cone}\left\{{\bf r}_\alpha^+,{\bf r}_\beta^-:\alpha\in K_\sigma^+,
\beta\in K^-_{\sigma}\right\}\right)~\ne~\emptyset.
\end{align*}
If the set is empty, then the subdivision is not {realizable}.
\end{theorem}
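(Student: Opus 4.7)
The plan is to reduce the theorem to Theorem \ref{cone2tile-NM} by applying it triangle-by-triangle, and then exploit the fact that a triangulation is determined by the collection of triangles it contains. The key observation is that realizability of the full triangulation $\Q_{N,M}(\omega(\t))$ with triangle set $\triangle$ is equivalent to the simultaneous occurrence of every $\sigma\in\triangle$ as an upper $2$-face of $\PP^\omega_{N,M}$, so the global condition on $\t$ is an intersection of local conditions.

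For the forward direction, suppose $\Q_{N,M}(\omega(\t))$ is realizable with triangle set $\triangle$. Then for each $\sigma\in\triangle$, the triangle $\sigma$ appears in $\Q_{N,M}(\omega(\t))$, so Theorem \ref{cone2tile-NM} applies directly to give
\[
\t\in{\rm relint}\left({\rm cone}\left\{\r_\alpha^+,\r_\beta^-:\alpha\in K_\sigma^+,\beta\in K_\sigma^-\right\}\right).
\]
Intersecting over all $\sigma\in\triangle$ yields the required non-empty intersection.

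For the reverse direction, pick any $\t$ in the (non-empty) intersection. Then for each $\sigma\in\triangle$, $\t$ lies in the corresponding open cone, so by Theorem \ref{cone2tile-NM} the triangle $\sigma$ appears in the subdivision $\Q_{N,M}(\omega(\t))$. Since $\triangle$ is already a triangulation of $\PP_{N,M}$ (its triangles cover the polygon with disjoint interiors and correct edge-matching, inherited from the assumed triangulation structure), and every triangle in $\triangle$ is realized as an upper $2$-face of $\PP^\omega_{N,M}$, the soliton subdivision $\Q_{N,M}(\omega(\t))$ must contain precisely $\triangle$: any extra edge or differently-shaped face would contradict the covering by the $\sigma\in\triangle$. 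Hence $\Q_{N,M}(\omega(\t))=\triangle$ and the triangulation is realizable. The contrapositive of this argument gives the final assertion that emptiness of the intersection implies non-realizability.

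The step that requires some care is the last one: arguing that having all triangles of $\triangle$ appear forces the subdivision to equal $\triangle$, rather than a strict refinement or incompatible collection. This is where I would be most careful, using the fact that $\bigcup_{\sigma\in\triangle}\sigma={\sf P}_{N,M}$ together with the uniqueness of the upper hull of $\PP^\omega_{N,M}$: if $\t$ lies in the open cone associated to $\sigma$, then $\sigma$ is a genuine $2$-face (not merely contained in a larger upper face), and pairwise adjacency of the $\sigma\in\triangle$ forces consistency with the upper-hull structure of $\PP^\omega_{N,M}$. Other than this bookkeeping, everything follows immediately from Theorem \ref{cone2tile-NM}.
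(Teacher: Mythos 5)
Your proposal is correct and follows essentially the same route as the paper: the paper's own proof is the single remark that the theorem "can be proven in the same way as in the proof of Theorem \ref{cone2tile-NM}," i.e.\ by applying that theorem triangle-by-triangle and intersecting the resulting relatively open cones, which is exactly what you do. Your extra care in the reverse direction --- noting that since the triangles of $\triangle$ tile the whole $M$-gon and each is forced to be an upper $2$-face, the subdivision must coincide with $\triangle$ --- is a detail the paper leaves implicit, and it is handled correctly.
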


For $N > 1$, it is sometimes useful to consider subdivisions of $\A_{N,M}^{\omega(\t)}$ modulo the triangulation of each black or white region. For a soliton subdivision $\Q_{N,M}(\omega(\t))$, we let $\Q_{N,M}(\omega(\t))/\sim w$ be the coarser subdivision obtained by merging any two white polygons that share an edge; let $Q_{N,M}(\omega)/\sim b$ be the subdivision which results from merging any two black polygons that share an edge; and let $\Q_{N,M}(\omega(\t))/\sim wb$ denote the case where we merge both black and white polygons.  We note that Algorithm \ref{InductiveAlgorithm} gives a purely combinatorial recipe for constructing $\Q_{N+1,M}(\omega(\mathbf{t}))/\sim {w}$, and hence $\Q_{N+1,M}(\omega(\mathbf{t}))/\sim wb$, from $\Q_{N,M}(\omega(\mathbf{t}))/\sim b$

\begin{definition}
For a given choice of the $\k$-parameters, we let $\widetilde{\mathcal{F}}_{N,M}$ denote the polyhedral fan whose maximal cones correspond to triangulations $\Q_{N,M}(\omega(\t))$.
We let $\mathcal{F}_{N,M}$ denote the polyhedral fan whose maximal cones correspond to subdivisions $\Q_{N,M}(\omega(\t))/\sim wb$, where $\Q_{N,M}(\omega(\t))$ is a triangulation.  Hence $\widetilde{\mathcal{F}}_{N,M}$ is a refinement of $\mathcal{F}_{N,M}$.
\end{definition}

\section{Triangulations $\Q_{N,6}(\omega(\t))$ for $N=1,2,3$}
\label{sec: Gr36}
In this section, we construct the triangulations $\Q_{N,6}(\omega(\t))$ for $N=1,2$ and $3$ by
giving the detailed structure of the corresponding polyhedral fans in multi-time space of the KP hierarchy.  We show that all subdivision $\Q_{N,6}$ are realizable, up to triangulation of the black and white tiles.  However, some subdivisions $\Q_{3,6}$ are only realizable for certain choices of $\kappa$-parameters.  There is no fixed choice of the $\kappa$-parameters for which all subdivisions $\Q_{3,6}$ are realizable.

\subsection{Subdivisions $\Q_{1,6}(\omega(\t))$ and $\Q_{2,6}(\omega(\t))$}
We first construct $\widetilde{\mathcal{F}}_{1,6}$, which has {six} main rays 
\[\{\r_i^{-} : 1 \leq i \leq 6\}\]
in $\t$-space. Figure \ref{fig:Gr16-T} shows a schematic drawing of this fan, projected onto a region in two-dimensional space.  Each white vertex represents one of the main rays $\r_i^{-}$.  
Any two of the rays $\r_i^{-}$ span a two-dimensional cone, shown in the figure as a dashed line segment, and any three of the rays $\r_i^{-}$ span a three-dimensional cone.  Taking the common refinement of this collection of cones gives a polyhedral fan, where each full-dimensional cone can be labeled by a unique triangulation of the hexagon (Theorem \ref{realizable-checking-NM}). For example, the cone associated to the unique triangulation having triangles $\{1,2,3\}, \{1,3,6\},\{3,5,6\}$ and $\{3,4,5\}$ corresponds to the intersection of two cones,
\[
 {\rm cone}\{\r_2^-,\r_4^-,\r_5^-\}\qquad \text{and}\qquad {\rm cone}\{\r_1^-,\r_2^-,\r_4^-\}.
 \]
 The first cone corresponds to the triangle $\{1,3,6\}$ and the second one to $\{3,5,6\}$.
Note here that we only need to use a minimal number of triangles which determine the triangulation.
In particular, the cone for the case with the triangle $\{2,4,6\}$ is just ${\rm cone}\{\r_1^-,\r_3^-,\r_5^-\}$ (the middle triangular cone in Fig.~\ref{fig:Gr16-T}).
\begin{figure}[h]
\begin{center}
\includegraphics[height=5cm]{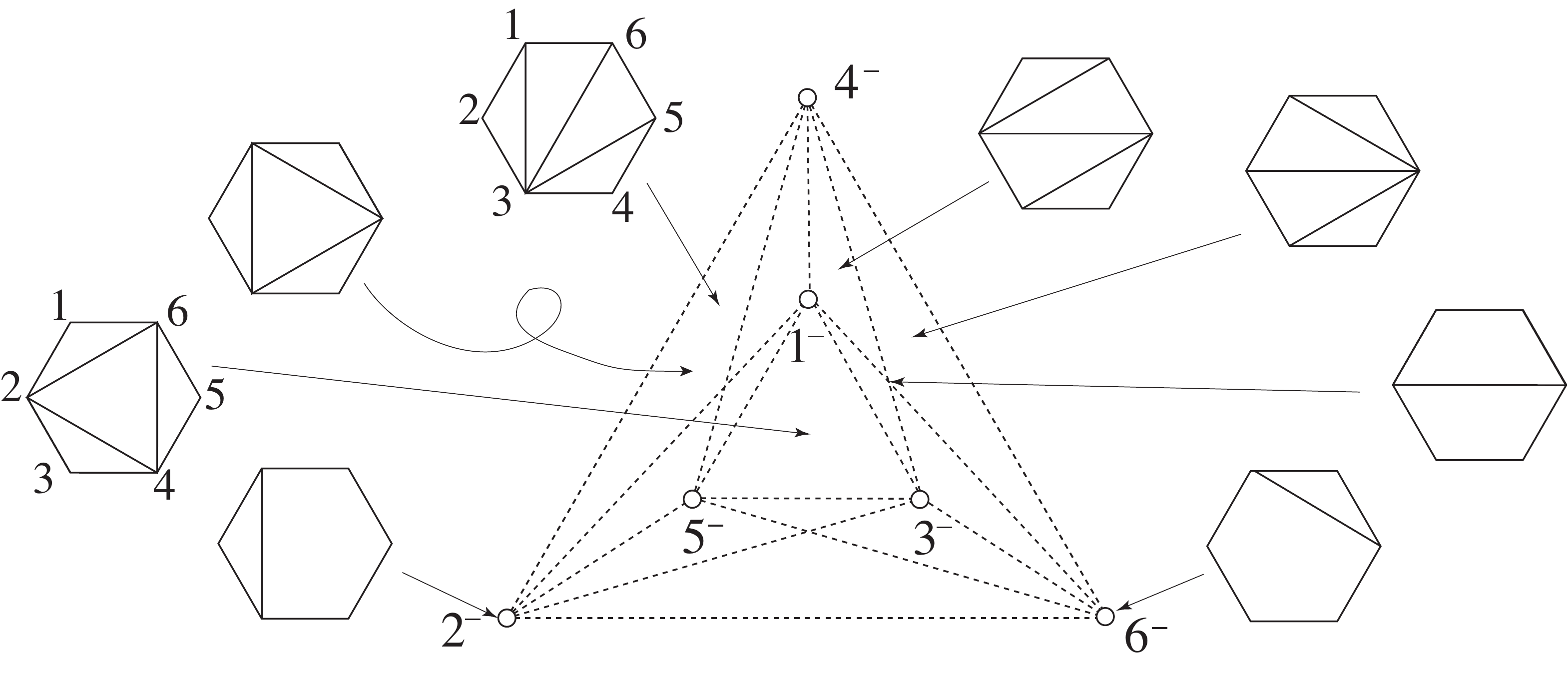}
\caption{Polyhedral cones in the time space $\R^3$. The Gale vectors $\r_i^-$
are marked by the white vertices with $i^-$ for $i=1,\ldots,6$.
In particular, the triangulation with triangle $\{2,4,6\}$ is generated by $\t\in{\rm relint}({\rm cone}\{\r_1^-,\r_3^-,\r_5^-\})$.}
\label{fig:Gr16-T}
\end{center}
\end{figure}
\begin{remark}
It is well-known that the number of triangulations of the $M$-gon is given by the Catalan number,
\[
C_{M-2}=\frac{1}{M-1}\binom{2(M-2)}{M-2}.
\]
Notice that the number of polyhedral cones in Fig.~\ref{fig:Gr16-T} is $C_4=14$.
The secondary polytope which is dual to the polyhedral structure of $\R^{M-3}$ is known as
the associahedron, whose vertices are labeled by the triangulations of the $M$-gon.
\end{remark}

It follows from Algorithm \ref{InductiveAlgorithm} that $\mathcal{F}_{2,6}$ has precisely the same cones as $\widetilde{\mathcal{F}}_{1,6}$.  This is illustrated in Fig.~\ref{fig:Gr26-T}.  We now refine $\mathcal{F}_{2,6}$ to produce $\widetilde{F}_{2,6}$.   First, we 
construct the rays 
\[\{\r_i^{+} : 1 \leq i \leq 6\},\]
represented by black dots in the middle panel of Fig.~\ref{fig:GrN6-Tspace}.  (The black dot inside a white circle does not represent one of the main rays, and will be explained below.)
We then construct the two-dimensional cones ${\rm cone}\{\mathbf{r}_i^-, \mathbf{r}_j^+\}$ for all $i \neq j$, represented by dashed segments in the figure. Taking the common refinement of the resulting collection of cones with the cones of $\widetilde{\mathcal{F}}_{1,6}$, we obtain $\widetilde{\mathcal{F}}_{2,6}$. As explained below, the structure of $\widetilde{\mathcal{F}}_{2,6}$ (and hence, the collection of the subdivisions for $\Gr(2,6)_{>0}$ that are realizable) depends on our choice of $\k$-parameters. 

By construction, maximal cones of $\widetilde{\mathcal{F}}_{2,6}$ correspond to the triangulations of $\mathcal{A}_{2,6}$.  For example, the unique such triangulation having two black triangles $\{\{1,2\},\{1,5\},\{2,5\}\}$ and $\{\{2,3\},\{2,5\},\{3,5\}\}$ and
a white triangle $\{ \{1,5\},\{3,5\},\{5,6\}\}$ can be realized by taking a point $\t=(t_3,t_4,t_5)$ in the intersection
of three cones spanned by $\{\r_3^-,\r_4^-,\r_6^-\}, \{\r_1^-,\r_4^-,\r_6^-\}$ and $\{\r_5^+,\r_1^-,\r_4^-\}$ (Theorem \ref{realizable-checking-NM}).

\begin{figure}[h]
\begin{center}
\includegraphics[height=5cm]{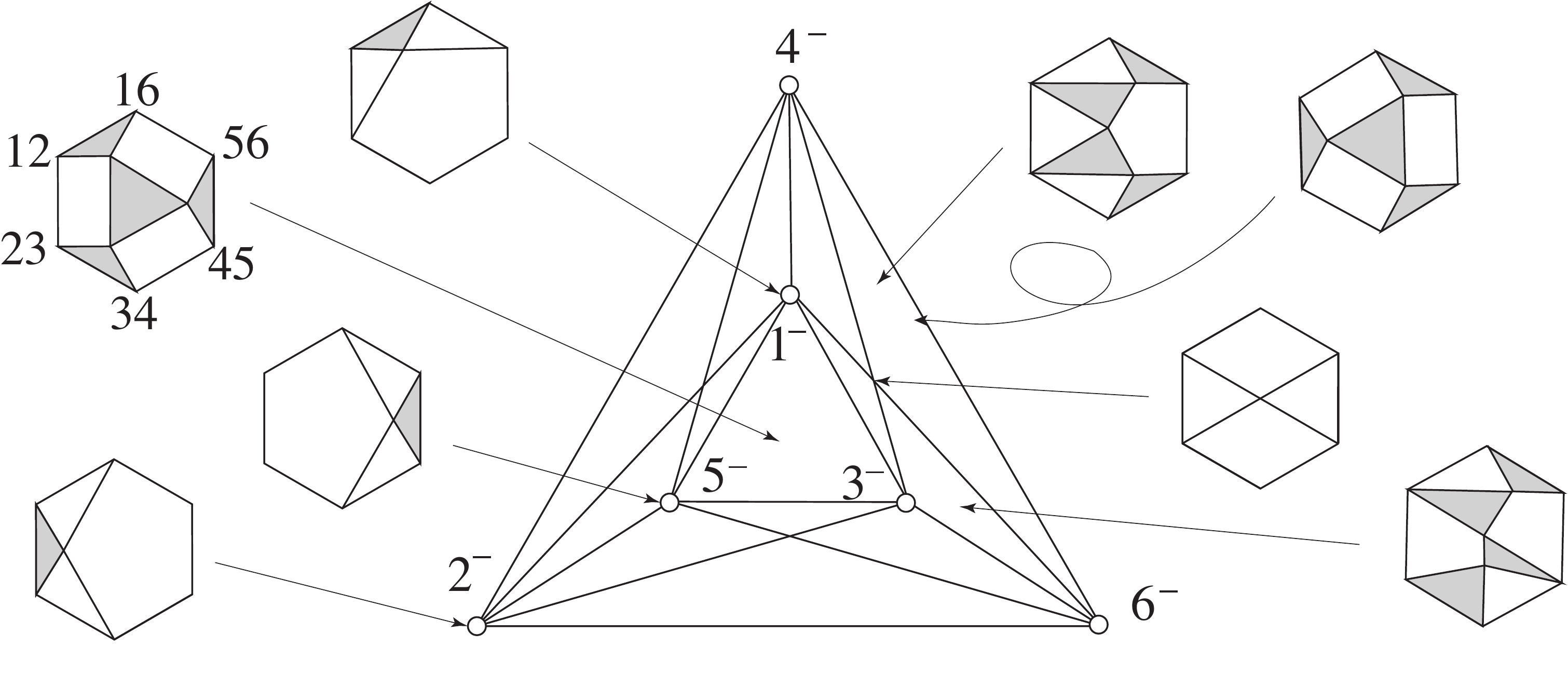}
\caption{Subdivisions $\Q_{2,6}(\omega(\t))$ via Algorithm \ref{InductiveAlgorithm} from 
the triangulations $\Q_{1,6}(\omega(\t))$. Each black triangle is induced by a white triangle in $\Q_{1,6}(\omega(\t))$.}
\label{fig:Gr26-T}
\end{center}
\end{figure}

\subsection{Subdivision $\Q_{3,6}(\omega(\t))$}
We now construct $\mathcal{F}_{3,6}$.   
Using Algorithm \ref{InductiveAlgorithm}, this is obtained by taking $\widetilde{\mathcal{F}}_{3,6}$, and merging any top-dimensional cones that represent the same triangulations, 
up to triangulation of the black polygons. For this, we claim it suffices to merge each pair of three-dimensional cones in $\widetilde{\mathcal{F}}_{2,6}$ separated by a face of the form ${\rm cone}\{\r_i^+, \r_j^-\}$.  In our example, this yields the fan shown at right in Fig.~\ref{fig:GrN6-Tspace}.  To prove the claim, note that two full-dimensional cones are separated by a face ${\rm cone}\{\r_i^+,\r_j^-\}$ if and only if the corresponding triangulations differ by a black-white flip, which occurs if and only if their images at $N=3$ are identical, up to flipping a diagonal in one of the triangulated black polygons.

\begin{figure}[h]
\begin{center}
\includegraphics[height=5cm]{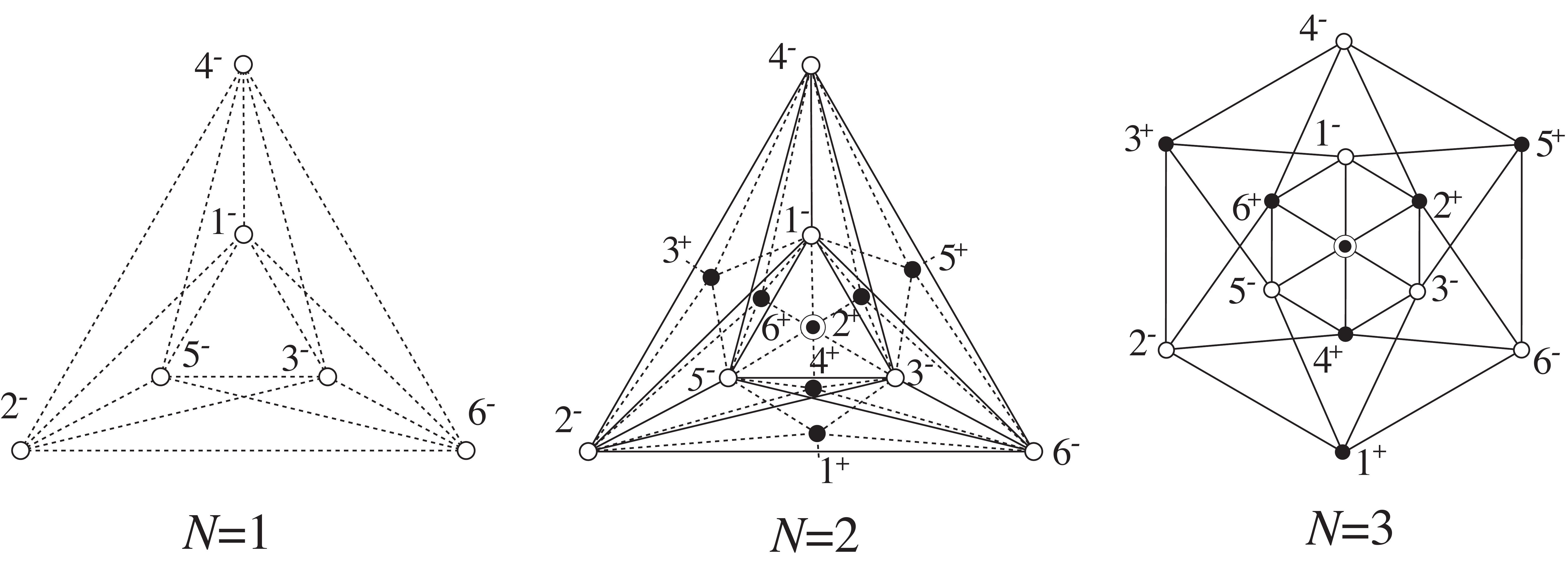}
\end{center}
\caption{The polyhedral structures $\mathcal{F}_{N,6}$ in the time space $\R^3$ for $N=1,2,3$.
The dashed lines correspond to the flips of diagonals in the white polygons, and the solid lines 
correspond to the black-white flips in the parallelograms (see Fig.~\ref{fig:Gr24}).  Note that the dashed lines become
solid lines then the solid lines disappear when $N$ increases.  In the case $N=3$, we omit the dashed lines which correspond to the triangulations of the white polygons in the subdivisions
$\Q_{3,6}(\omega(\t))$ obtained by the blow-up of $\Q_{2,6}(\omega(\t))$.}
\label{fig:GrN6-Tspace}
\end{figure}

We note that there are four triangulations $\Q_{2,6}(\omega(\t))$ which can only be realized for certain $\kappa$-parameters.  Two of these are refinements of the subdivision
obtained inside the ${\rm cone}\{\r_1^-,\r_3^-,\r_5^-\}$, which has four white 4-gons as shown in Fig.~\ref{fig:Gr26-T}; two are refinements of the analogous subdivision which occurs within ${\rm cone}\{\r_2^-, \r_4^-, \r_6^-\}$.  We triangulate the subdivision in the following two cases, shown in Fig.~\ref{bad6}:

\begin{figure}[h]
\centering
\includegraphics[trim = {2in 8.5in 2in 1in}, clip]{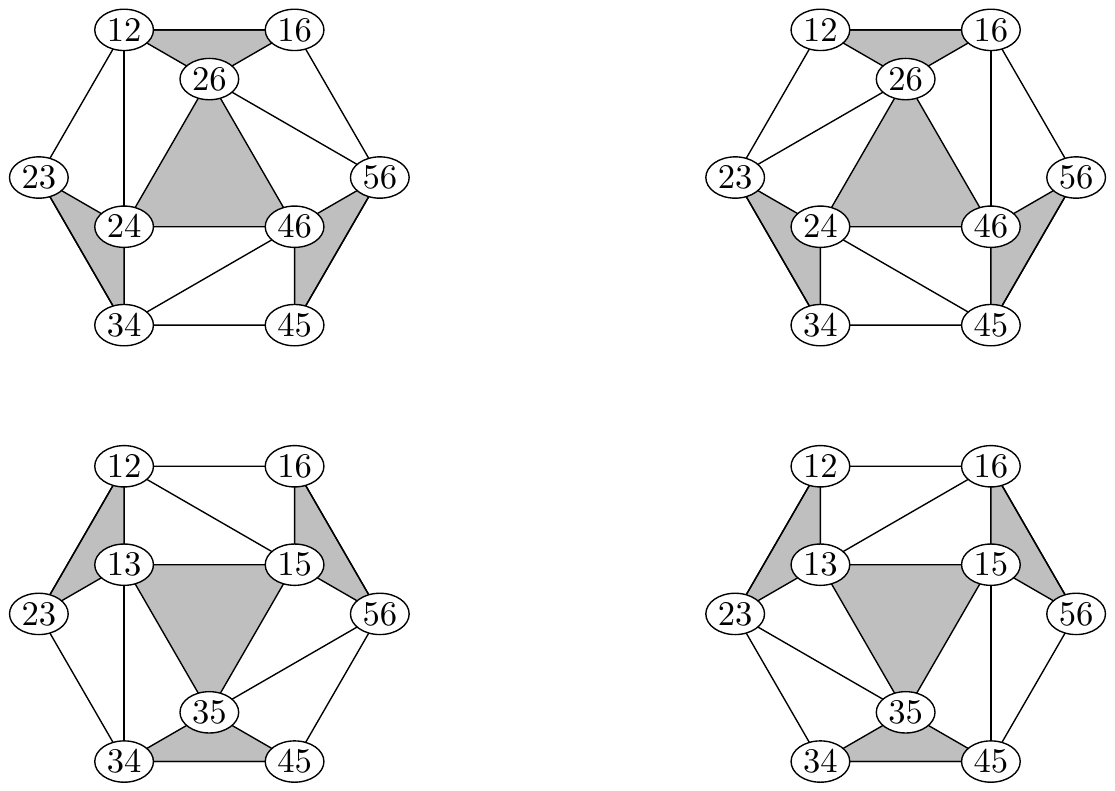}
\caption{Triangulations $\Q_{2,6}(\omega(\t))$ which are only realizable for some choices of the $\kappa$-parameters. Each of these triangulations is realized by choosing a point in the middle triangular cone in the time space $\R^3$ shown in Fig.~\ref{fig:Gr26-T}.}
\label{bad6}
\end{figure}

\begin{itemize}
\item[(a)] 
The triangulation with the three white triangles,
\[
\{ 12,24,26\},\qquad \{24, 34,46\},\qquad \{26,46,56\}.
\]
They can be realized  from the following cones, respectively,
\[
{\rm cone}\{\r_2^+,\r_3^-,\r_5^-\},\qquad {\rm cone}\{\r_4^+,\r_1^-,\r_5^-\},\qquad
{\rm cone}\{\r_6^+,\r_1^-,\r_3^-\}.
\]

\item[(b)] 
The triangulation with three white triangles,
\[
\{ 23,24,26\},\qquad \{24, 45, 46\},\qquad \{16,26,46\}.
\]
They are realized from the following cones:

\[
{\rm cone}\{\r_2^+,\r_1^-,\r_5^-\},\qquad {\rm cone}\{\r_4^+,\r_1^-,\r_3^-\},\qquad
{\rm cone}\{\r_6^+,\r_3^-,\r_5^-\}.
\]
\end{itemize}

By Theorem \ref{realizable-checking-NM}, these triangulations can be realized by a point $\t$ in the intersection of the given cones (if it is not empty).  For a choice of $\kappa$-parameters, however, both intersections can be empty. 
This occurs when the segments representing ${\rm cone}\{\r_1^-, \r_4^+\},$ ${\rm cone}\{\r_3^-, \r_4^+\}$ and ${\rm cone}\{\r_5^-, \r_2^+\}$ intersect in a single point (representing a ray in $\t$-space). We denote this \emph{double point} of $\widetilde{\FF}_{3,6}$ by a black dot inside a white circle
as shown in Fig.~\ref{fig:GrN6-Tspace}.

We now determine for which choices of the $\kappa$-parameters such a double point occurs. First, we claim that the two-dimensional fan spanned by ${\rm cone}\{\r_1^-,\r_4^+\}$ is contained within the plane defined by $D_{2,3,5,6}= 0$. To see this, note that the plane defined by $D_{2,3,5,6}$ is the region in $\mathbf{t}$-space corresponding to the point configurations where $\pp_2$, $\pp_3$, $\pp_5$ and $\pp_6$ are coplanar.  Certainly, any point in $\t$-space which is a linear combination of $\r_1^-$ and $\r_4^+$ satisfies this condition; for such a point $\p_2$, $\p_3$, $\p_5$ and $\p_6$ all have weight $0$.  

We may rewrite the equation for the plane $D_{2,3,5,6} = 0$ by plugging in the coordinates
\[p_i = \kappa_j,\qquad q_i = \kappa_i^2,\qquad \omega_i (\t)= \sum_{j = 3}^{5}\kappa_i^jt_j\]
into the determinant formula for $D_{2,3,5,6}$.
Factoring the resulting equation, and dividing by terms that cannot equal zero when $\kappa_1 < \kappa_2 < \cdots < \kappa_6,$
we obtain
$$t_3 + h_1(2,3,5,6) t_4 + h_2(2,3,5,6) t_5 = 0$$
where $h_k$ is the \emph{homogeneous symmetric polynomial of degree k} defined by
\[h_k(i_1,i_2,i_3,i_4) = \sum_{1 \leq s_1 \leq s_2 \leq \cdots \leq  s_k \leq 4} \kappa_{i_{s_1}}\cdots \kappa_{i_{s_k}}.\]

Similarly, the plane defined by $D_{1,2,4,5} = 0$ contains ${\rm cone}\{\r_3^-, \r_6^+\}$, and the plane defined by $D_{1,3,4,6} = 0$ contains ${\rm cone}\{\r_5^-,\r_2^+\}$. 
Hence the three two-dimensional cones intersect in a ray precisely when the three planes intersect in a line, that is, when we have the following determinant condition,
\begin{equation}
\label{zero}
\left|
\begin{matrix}
1  &  h_1(2,3,5,6)  & h_2(2,3,5,6)  \\
1  &  h_1(1,2,4,5)  & h_2(1,2,4,5) \\
1  &  h_1(1,3,4,6)  & h_2(1,3,4,6) 
\end{matrix} \right|
=0.
\end{equation}


To obtain a simpler formula, we may specialize to the case where the $\k$-parameters satisfy a symmetric condition,
\begin{align*}
\kappa_1 =-\kappa_6,\qquad
\kappa_2 =-\kappa_5,\qquad
\kappa_3 =-\kappa_4.
\end{align*}
With this choice of parameters, the determinant \eqref{zero} becomes
\[-2(\kappa_2^2 - \kappa_1\kappa_3)(\kappa_1 - \kappa_3).\]
Since $\k_1 < \k_3$, 
the determinant is positive if $\kappa_2^2 >  \kappa_1\kappa_3$, and negative if $\kappa_2^2 < \kappa_1\kappa_3$.

We now investigate what happens when the determinant $\eqref{zero}$ is nonzero.  
By the quadrilateral-checking formula (Lemma \ref{lem:QCL}), $D_{2,3,5,6} > 0$ when the diagonal $\{\hat{3},\hat{6}\}$ passes over the diagonal $\{\hat{2},\hat{5}\}$, while $D_{2,3,5,6} < 0$ when $\{\hat{2},\hat{5}\}$ passes over $\{\hat{3},\hat{6}\}$.  
Hence, the normal vector 
\[\langle 1, h_1(2,3,5,6), h_2(2,3,5,6)\rangle\] to the plane containing ${\rm cone}\{\r_1^-, \r_4^+\}$ points toward the half-space containing $\r_5^-$ and $\r_6^+$.
Similarly, the normal vector 
\[\langle 1 ,  h_1(1,2,4,5) , h_2(1,2,4,5)\rangle\]
to the plane defined by $D_{1,2,4,5} = 0$ points toward the half-space containing $\r_1^{-}$ and $\r_2^+$, while the
normal vector 
\[\langle 1,  h_1(1,3,4,6), h_2(1,3,4,6)\rangle\] to the plane defined by $D_{1,3,4,6} = 0$ points toward the half-space containing $\r_1^{-}$ and $\r_6^{+}$.  See Fig.~\ref{directions}.

\begin{figure}[h]
\begin{center}
\includegraphics[scale = 0.7, trim = {1in 7.25in 1in 1in}, clip]{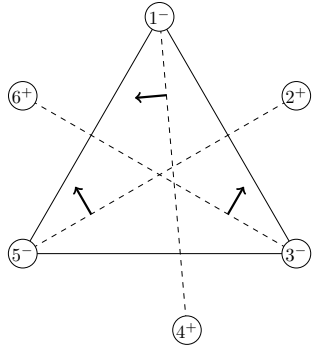}
\end{center}
\caption{One possibility for the arrangement of two-dimensional cones inside ${\rm cone}\{\r_1^-, \r_3^-, \r_5^-\}$.  Each Gale vector $\r_i^{\pm}$ is marked by $i^{\pm}$. The arrows show the normal vectors corresponding to rows of the matrix in the determinant \eqref{zero}.}
\label{directions}
\end{figure}

Note that the ray where the planes $D_{2,3,5,6} = 0$ and $D_{1,2, 4,5} = 0$ intersect inside ${\rm cone}\{\r_1^-, \r_3^-, \r_5^-\}$ is the cross product
\[\langle 1, h_1(1,2,4,5), h_1(1,2,4,5)\rangle \times \langle1, h_1(2,3,5,6), h_1(2,3,5,6)\rangle.\]
The triple scalar product of this ray with 
\[\langle1, h_1(1,3,4,6), h_2(1,3,4,6)\rangle\]
is positive if the ray lies on the same side of ${\rm cone}\{\r_5^-, \r_2^+\}$ as $\r_1^-$ and $\r_6^+$, as shown on the right in Fig.~\ref{fig:DoubleP}; and negative if the ray lies on the opposite side of the cone as shown at left in Fig.~\ref{fig:DoubleP}.
By properties of the triple scalar product, the determinant \eqref{zero}, is negative in the first case, and positive in the second.

We now use Algorithm \ref{InductiveAlgorithm} to construct a subdivision $\Q_{3,6}(\omega(\t))$ from a triangulation $\Q_{2,6}(\omega(\t))$. Note here that two triangulations $\Q_{2,6}$ adjacent to a common solid line in Fig.~\ref{fig:GrN6-Tspace}
lead to the same subdivision $\Q_{3,6}$ by the blow-up process. That is, the solid lines in $N=2$ case disappear in $N=3$, and each subdivision $\Q_{3,6}(\omega(\t))$ can be generated by
choosing a point $\t$ in a cone illustrated in the figure $N=3$ where the solid lines are the dashed lines in the case $N=2$.
Figure \ref{fig:Gr36-T} shows the subdivisions
$\Q_{3,6}$ obtained from the triangulations $\Q_{2,6}$ through
Algorithm \ref{InductiveAlgorithm}.  The total number of the subdivisions is given by the number of
polyhedral cones in the time space. Then recall that there are two triangulations
of $\Q_{2,6}$ which cannot be realized for fixed $\k$-parameters. 
Using a different set of $\k$-parameters, we obtained those missing triangulations $\Q_{2,6}$
as shown in Fig.~\ref{bad6}. Then it is immediate to see that the subdivisions $\Q_{3,6}$ obtained from these triangulations through the blow-up process are triangulations which cannot be obtained from the original set of the $\k$-parameters. 
Figure \ref{fig:DoubleP} shows these triangulations for the middle section in Fig.~\ref{fig:Gr36-T}.

\begin{figure}[h]
\begin{center}
\includegraphics[height=5.5cm]{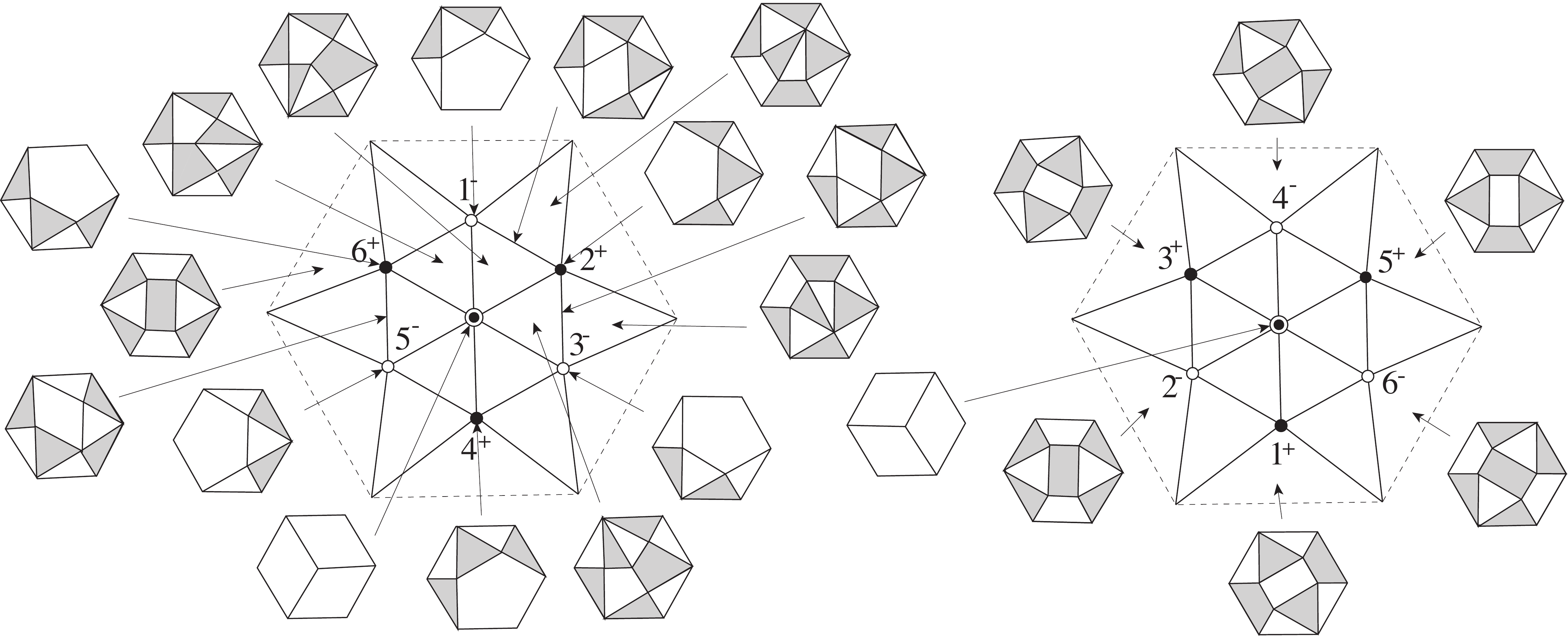}
\end{center}
\caption{Subdivisions $\Q_{3,6}(\omega(\t))$ and the corresponding polyhedral cones in the time space $\R^3$ shown in Fig.~\ref{fig:GrN6-Tspace}.}
\label{fig:Gr36-T}
\end{figure}
\begin{figure}[h]
\begin{center}
\includegraphics[height=4.5cm]{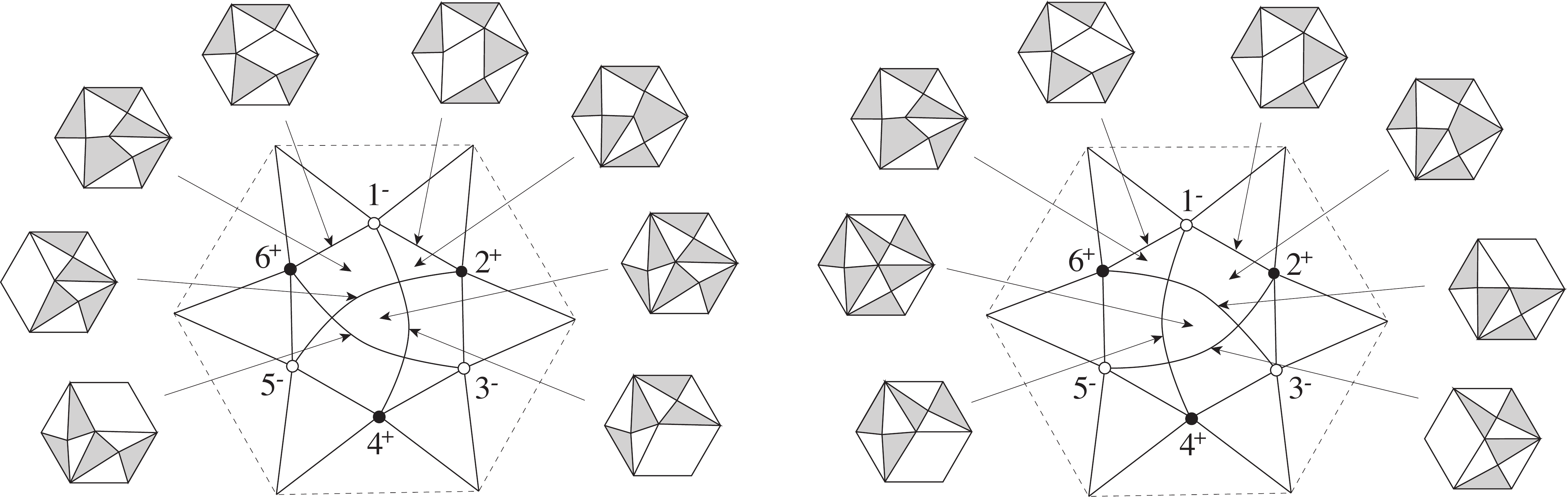}
\end{center}
\caption{Two different triangulations for the middle section of the left panel in Fig.~\ref{fig:Gr36-T}.
They are corresponding to the different choices of the $\kappa$-parameters.
For the symmetric parameters $(\k_1,\k_2,\k_3,-\k_3,-\k_2,-\k_1)$, the left figure corresponds to 
the case $\k_2^2>\k_1\k_2$, and the right one to $\k_2^2<\k_1\k_3$.}
\label{fig:DoubleP}
\end{figure}

\begin{figure}[h]
\centering
\includegraphics[scale = 0.8, trim = {1in 6in 1in 1in}, clip]{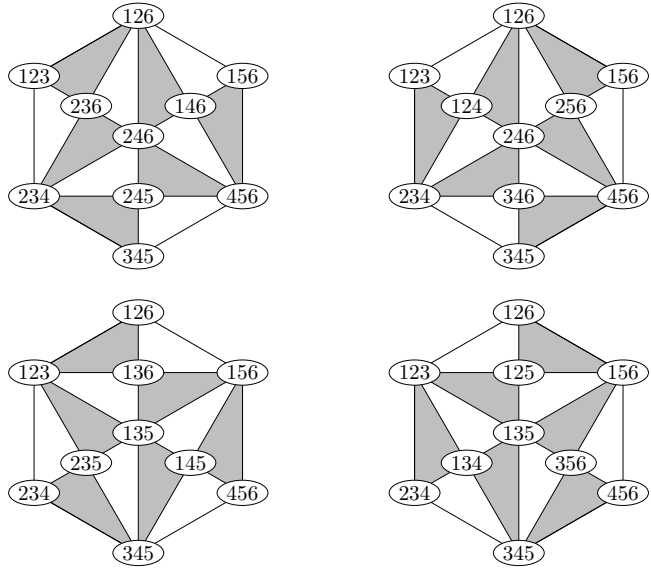}
\caption{Triangulated plabic tilings $\Q_{3,6}(\omega(\t))$ which are only realizable for some choices of the $\kappa$-parameters. There triangulations are realized by choosing points in the central polyhedral cones in Fig.~\ref{fig:DoubleP}.}
\label{bad36}
\end{figure}

We now summarize this discussion in the following theorem, which states when each maximal weakly separated collection for $\Gr(3,6)_{>0}$ is realizable. See also Figs.~\ref{fig:Gr36-T} and \ref{fig:DoubleP}.
\begin{theorem}\label{classify36}
There are 34 maximally weakly separated collections for $\text{Gr}(3,6)_{>0}$.  Of these, $30$ are realizable for every choice of $\kappa$-parameters.  
For a generic choice of $\kappa$-parameters, $32$ of the $34$ are realizable.  We can realize the weakly separated collections shown at left in Fig.~\ref{bad36} if and only if the determinant \eqref{zero} is positive.  We can realize the collections shown at right in Fig.~\ref{bad36} if and only if the determinant \eqref{zero} is negative.
\end{theorem}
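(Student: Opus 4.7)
The plan is to combine the explicit polyhedral fan analysis from Section~\ref{sec: Gr36} with the realizability criterion of Theorem~\ref{realizable-checking-NM}, using Algorithm~\ref{InductiveAlgorithm} to pass from the $N=2$ layer to the $N=3$ layer. First I would count the maximal weakly separated collections: by Oh--Postnikov--Speyer these are in bijection with plabic tilings for $\Gr(3,6)_{>0}$, and by Lemma~\ref{tilings} and Algorithm~\ref{InductiveAlgorithm}, each such tiling arises (modulo triangulation of black polygons) as $\Q_{3,6}(\omega(\t))/\!\!\sim\!b$ from some triangulation $\Q_{2,6}(\omega(\t))$. Enumerating the top-dimensional cones of $\widetilde{\mathcal{F}}_{2,6}$ and tracking which pairs get identified by collapsing the solid lines (i.e.\ the black--white flip boundaries), one obtains exactly $34$ distinct cones in $\mathcal{F}_{3,6}$, giving the count.

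Next I would separate the $34$ collections into those that are always realizable and those whose realizability depends on the $\kappa$-parameters. For a fixed generic choice of $\kappa$'s, Theorem~\ref{realizable-checking-NM} says a triangulation is realizable iff the intersection of the associated relatively open cones is nonempty, which is automatic for all triangulations whose dual cone is full-dimensional regardless of $\kappa$. The only obstructions arise precisely inside the two central triangular cones $\mathrm{cone}\{\r_1^-,\r_3^-,\r_5^-\}$ and $\mathrm{cone}\{\r_2^-,\r_4^-,\r_6^-\}$, where the three segments $\mathrm{cone}\{\r_i^+,\r_j^-\}$ (from the three black--white flips of the four interior white $4$-gons) may or may not enclose a nonempty region depending on whether they meet in a common ray. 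So outside these two central regions all $30$ triangulations are realizable for every $\kappa$, which accounts for the first assertion.

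Now I would analyze the two central cones. By the calculation already carried out before the theorem statement, the plane containing $\mathrm{cone}\{\r_1^-,\r_4^+\}$ is $D_{2,3,5,6}=0$, and similarly for the other two segments; thus the three planes meet in a common line (the degenerate case with no realization of the Fig.~\ref{bad6} triangulations) exactly when the determinant~\eqref{zero} vanishes. For nondegenerate $\kappa$'s (determinant nonzero), precisely one of the two triangulations of Fig.~\ref{bad6} inside each central cone is realizable, and by the triple-product/orientation argument illustrated in Fig.~\ref{directions} and Fig.~\ref{fig:DoubleP}, the sign of~\eqref{zero} determines which one: a positive sign corresponds to the configuration where the intersection ray lies on the opposite side, yielding the left pair of Fig.~\ref{bad36}; a negative sign gives the right pair. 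The symmetric case $\kappa_1=-\kappa_6$, $\kappa_2=-\kappa_5$, $\kappa_3=-\kappa_4$ specializes the determinant to $-2(\kappa_2^2-\kappa_1\kappa_3)(\kappa_1-\kappa_3)$, confirming both signs can occur. Combining the central analysis with the count of $30$ universally realizable collections gives the remaining four, matching the bound of $32$ for generic $\kappa$.

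The main obstacle will be the bookkeeping: verifying that the merging of cones from $\widetilde{\mathcal{F}}_{2,6}$ to $\mathcal{F}_{3,6}$ really produces $34$ distinct maximal cones (so that no double-counting or hidden merging occurs beyond the solid-line collapses), and correctly identifying which pair of Fig.~\ref{bad36} tilings corresponds to which sign of~\eqref{zero}. The orientation argument via the triple scalar product is delicate but routine once the normal vectors $\langle 1,h_1,h_2\rangle$ are ordered consistently; the enumeration can be checked by confirming that each cone in $\mathcal{F}_{3,6}$ is obtained from either one triangulation $\Q_{2,6}$ or a unique pair of triangulations related by a single black--white flip, so that the number of $\mathcal{F}_{3,6}$-cones equals the number of $\widetilde{\mathcal{F}}_{2,6}$-cones minus the number of such flips.
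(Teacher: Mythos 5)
Your proposal is correct and follows essentially the same route as the paper, which presents Theorem \ref{classify36} as a summary of the preceding discussion: the enumeration via the merged fan $\mathcal{F}_{3,6}$, the localization of the obstruction to the two central cones $\mathrm{cone}\{\r_1^-,\r_3^-,\r_5^-\}$ and $\mathrm{cone}\{\r_2^-,\r_4^-,\r_6^-\}$, the determinant \eqref{zero} as the degeneracy condition, and the triple-scalar-product orientation argument assigning each sign to one pair in Fig.~\ref{bad36}. Your sign assignment (positive determinant yields the left pair, negative the right pair) agrees with the paper's conclusion from Fig.~\ref{fig:DoubleP}.
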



\begin{example}
We demonstrate the case for $\Gr(3,6)_{>0}$ by considering an explicit example where we take
 the $\kappa$-parameter as $(\k_1,\ldots,\k_6)=(-3,-2,-1,1,2,3)$.  
Then the the Gale vectors are calculated as 
\begin{align*}
\r_1^-&=\begin{pmatrix} -5\\-3\\1\end{pmatrix},\quad
\r_2^-=\begin{pmatrix} 10\\2\\-1\end{pmatrix},\quad
\r_3^-=\begin{pmatrix} -13\\-1\\1\end{pmatrix},\\
\r_4^-&=\begin{pmatrix} 13\\-1\\-1\end{pmatrix},\quad
\r_5^-=\begin{pmatrix} -10\\2\\1\end{pmatrix},\quad
\r_6^-=\begin{pmatrix} 5\\-3\\-1\end{pmatrix}.
\end{align*}
Here these vectors are normalized to be $\pm1$ in the third component.
Figure \ref{fig:Gr36-Tspace} illustrates the polyhedral cones in the time space $\t=(t_3,t_4,t_5)$.
Note that the vectors $\r_1^-, \r_3^-,\r_5^-$ appear at the plane $t_5=1$, and other vectors are at $t_5=-1$.
\begin{figure}[h]
\begin{center}
\includegraphics[height=5cm]{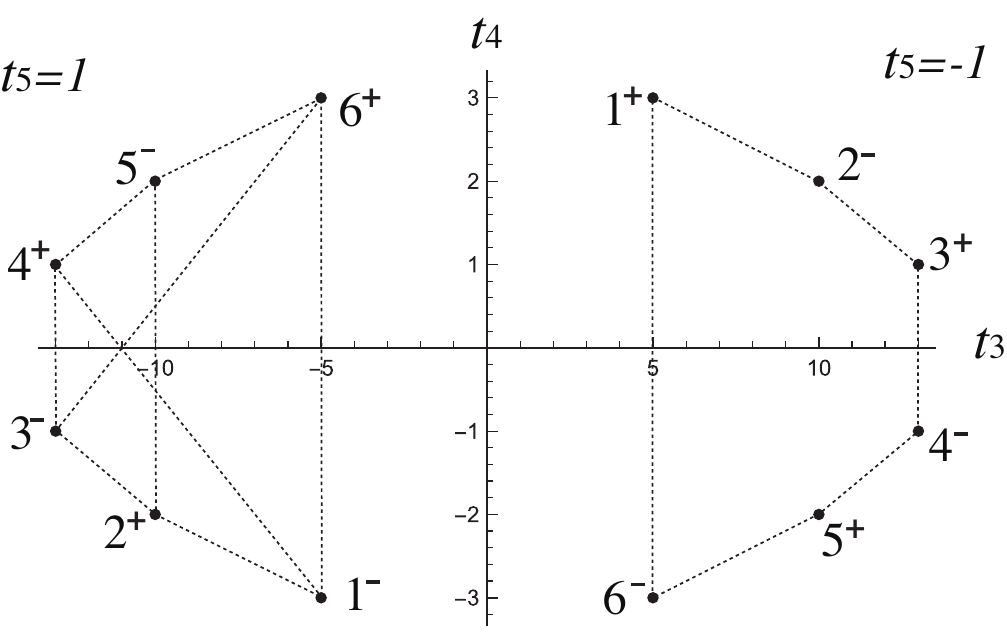}
\end{center}
\caption{The Gale vectors for $(\k_1,\ldots,\k_6)=(-3,-2,-1,1,2,3)$ in the time space.
The vectors in the left are shown at the plane $t_5=1$, and the right ones are at $t_5=-1$.}
\label{fig:Gr36-Tspace}
\end{figure}

\begin{figure}[h]
\begin{center}
\includegraphics[height=5cm]{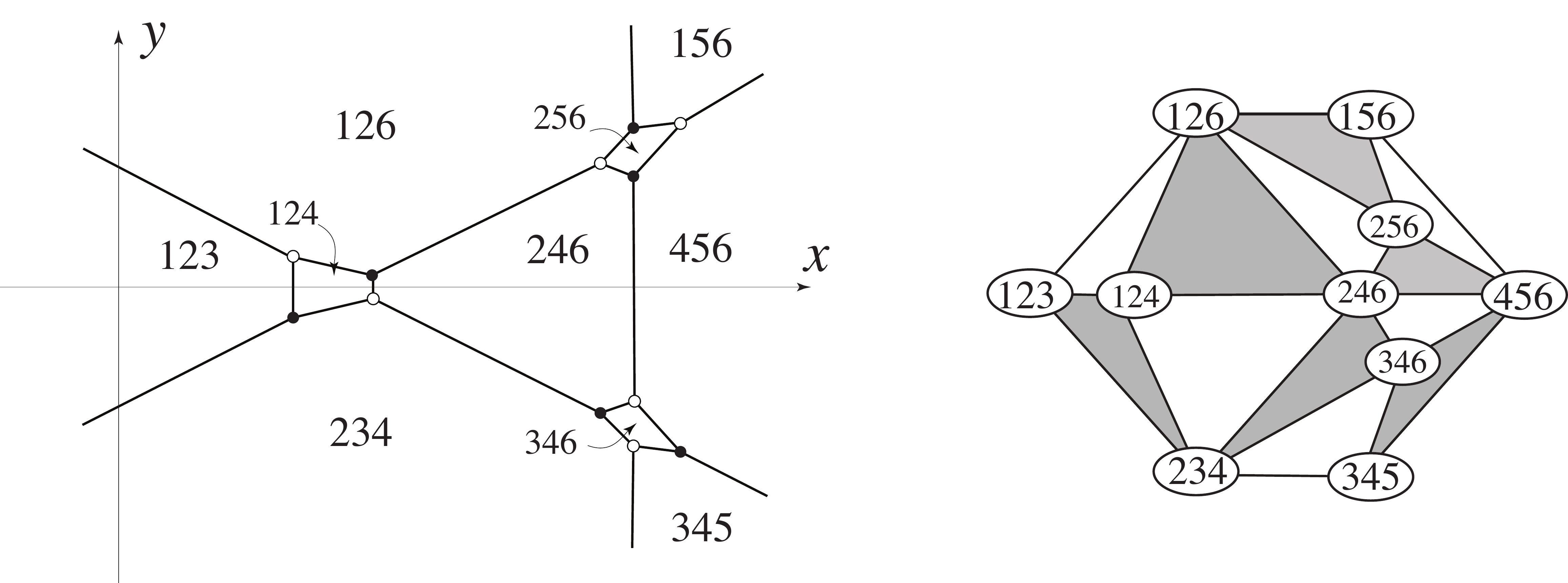}
\end{center}
\caption{The soliton graph and the corresponding triangulation $\Q_{3,6}(\t)$ for the case with
$(\k_1,\ldots,\k_6)=(-3,-2,-1,1,2,3)$ and $\t=(-10.5,0,1)$. }
\label{fig:SGDoubleP}
\end{figure}
As a summary of this section, we show how to find values of $\t$ that realize the subdivision 
 $\Q_{2,6}(\omega(\t))$ shown in the center of Fig.~\ref{fig:Gr26-T} and the triangulation $\Q_{3,6}(\omega(\t))$ shown in Fig.~\ref{fig:SGDoubleP}:
\begin{itemize}
\item[(a)] Consider the triangulation $\Q_{1,6}(\omega(\t))$ with the triangle $\{2,4,6\}$, which
 is realized by taking 
\[
\t~\in ~{\rm relint}\left(\mathcal{C}_{1,3,5}^{---}\right)\qquad\text{with}\qquad \mathcal{C}_{1,3,5}^{---}:={\rm cone}\left\{\r_1^-,\r_3^-,\r_5^-\right\}.
\]
This leads to the subdivision $\Q_{2,6}(\omega(\t))$ shown in Fig.~\ref{fig:Gr26-T}.
\item[(b)] Triangulate three white 4-gons in $\Q_{2,6}(\omega(\t))$ by taking the intersection of three cones,
$\mathcal{C}_{2,3,5}^{+--}, \mathcal{C}_{1,4,5}^{-+-}$ and $\mathcal{C}_{1,3,6}^{--+}$. Then the triangulation $\Q_{3,6}(\omega(\t))$ shown in Fig.~\ref{fig:SGDoubleP} is obtained by taking a point
\[
\t~\in~{\rm relint}\left(\mathcal{C}_{2,3,5}^{+--}\cap\mathcal{C}_{1,4,5}^{-+-}\cap\mathcal{C}_{1,3,6}^{--+}\right).
\]
The triangulation $\Q_{3,6}(\omega(\t))$ in Fig.~\ref{fig:SGDoubleP} is obtained by taking $\t=(-10.5,0,1)$.
\end{itemize}
\end{example}


\section{Realizability of $\Q_{3,7}(\omega(\t))$ and $\Q_{3,8}(\omega(\t))$}
\label{sec: realizability}
\subsection{Results for $\Gr(3,7)_{>0}$}
\label{subsec: 37}
We now extend our results from $\Gr(3,6)_{>0}$ to $\Gr(3,7)_{>0}$.  We show that every maximal weakly separated collection for $\Gr(3,7)_{>0}$ is realizable for \emph{some} choice of the $\k$-parameters $(\kappa_1,\ldots,\kappa_7)$, and determine which of these collections are realizable for any \emph{given} choice of the $\kappa$-parameters.

\begin{theorem}
\label{realize37}
Every maximal weakly separated collection for $\Gr(3,7)_{>0}$ is realizable for some choice of the $\k$-parameters.
\end{theorem}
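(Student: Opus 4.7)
The plan is to reduce the statement to $\Gr(2,7)_{>0}$ via Lemma \ref{tilings}: it suffices to show that every triangulated plabic tiling for $\Gr(2,7)_{>0}$ is realizable. By the zonotopal discussion in Section \ref{subsec: zonotope} (which recovers the Kodama--Williams result), plabic tilings for $\Gr(2,7)_{>0}$ are in bijection with triangulations $T$ of the $7$-gon, and the blow-up construction of Proposition \ref{prop:Induction} attaches to each $T$ the subdivision $\Q_{2,7}(T)$ whose black tiles are all triangles and whose white tiles are polygons of varying sizes, one per vertex of $T$ whose I-degree is at least $3$. A triangulated plabic tiling additionally records a triangulation of each such white polygon.

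I would fix a triangulation $T$ together with a prescribed triangulation of each white polygon of $\Q_{2,7}(T)$, and exhibit a $\t \in \R^4$ realizing this combined data. By Proposition \ref{prop:Q1Msub}, the triangulation $T$ is realized for $\t$ in a specific full-dimensional cone $\mathcal{C}_T$ spanned by the Gale vectors $\r_{[i_l,j_l]}$ attached to the diagonals of $T$. Within $\mathcal{C}_T$, the triangulation of each white polygon is governed by its own set of Gale vectors: such a polygon has vertex set $\{i_0 i_1,\ldots,i_0 i_k\}$ sharing a common index $i_0$, and Algorithm \ref{algorithmGr1M} applied to the sub-configuration $\A_{1,k} = \{\p_{i_1},\ldots,\p_{i_k}\}$ realizes any prescribed triangulation for a suitable range of weights. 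Theorem \ref{realizable-checking-NM} then identifies realizability of the combined data with the non-emptiness of an intersection of cones indexed by the triangles of the fully triangulated $\Q_{2,7}$.

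The main obstacle, in analogy with the $\Gr(3,6)_{>0}$ case, is that certain degenerate cells of the polyhedral fan $\widetilde{\mathcal{F}}_{3,7}$ may force unwanted coincidences (double points analogous to the one produced by equation (\ref{zero})), so that some triangulated white polygon is realizable only for specific $\kappa$-parameters. The strategy for each such obstruction is to derive an explicit symmetric-function determinant condition analogous to $\k_2^2 \lessgtr \k_1\k_3$, then specialize to a convenient family of $\kappa$'s (for instance a symmetric family $\k_i = -\k_{8-i}$ with one free parameter) along which the sign of each relevant determinant can be prescribed. Since for $M=7$ there are only finitely many such degenerate cells, the key verification is that for each target triangulated tiling at least one $\kappa$-configuration simultaneously avoids every obstruction incompatible with it; combined with Lemma \ref{tilings}, this yields the theorem.
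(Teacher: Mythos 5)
Your reduction is the right one and matches the paper's: by Lemma \ref{tilings} it suffices to realize every \emph{triangulated} plabic tiling for $\Gr(2,7)_{>0}$, and each such tiling is determined by a triangulation $T$ of the heptagon together with a triangulation of each white polygon of the blow-up $\Q_{2,7}(T)$. After that point, however, your proposal stops short of a proof. The entire content of the theorem is concentrated in the sentence ``the key verification is that for each target triangulated tiling at least one $\kappa$-configuration simultaneously avoids every obstruction incompatible with it'' --- and that verification is neither carried out nor reduced to anything checkable. You have not enumerated the degenerate cells of $\widetilde{\mathcal{F}}_{2,7}$, not written down the determinant conditions they impose, and not shown that the resulting sign conditions are simultaneously satisfiable for each tiling. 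Nothing in the general machinery (Theorem \ref{realizable-checking-NM} only converts realizability into non-emptiness of a cone intersection; it does not decide it) guarantees that the finitely many obstructions attached to a single tiling are compatible, and indeed for $\Gr(3,7)_{>0}$ the paper's Theorem \ref{classify37} shows that some tilings require the determinant \eqref{zero} to be positive, others negative, and one requires a strictly stronger inequality --- so the compatibility question is genuinely nontrivial and tiling-dependent. A further soft spot: you treat the triangulation of each white polygon as governed by ``its own'' Gale vectors via Algorithm \ref{algorithmGr1M}, but the weights on each sub-configuration are all restrictions of the single global $\omega(\t)$ constrained to lie in $\mathcal{C}_T$, which is exactly why the constraints interact and why the existence claim needs an argument.

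For comparison, the paper sidesteps the four-dimensional fan entirely (it explicitly notes that visualizing $\widetilde{\mathcal{F}}_{1,7}$ is impractical) and instead argues directly about the placement of the lifted points $\pp_1,\ldots,\pp_7$ in $\R^3$: it classifies the four triangulations $\Q_{1,7}$ of the heptagon up to symmetry, and for each one shows by an explicit geometric construction --- first weighting $\p_1,\ldots,\p_6$ to realize a chosen $\Q_{2,6}$ or $\Q_{1,6}$, then adding $\pp_7$ just above or just below a suitable plane $\mathcal{P}$ --- that every refinement of the blow-up is achieved by some configuration. If you want to salvage your fan-based program, you would need to explicitly list the two-dimensional walls of the form ${\rm cone}\{\r_i^+,\r_j^-\}$ whose coincidences create double points in $\R^4$, translate each into a symmetric-function determinant as in \eqref{zero}, and then exhibit, for each of the finitely many triangulated tilings, a concrete $\kappa$-vector satisfying the required sign pattern; as written, that work is missing.
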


\begin{proof}
Since the $\t$-space has dimension $4$,
visualizing the polyhedral fan for $\mathcal{A}_{1,7}$ is rather difficult.
We instead reason directly about the placement of the lifted points
\[\{\pp_1,\pp_2,\ldots,\pp_7\}~\subset~\R^3.\]

By Lemma \ref{tilings}, it suffices to prove that every \emph{triangulated} plabic tiling for $\Gr(2,7)_{>0}$ is realizable for some choice of the $\k$-parameters.  We have already shown the analogous result for $\Gr(3,6)_{>0}$.  Hence our approach is to start with a weight function on $\{\p_1,\ldots,\p_6\}$, and show that we can add a seventh point with an appropriate weight to realize the desired triangulation.

Consider a realizable triangulation $\Q_{2,7}(\omega(\t))$ of $\A_{2,7}$.  By Algorithm \ref{InductiveAlgorithm}, $\Q_{2,N}(\omega(\t))$ uniquely determines $\Q_{1,N}(\omega(\t))$.  Moreover, the triangulation of the white polygon whose vertices have common index $i$ in $\Q_{2,7}(\omega(\t))$ is determined by restricting $\omega(\t)$ to the set of neighbors of $\p_i$ in $\Q_{1,7}(\omega(\t))$. Hence to show that a given subdivision $\Q_{2,7}$ of $\A_{2,7}$ is realizable, it suffices to find a weight function $\omega(\t)$ such that the following hold:
\begin{enumerate}
\item $\Q_{1,7}(\omega(\t))$ is the triangulation determined by $\Q_{2,7}(\omega(\t))$.
\item For each $1 \leq i \leq 7$, restricting $\omega(\t)$ to the neighbors of $\p_i$ in the triangulation $\Q_{1,7}(\omega(\t))$ yields the appropriate triangulation.\end{enumerate}

\begin{figure}[ht]
\centering
\includegraphics[trim = {1in 8.75in 1in 1in}, clip]{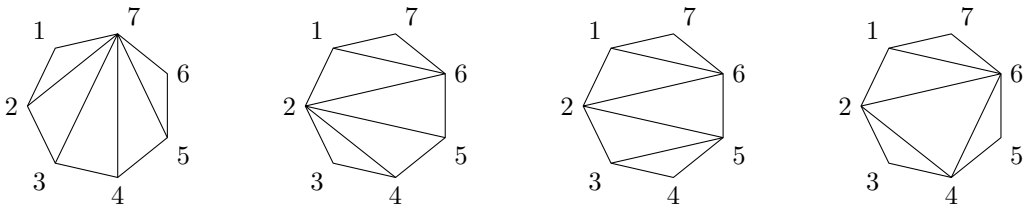}
\caption{The four triangulations $\Q_{1,7}$ of the heptagon, up to rotation and reflection.}
\label{triangulations7}
\end{figure}  

There are four soliton triangulations $\Q_{1,7}$ of the heptagon, up to rotation and reflection.
First, consider the leftmost triangulation in Fig.~\ref{triangulations7}. 
We may assign weights $\{\omega_1,\ldots,\omega_7\}$
to produce any triangulation of $\A_{1,6}$ we desire.  Placing $\pp_7$ high enough then guarantees that all diagonals incident at $\p_7$ appear in the triangulation of $\mathcal{A}_{1,7}$.  Hence, any triangulation $\Q_{2,7}$ which is obtained by blowing up this triangulation of the heptagon is realizable.

For the middle two triangulations in Fig.~\ref{triangulations7}, we assume we have an appropriate weight function on $\p_1,\ldots,\p_6$, and then show that we can add a point $\pp_7$ to obtain the desired $\Q_{2,7}$. For this, note that our point $\pp_7$ must satisfy two constraints.  
\begin{enumerate}
\item The point $\pp_7$ lies below the plane through $\pp_1$, $\pp_2$ and $\pp_6$.
\item The line segment $\{\hat{2}, \hat{7}\}$ passes either below or above the line segment $\{\hat{1},\hat{5}\}$, depending on the desired triangulation.
\end{enumerate}
If $\{\hat{2},\hat{7}\}$ must pass below $\{\hat{1},\hat{5}\}$, this is easily achieved by placing $\pp_7$ low enough.  Otherwise, note that $\pp_5$ lies below the plane spanned by $\pp_1,\pp_2$ and $\pp_6$.  Hence we can achieve the desired configuration by taking $\pp_7$ just slightly below this plane.

Finally, we consider the rightmost triangulation in Fig.~\ref{triangulations7}.  Here, there are five possible cases for the triangulation of the white polygon with common index $6$, corresponding to possible regular triangulations of the pentagon with vertices $\p_1,\p_2,\p_4,\p_5$ and $\p_7$.   Assume we have an appropriate weight function for these points.  We must show that we can place the lifted point $\pp_7$ as needed in each case.

Suppose no diagonal incident at $\p_7$ appears in the desired triangulation of the pentagon.  Then it suffices to simply place $\pp_7$ low enough.  This covers the case where either $\p_1$ or $\p_5$ is incident at both diagonals which appear in the pentagon.

For the remaining cases, note that by our choice of weight function for $\p_1,\ldots,\p_6$, the plane $\mathcal{P}$ through $\pp_1,\pp_2$ and $\pp_4$ must pass below $\pp_6$.
Hence we obtain the desired triangulation $\Q_{1,7}(\omega(\t))$ as long as $\pp_7$ is sufficiently close to $\mathcal{P}$.

Suppose we weight the points $\p_1,\ldots,\p_6$ in such a way that $\{\hat{1},\hat{4}\}$ passes over $\{\hat{2},\hat{5}\}$.
Then placing $\pp_7$ slightly above $\mathcal{P}$ yields the case where $\p_7$ is incident at both diagonals of the pentagon which appear; and placing $\pp_7$ slightly below $\mathcal{P}$ yields the case where $\p_4$ is incident at both diagonals which appear. Finally, suppose $\{\hat{1},\hat{4}\}$ passes below $\{\hat{2},\hat{5}\}$.  Then placing $\pp_7$ slightly above $\mathcal{P}$ yields the case where $\p_2$ is incident to both diagonals.
This completes the proof.  \end{proof}

\begin{theorem}\label{classify37}
There are 203 maximal weakly separated collections in ${{[7]}\choose{3}}$ which are realizable for any choice of the $\k$-parameters.
For each generic choice of the parameters, a total of $231$ collections are realizable.
\end{theorem}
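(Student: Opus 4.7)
My proposal is to extend the polyhedral-fan analysis from Theorem \ref{classify36} to the four-dimensional $\t$-space $\R^{M-3}=\R^4$, using the inductive structure of Algorithm \ref{InductiveAlgorithm} and Theorem \ref{realizable-checking-NM} to enumerate the maximal cones and detect the $\k$-dependent degenerations.

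The plan is first to build $\widetilde{\FF}_{3,7}$ from $\widetilde{\FF}_{1,7}$ and $\widetilde{\FF}_{2,7}$. By Algorithm \ref{InductiveAlgorithm}, the top-dimensional cones of $\widetilde{\FF}_{2,7}$ project one-to-one to those of $\mathcal{F}_{2,7}$, which in turn agree with the cones of $\widetilde{\FF}_{1,7}$. So I would start from the four heptagon triangulations (up to rotation/reflection) shown in Fig.~\ref{triangulations7}, and for each, determine the white polygons with common index $i$ and all of their possible further triangulations. Each such triangulation contributes a top-dimensional cone of $\widetilde{\FF}_{2,7}$ obtained by intersecting the cone of its parent $\Q_{1,7}$ with the cones ${\rm cone}\{\r_i^+,\r_j^-\}$ prescribed by Theorem \ref{cone2tile-NM}. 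Blowing up once more via Algorithm \ref{InductiveAlgorithm} then gives the maximal cones of $\mathcal{F}_{3,7}$; merging cones separated only by a face ${\rm cone}\{\r_i^+,\r_j^-\}$ (black-white flips in the lift to $\widetilde{\FF}_{2,7}$) accounts for the drop in cell count between $\widetilde{\FF}_{2,7}$ and $\mathcal{F}_{3,7}$.

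Next I would obtain the generic count of $231$ by carrying out this enumeration under the standing assumption that no four of the $\hat{\p}_i$ are coplanar for any weight vector lying on a codimension-one face of the fan beyond what is already forced by the combinatorics. Concretely, for each heptagon triangulation I would count the induced triangulations of the white polygons and sum over the rotation/reflection orbits; the total should be $231$. The number $203$ of always-realizable collections then arises as those maximal cones which persist for every admissible $\k_1<\cdots<\k_7$, while the remaining $28$ come in groups tied to the sign of specific quadrilateral determinants $D_{i_1 i_2 i_3 i_4}$ of the form \eqref{e:4-gonD}, exactly as in Theorem \ref{classify36}. The idea is that, at each internal vertex of the heptagon triangulation with enough neighbors (in particular for the rightmost type in Fig.~\ref{triangulations7}, which gives a $5$-gon with common index), the associated secondary fan in $\t$-space can collapse certain pairs of full-dimensional cones to a single lower-dimensional cell when the corresponding three planes $D_{i_1 i_2 i_3 i_4}=0$ meet along a common ray; the trichotomy (which side of this ray $\t$ lies on) governs which of the two refinements is realized, and only one of the two is realizable for any given generic $\k$.

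The main obstacle will be the bookkeeping: in dimension four there are multiple degeneration loci (analogues of the single locus \eqref{zero}), one for each common-index $5$-gon appearing in some $\Q_{2,7}(\omega(\t))$, and I must identify all of them, verify that each has the $\Gr(3,6)$-type structure (three two-dimensional cones meeting along a ray, whose existence is governed by a $3\times 3$ determinant in the elementary/homogeneous symmetric polynomials of the $\k$'s), and check that no higher codimension degeneration occurs for admissible $\k$. Once each such locus is isolated, the counts $203$ (intersection over all choices of $\k$) and $231$ (for generic $\k$, where each determinantal locus is avoided) follow by inclusion-exclusion on the $14$ degenerate pairs that Algorithm \ref{InductiveAlgorithm} produces, namely the $7$ rotations of each of the two mirror $\Q_{2,6}$-degenerations of Fig.~\ref{bad6} embedded inside $\Q_{2,7}$ around each boundary vertex of the heptagon.
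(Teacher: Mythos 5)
Your overall strategy --- enumerate the maximal cones of the fan refinement in $\t$-space via Theorem \ref{realizable-checking-NM} and Algorithm \ref{InductiveAlgorithm}, then locate the $\k$-dependent degeneration loci --- is legitimate in principle, but it is not the route the paper takes: the paper explicitly avoids the four-dimensional fan and instead reasons directly about the placement of the lifted points $\pp_1,\ldots,\pp_7$, reducing each case to a six-point sub-configuration plus one added point (this is the content of the proof of Theorem \ref{realize37}, which the classification proof reuses). More importantly, your proposal has a concrete counting gap that makes the stated numbers unreachable. Since $231-203=28$, there must be $28$ ``degenerate pairs,'' i.e.\ $56$ maximal weakly separated collections that fail to be realizable for some choice of $\k$, out of the $259$ total for ${{[7]}\choose{3}}$ --- a total you never invoke but need in order to arrive at $203=259-56$. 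Your accounting produces only $14$ pairs (the $7$ rotations of the two mirror $\Q_{2,6}$-degenerations of Fig.~\ref{bad6}), which would yield a difference of $14$, not $28$, between the generic and universal counts. The paper instead finds \emph{four} base $\k$-dependent tilings $\Q_{3,7}$ (Fig.~\ref{bad37}) and applies all $14$ dihedral symmetries of the heptagon, not just the $7$ rotations, giving $4\times 14=56$.

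A second gap is your assumption that every degeneration locus is a copy of the $\Gr(3,6)$ condition \eqref{zero} localized at a common-index $5$-gon. The obstructions actually live on six-point sub-configurations (embedded copies of the bad hexagon tilings of Fig.~\ref{bad6}), not on $5$-gons, and one of the base cases (the tiling at lower right of Fig.~\ref{badtilings7}) is governed by a genuinely seven-point determinant condition involving $h_k(2,3,5,7)$ and $h_k(1,3,4,7)$, which cannot be read off from any single $\Gr(3,6)$ sub-configuration. You would also need to account for the fact that distinct bad triangulated tilings $\Q_{2,7}$ can blow up to the \emph{same} plabic tiling $\Q_{3,7}$ (as happens for the lower-middle and lower-right tilings of Fig.~\ref{badtilings7}), which changes the inclusion--exclusion at the last step. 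Without these corrections the enumeration cannot close to $203$ and $231$.
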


\begin{proof}
We first show that all triangulated plabic tilings $\Q_{2,7}$ are realizable for \emph{any} choice of the $\k$-parameters, except those which can obtained from the triangulations in Fig.~\ref{badtilings7} by applying symmetries of the heptagon. 

In the proof of Theorem \ref{realize37}, we realize each plabic tiling $\Q_{2,7}$ by first choosing an appropriate weight function on the points $\{\p_1,\ldots,\p_6\}$, and then adding a lifted point $\pp_7$.  None of the arguments require any restriction on the location of the (non-lifted) point $\p_7$ in the $pq$-plane, or equivalently on the value of $\kappa_7$.

Recall that any $\Q_{2,7}$ can be obtained by blowing up a unique triangulation $\Q_{1,7}$.  If $\Q_{1,7}$ is the leftmost triangulation in Fig.~\ref{triangulations7}, up to symmetries of the heptagon, the proof of Theorem \ref{realize37} simply requires that we find a weight function on $\{\p_1,\ldots,\p_6\}$ which induces the desired subdivision on $\Q_{1,6}$.  This is possible for any choice of the parameters.  

Otherwise, we must find a weight function on $\{\p_1,\ldots,\p_6\}$ which gives some specified $\Q_{2,6}$.  This is possible for any choice of $\{\kappa_1,\ldots,\kappa_6\}$, unless the desired triangulation is one of those shown in 
Fig.~\ref{bad6}, up to symmetry.  This occurs precisely when $\Q_{2,6}$ is one of those shown in Fig.~\ref{badtilings7}, up to symmetry, so the first part of the claim is proved.

\begin{figure}[ht]
\centering
\includegraphics[scale = 0.8, trim = {1in 6in 1in 1in}, clip]{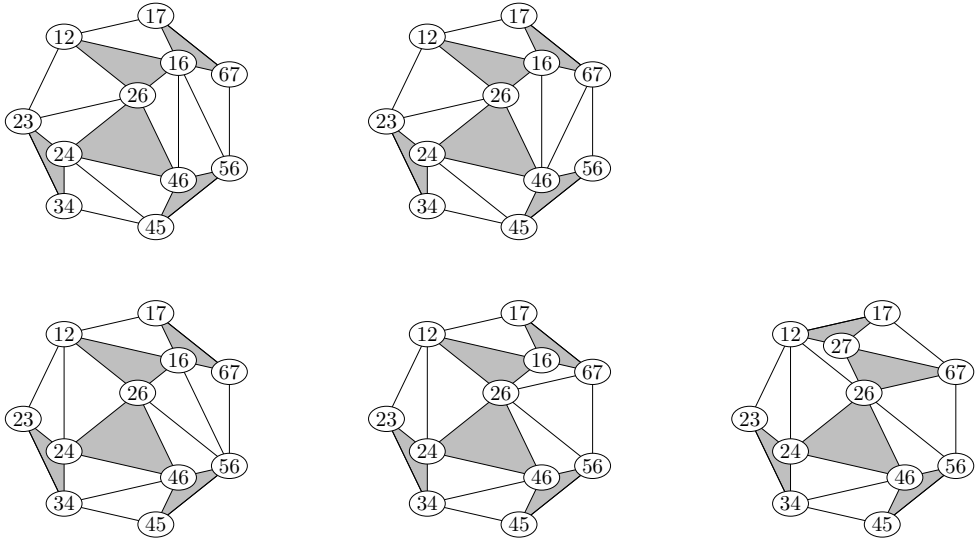}
\caption{Triangulated plabic tilings $\Q_{2,7}$ which are only realizable for some choice of parameters.}
\label{badtilings7}
\end{figure} 

We now determine when the triangulated plabic tilings in Fig.~\ref{badtilings7} are realizable.
By Theorem \ref{classify36},
the two plabic tilings on the top row of 
Fig.~\ref{badtilings7}
are not realizable unless
the determinant \eqref{zero} is negative.
Conversely, if this condition holds, then both plabic tilings are realizable, by Theorem \ref{classify36} and the proof of Theorem \ref{realize37}. Similarly, the first two plabic tilings on the bottom row of 
Fig.~\ref{badtilings7} are realizable if and only if the determinant in \eqref{zero} is positive.

The case of the tiling at bottom right in 
Fig.~\ref{badtilings7}
is more complicated.  By Theorem \ref{classify36}, this tiling cannot be realizable unless the determinant in
\eqref{zero} is positive, which suffices for our purposes. It can be shown, however, that this triangulation is realizable if and only if the $\k$-parameters satisfy the stronger condition
\[
\left|
\begin{matrix}
1  &  h_1(2,3,5,7)  & h_2(2,3,5,7)  \\
1  &  h_1(1,2,4,5)  & h_2(1,2,4,5) \\
1  &  h_1(1,3,4,7)  & h_2(1,3,4,7) 
\end{matrix} \right|> 0.\]

\begin{figure}[ht]
\centering
\includegraphics[scale = 0.8, trim = {1in 5.75in 1in 1in}, clip]{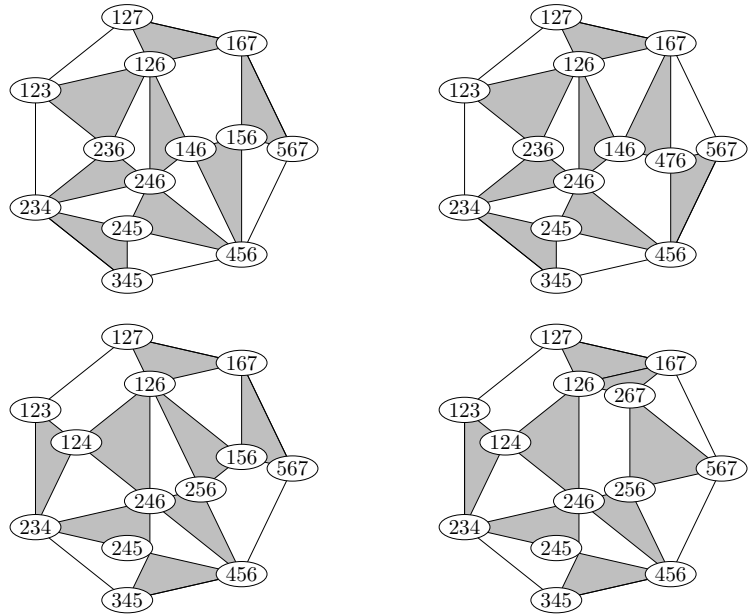}
\caption{Plabic tilings $\Q_{3,7}$ which are only realizable for some choice of the $\k$-parameters.}
\label{bad37}
\end{figure} 

Blowing up the tilings in
Fig.~\ref{badtilings7},
we obtain the (non-triangulated) plabic tiling $\Q_{3,7}$ in Fig.~\ref{bad37}.
Moreover, the plabic tilings on the top row of Fig.~\ref{bad37}
can \emph{only} be obtained by blowing up the corresponding tilings on the top row of Fig.~\ref{badtilings7};
the tiling at lower left in Fig.~\ref{bad37}
can \emph{only} be obtained by blowing up the tiling at lower left in Fig.~
 \ref{badtilings7};
and the tiling at lower right in Fig.~
\ref{bad37}
can \emph{only} be obtained by blowing up one of the tilings shown respectively at lower middle and lower right in Fig.~
 \ref{badtilings7}.
 Hence the tilings on the top row of Fig.~\ref{bad37} are realizable if and only if 
\eqref{zero} is negative
 and the tilings on the bottom row of Fig.~\ref{bad37} are realizable if and only if 
\eqref{zero} is positive.
 
In sum, for each choice of the $\k$-parameters, exactly two of the four tilings in Fig.~\ref{bad37} are realizable.  Applying the $14$ symmetries of the heptagon to the plabic tilings shown in Fig.~\ref{bad37}, we obtain a total of $56$ triangulations. Half of these, or 28 total, are realizable for any given generic choice of parameters.  There are 259 maximal weakly separated collections in ${{[7]}\choose{3}}$, so this leaves $203$ tilingss which must be realizable for any choice of the $\k$-parameters.
\end{proof}

\subsection{Results for $\Gr(3,8)_{>0}$}
\label{subsec: 38}

For $\Gr(3,8)_{>0}$, we do not yet have a classification of the possible soliton triangulations for each choice of parameters.  However, we can prove the following analog of Theorem \ref{realize37}.

\begin{theorem}\label{realize38} 
Every maximal weakly separated collection for $\Gr(3,8)_{>0}$ is realizable.
\end{theorem}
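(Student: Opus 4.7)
The plan is to mimic the proof of Theorem \ref{realize37} by inducting on the number of boundary vertices, adding one new vertex to an existing realization. First, by Lemma \ref{tilings} it suffices to show that every triangulated plabic tiling $\Q_{2,8}$ for $\Gr(2,8)_{>0}$ is realizable. Given such a $\Q_{2,8}$, let $\Q_{1,8}$ be the unique triangulation of the octagon that Algorithm \ref{InductiveAlgorithm} blows up to $\Q_{2,8}$. By the classical two-ears theorem, $\Q_{1,8}$ has an ear vertex $\p_i$: a vertex of degree two whose only adjacent triangle is $\{\p_{i-1},\p_i,\p_{i+1}\}$. Since the induced degree of $\p_i$ in $\Q_{1,8}$ equals $2$, the blow-up produces no white polygon centered at $\p_i$ in $\Q_{2,8}$, and deleting $\p_i$ yields a triangulation $\Q'_{1,7}$ of the heptagon on $\{\p_j : j\in [8]\setminus\{i\}\}$ whose blow-up $\Q'_{2,7}$ coincides with the restriction of $\Q_{2,8}$ to index sets avoiding $i$. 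By the proof of Theorem \ref{realize37}, we may fix parameters $\kappa_j$ and weights $\omega_j$ (for $j\ne i$) that realize $\Q'_{2,7}$; choose $\kappa_i\in(\kappa_{i-1},\kappa_{i+1})$ so that the eight $\kappa$-parameters remain in increasing cyclic order.

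The remaining task is to exhibit a weight $\omega_i$ such that the regular subdivision induced on the full eight-point lifted configuration is exactly $\Q_{1,8}$; Algorithm \ref{InductiveAlgorithm} will then recover $\Q_{2,8}$. I claim that any sufficiently negative $\omega_i$ does the job. Three facts need to hold as $\omega_i\to-\infty$. (a) Since $\pp_i$ sinks arbitrarily low, it lies below the supporting plane of every upper face of the seven-point upper hull, so every face of $\Q'_{1,7}$ persists in the augmented configuration. (b) The plane through $\pp_{i-1},\pp_i,\pp_{i+1}$ acquires height tending to $+\infty$ at $(p_j,q_j)$ for any other index $j$: its value there is a barycentric combination of $\omega_{i-1},\omega_i,\omega_{i+1}$, and the coefficient on $\omega_i$ is negative because $\p_j$ lies on the opposite side of the chord $\p_{i-1}\p_{i+1}$ from $\p_i$; hence the ear triangle becomes an upper $2$-face. (c) No spurious upper $2$-face $\{\pp_i,\pp_j,\pp_k\}$ with $\{j,k\}\ne\{i-1,i+1\}$ arises: since the parabola $q=p^2$ is strictly convex and no $\kappa$-value lies between $\kappa_{i-1}$ and $\kappa_i$ or between $\kappa_i$ and $\kappa_{i+1}$, at least one of $\p_{i-1},\p_{i+1}$ has positive $\p_i$-barycentric coordinate in the triangle $\{\p_i,\p_j,\p_k\}$, so the candidate plane is dragged below that neighbor as $\omega_i\to-\infty$.

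The main obstacle is step (c): verifying the positivity of barycentric coordinates uniformly in $j$ and $k$. The underlying geometric fact is that the parabolic arc containing $\p_{i-1},\p_i,\p_{i+1}$ lies entirely on one side of any chord $\p_j\p_k$ with $\{j,k\}\cap\{i-1,i,i+1\}=\emptyset$ (and in the remaining case where one of $j,k$ equals $i\pm 1$, the other neighbor lies on the same side as $\p_i$); a signed-area argument then gives the needed sign. Granted this, $\omega_i$ can be chosen in the intersection of the finite list of open half-lines defined by conditions (a)--(c), which is non-empty because each constraint is one-sided of the form $\omega_i<c$. This realizes $\Q_{1,8}$, and hence $\Q_{2,8}$, completing the proof.
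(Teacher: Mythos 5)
There is a genuine gap, and it sits exactly where the paper's proof has to work hardest. Your argument, after fixing an ear $\p_i$ of $\Q_{1,8}$ and realizing the deleted configuration, only controls the triangulation $\Q_{1,8}$ of the octagon. But by Lemma \ref{tilings} what must be realized is the \emph{triangulated} plabic tiling $\Q_{2,8}$, and this is strictly more data than $\Q_{1,8}$: by Algorithm \ref{InductiveAlgorithm}, $\Q_{1,8}$ determines only the black triangles and the (untriangulated) white polygons of $\Q_{2,8}$, while the triangulation of the white polygon with common index $k$ is governed by the restriction of the weight function to the neighbors of $\p_k$ in $\Q_{1,8}$. Since the ear $\p_i$ is a neighbor of $\p_{i-1}$ and $\p_{i+1}$, the vertices $\p_{(i-1)i}$ and $\p_{i(i+1)}$ appear in the white polygons centered at $\p_{i-1}$ and $\p_{i+1}$, and sending $\omega_i\to-\infty$ forces $\pp_i$ below every relevant plane, hence forces those two vertices to have degree $2$ in their white polygons. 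Any target $\Q_{2,8}$ whose white polygon at $i-1$ or $i+1$ has a diagonal incident at the vertex involving $i$ is therefore unreachable by your construction, and you cannot always escape by choosing the other ear: take $\Q_{1,8}$ to be the fan from $\p_1$ (ears only at $\p_2$ and $\p_8$), whose blow-up has a white heptagon centered at $\p_1$ on the neighbor set $\{\p_2,\ldots,\p_8\}$; a triangulation of that heptagon with diagonals $\{2,4\},\{4,8\},\{4,6\},\{4,7\}$ has diagonals incident at both $\p_{12}$ and $\p_{18}$, so neither ear admits the ``push to $-\infty$'' step. (A secondary symptom of the same problem: the ``restriction of $\Q_{2,8}$ to index sets avoiding $i$'' is not a triangulated plabic tiling when such diagonals are present, so the inductive hypothesis is not even cleanly stated.)

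Your steps (a)--(c) are fine as far as they go --- they correctly show that the ear triangle can be appended to a realization of $\Q_{1,7}'$ --- but they reprove only the easy half of the theorem. The hard content of Theorem \ref{realize38} (and of the paper's proof of Theorem \ref{realize37}, which you cite) is precisely the case where the added point must \emph{not} be placed arbitrarily low: one must position $\pp_i$ ``just slightly above'' or ``just slightly below'' certain planes, or start from the degenerate configuration $\Q_{2,8}^*$ and perturb weights and $\kappa$-values, in order to simultaneously achieve the prescribed triangulations of all white polygons without disturbing $\Q_{1,8}$. A one-sided condition of the form $\omega_i<c$ cannot do this, so the finitely-many-half-lines conclusion does not apply to the constraints that actually arise. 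To repair the proof you would need to carry the white-polygon triangulations through the induction and handle the interaction between the new vertex and the white polygons at $\p_{i\pm1}$, which is essentially the case analysis the paper performs.
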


\begin{proof}
We show that every triangulated plabic tiling $\Q_{2,8}$ is realizable.  The result then follows by Lemma \ref{tilings}.
Each $\Q_{2,8}$ corresponds to a unique triangulation of the octagon $\A_{1,8}$.   The arguments used in the proof of Theorem \ref{realize37} show that $\Q_{2,8}$ is realizable in the case where the corresponding triangulation of $\A_{1,8}$ has a triangle with one vertex of degree 2, one vertex of degree 3, and one vertex of degree at most 5.  Similarly, any triangulation arising from a triangulation of $\A_{1,8}$ which has one vertex that is adjacent to all the others is realizable, by the arguments used in the proof of Theorem \ref{realize37}.

Up to symmetry, this leaves three triangulations of $\A_{1,8}$, which are shown in Fig.~\ref{triangulations8}.  We first consider triangulations $\Q_{2,8}$ arising from the leftmost triangulation in Fig.~\ref{triangulations8}. These are precisely the triangulations $\Q_{2,8}$ which refine the subdivision shown at left in Fig.~\ref{MediumTilings8}. 

\begin{figure}[ht]
\centering
\includegraphics[scale = 0.9, trim = {1in 8.75in 1in 1in}, clip]{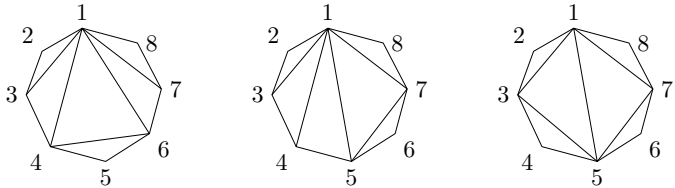}
\caption{Triangulations $\Q_{1,8}$ of the octagon.}
\label{triangulations8}
\end{figure} 
 
Fix such a $\Q_{2,8}$, and suppose the diagonal corresponding to $\{\hat{1},\hat{5}\}$ does not appear in either of the two white quadrilaterals.  Then we can realize $\Q_{2,8}$ by arranging the points $\{\pp_i : i \neq 5\}$  appropriately, and then assigning $\pp_5$ a low-enough weight.  Similarly, if the diagonal corresponding to $\{\hat{1},\hat{5}\}$ appears in both white quadrilaterals, it suffices to place $\pp_5$ high enough.

Next, suppose the diagonal corresponding to $\{\hat{1},\hat{5}\}$ appears in exactly one of the two quadrilaterals.  We consider the case where $\Q_{2,8}$ refines the subdivision shown at right in Fig.~\ref{MediumTilings8}; the other case is analogous.  For this, we choose the parameters $\kappa_i$ so that in the $pq$-plane, the segment 
$\{1,5\}$ intersects $\{3,6\}$, and $\{4,7\}$ to the \emph{right} of the point where the latter two segments intersect. 

We assign all points 
$\{\pp_i : i \neq 1,5\}$
the same weight, and assign a higher weight to $\pp_1$.  To obtain the desired subdivision, we then assign a weight to $\pp_5$ so that the segment $\{\hat{1},\hat{5}\}$ passes just slightly above $\{\hat{4},\hat{7}\}$.  We then adjust the weights of the the points
$\{\pp_i : i \neq 1, 5\}$
to obtain the desired subdivision of the white hexagon with common index $1$.  Since we can triangulate the hexagon using arbitrarily small adjustments of the weights, there is no danger of disturbing the rest of the configuration, and this case is complete.  The argument for a tiling $\Q_{2,8}$ corresponding to the middle triangulation in Fig.~\ref{triangulations8} is analogous.

\begin{figure}[ht]
\centering
\includegraphics[scale = 0.9, trim = {1in 8.25in 1in 1in}, clip]{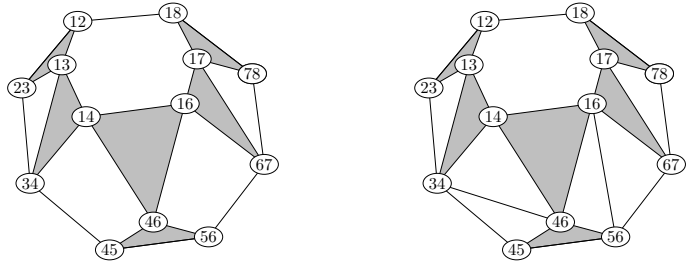}
\caption{Plabic tilings $\Q_{2,8}$ which arise from the first triangulation in Fig.~\ref{triangulations8}.}
\label{MediumTilings8}
\end{figure} 

It remains to show that we can realize all triangulated plabic tilings $\Q_{2,8}$ which arise from the rightmost triangulation in Fig.~\ref{triangulations8}, up to rotation and reflection.  
First, note that every such $\Q_{2,8}$ refines one of the six plabic tilings shown in Fig.~\ref{BadTilings8}, up to rotation and reflection.  (This is not immediately obvious, but follows by a simple case check.) 

If $\Q_{2,8}$ refines the tiling shown at upper left, it suffices to arrange the points $\{\pp_i : i \neq 2\}$, and then place $\pp_2$ low enough.  

\begin{figure}[ht]
\centering
\includegraphics[scale = 0.9, trim = {1in 6.5in 1in 1in}, clip]{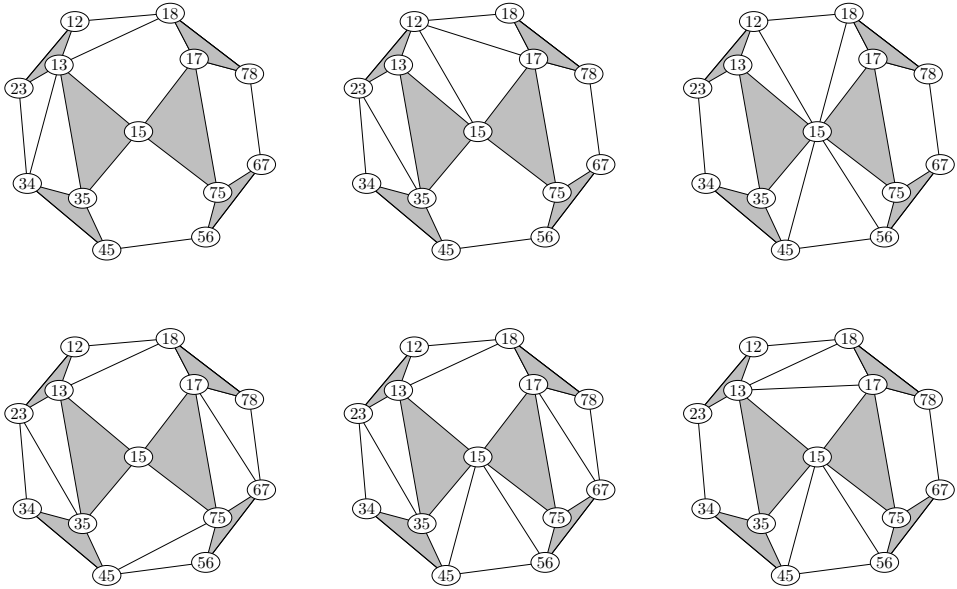}
\caption{Plabic tilings $\Q_{2,8}$ which arise from the third triangulation in Fig.~\ref{triangulations8}.}
\label{BadTilings8}
\end{figure} 

For several of the remaining cases, we start by arranging the $\pp_i$ to give a subdivision at $N=2$ with the correct black triangles, and with all the white polygons planar.  This is possible if the $\kappa_i$ are chosen so that in the $pq$-plane, the segments $\{1,4\},$ $\{2,5\},$ and $\{3,7\}$  intersect at a single point; and the same holds for $\{1,6\}$, $\{5,8\}$, and $\{3,7\}$.  We will call this degenerate subdivision $\Q_{2,8}^*$.

To realize the upper-middle tiling in Fig.~\ref{BadTilings8}, start with $\Q_{2,8}^*$, and lift $\pp_2$.  To refine the resulting tiling, we may first raise or lower $\pp_8$ to achieve the desired triangulation of the quadrilateral with common index $7$, and then adjust the heights of $\{\pp_1,\pp_3,\pp_4,\pp_6, \pp_7\}$ to triangulate the pentagon with common index $5$.  Note that at each step, we can make the height adjustments arbitrarily small, so there is no danger of disturbing the rest of the configuration.

For the tiling at upper right, we start with $\Q_{2,8}^*$, and lower both $\pp_3$ and $\pp_7$. To triangulate the white quadrilaterals, we then adjust $\pp_4$ and $\pp_8$.

The case of the tiling at lower left is slightly more complicated.  To realize this tiling, we first arrange the points $\{\pp_i : i \neq 2, 6\}$
appropriately, such that all the white polygons are planar.  Choose $\kappa_2$ so that $\{2,5\}$ crosses $\{1,4\}$ and $\{3,7\}$ to the left of the point where the latter two segments intersect, where the octagon is oriented as in Fig.~\ref{triangulations8}.  In other words, the segment $\{2,5\}$ crosses $\{3,7\}$ between the vertex $\p_3$ and the intersection of $\{3,7\}$ and $\{1,4\}$.  Then we can assign an appropriate weight to $\pp_2$ so that $\{\hat{2},\hat{5}\}$ passes just above $\{\hat{1},\hat{4}\}$, and hence below $\{\hat{3},\hat{7}\}$ as desired.  By a similar argument, we can add the point $\pp_6$, for an appropriate choice of $\kappa_6$, to produce the desired configuration.  Raising and lowering $\pp_4$ and $\pp_8$, we can refine the tiling as needed.

The case of the lower middle is similar, but simpler; we place all points $\{\pp_i : i \neq 2\}$ as desired, so that all white polygons of the resulting tiling are planar.  We then add $\pp_2$ as in the previous case, with $\kappa_2$ chosen appropriately.  Raising $\pp_1$ slightly gives the desired tiling, which we may then refine by adjusting $\pp_8$.

Finally, for the tiling at lower right, we start with $\Q_{2,8}^*$, and assume that the segments $\{\hat{1},\hat{5}\}$ and $\{\hat{3},\hat{7}\}$ are both parallel to the $pq$-plane.  We adjust $\p_3$ by decreasing $\kappa_3$ slightly, so that $\p_3$ moves toward $\p_2$, without changing the weight of $\pp_3$.  This gives the desired triangulation of the pentagon with common index $1$, and ensures that the diagonal corresponding to $\{1,4\}$ appears in the pentagon with common index $5$.  Lowering $\pp_7$ slightly then gives the desired tiling, and again we can triangulate the quadrilaterals as needed.  This completes the final case, and with it the proof.
\end{proof}


\section{Non-realizable soliton graphs}
\label{sec: non-realizable}

In this section we show that not all weakly separated collections are realizable.  We are grateful to Hugh Thomas for suggesting a counterexample, which we describe in the proof of Theorem \ref{NotRealizable}.  

\subsection{Combinatorial background}
\label{subsec: combo background}
Before proceeding to the proof, we give some background on pseudoline arrangements, and some additional details about plabic graphs.  For a reference on pseudoline arrangements, see for example \cite{richter}. 

A \emph{pseudoline} is a simple closed curve in the real projective plane $\mathbb{P}^2$ which is topologically equivalent to a line; in particular, a pseudoline has no self-intersections.  An \emph{arrangement} of pseudolines is a collection of pseudolines $\mathcal{L} = (L_1,\ldots,L_n)$ such that for any $1 \leq i < j \leq n$, the pseudolines $L_i$ and $L_j$ intersect exactly once.  A pseudoline arrangement is \emph{simple} if no three pseudolines meet in a common point.  Two pseudoline arrangements are \emph{equivalent} if they generate isomorphic cell decompositions of $\mathbb{P}^2$.  An arrangement of pseudolines is \emph{stretchable} if it is equivalent to an arrangement of projective lines.  Every arrangement of eight pseudolines or fewer is stretchable \cite{goodman}.  However, there is a non-stretchable arrangement of $9$ pseudolines, and hence of $n$ pseudolines for any $n > 9$ \cite{ringel}.

As described in \cite{richter}, we may visualize the real projective plane $\mathbb{P}^2$ as a sphere in $\R^3$ with antipodal points identified, and visualize pseudolines as \emph{great pseudocircles} on the sphere.  Assuming without loss of generality that each pseudoline crosses the equator exactly once, and that no crossing of pseudolines occurs on the equator, we may then restrict ourselves to the upper hemisphere.  Projecting to $\R^2$, we obtain an arrangement of \emph{affine pseudolines}.  We define simplicity, equivalence, and stretchability for arrangements of affine psuedolines in the obvious way.  A non-stretchable arrangement of pseudolines in $\mathbb{P}^2$ gives non-stretchable arrangement of affine pseudolines in $\mathbb{R}^2$.

For the proof of Theorem \ref{NotRealizable}, we need a bit more information about plabic graphs. 

\begin{definition}
A \emph{plabic graph} is a planar graph embedding in a disk, with vertices colored black or white. A plabic graph has $M$ \emph{boundary vertices} located on the boundary of the disk, numberered $1,2,\ldots,M$ in counter-clockwise order.  All boundary vertices have degree one.
\end{definition}

Previously, we did not give a precise definition of a \emph{reduced plabic graph}.  Postnikov originally defined reducedness in terms of certain local transformations of graphs \cite{postnikov}.  He then proved a criterion for being reduced in terms of \emph{trips}. 
 
A \emph{trip} in a plabic graph $G$ is a directed path which turns (maximally) left at each white internal vertex, and (maximally) right at each black internal vertex.  Let $T_i$ denote the trip which starts at boundary vertex $i$, and continues until it reaches the boundary again. The \emph{trip permutation} $\pi$ of $G$ is the permutation defined by setting $i \mapsto j$ if the trip $T_i$ ends at boundary vertex $j$.  Note that a trip in a plabic graph may either be a closed cycle containing no boundary vertices (called a \emph{round trip}), or it may connect two boundary vertices $i$ and $j$.   

We label each face of a plabic graph with an $i$ if it is to the left of the trip that beings at vertex $i$.  For soliton graphs, this recovers the usual face labels \cite{kodama}.

\begin{definition}\cite[Theorem 13.2]{postnikov}
\label{reduced}
A plabic graph is \emph{reduced} if and only if it satisfies the following conditions:
\begin{enumerate}
\item $G$ has no round trips.
\item No trip in $G$ uses the same edge twice (unless that edge connects a boundary vertex to an adjacent leaf).
\item No two in $G$ trips have a pair of common edges $(e_1,e_2)$, where both trips are directed from $e_1$ to $e_2$.
\end{enumerate}
\end{definition}

The trips $T_i$ in a reduced plabic graph $G$ induce a permutation $\pi$ on the boundary vertices, defined by setting $\pi(i) = j$ if the trip $T_i$ ends at boundary vertex $j$.  For soliton graphs, this gives the usual permutation. With these conventions, $G$ is a reduced plabic graph for $\Gr(N,M)_{>0}$ if and only if the trip permutation of $G$ is $i \mapsto i - N$, where indices are taken modulo $M$ \cite[Lemma 17.6]{postnikov}.

\subsection{A non-realizable soliton graph}
\label{subsec: counterexample}

\begin{theorem}{\label{NotRealizable}}
For every $N \geq 9$, there exists a plabic graph for $\Gr(N, 2N)_{>0}$ which is not a soliton graph, even up to contraction equivalence.  Equivalently, there exists a weakly separated collection for $\Gr(N,2N)_{>0}$ which is not realizable.
\end{theorem}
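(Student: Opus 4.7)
The plan is to exploit the combinatorial framework set up in Section \ref{subsec: combo background}. For $\Gr(N, 2N)_{>0}$ the trip permutation $\pi(i) = i - N \pmod{2N}$ is an involution, so in any reduced plabic graph $G$ for $\Gr(N,2N)_{>0}$ the $2N$ trips pair up into $N$ undirected ones, and the reducedness axioms of Definition \ref{reduced} force any two such undirected trips to cross exactly once. Hence the trips of $G$ realize a simple affine pseudoline arrangement of $N$ pseudolines in the disk, with $2N$ endpoints on the boundary in the appropriate antipodal cyclic order. The first step of the proof is to observe that this correspondence is a bijection: every simple arrangement of $N$ pseudolines in the disk with the correct boundary data arises from a unique reduced plabic graph for $\Gr(N,2N)_{>0}$, up to contraction equivalence.

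The key step is to prove that if a plabic graph $G$ for $\Gr(N,2N)_{>0}$ is realizable as a soliton graph $\mathcal{C}_{\mathbf{t}}(\mathcal{M}(A))$, then the pseudoline arrangement of its trips is \emph{stretchable}. Concretely, each trip $T_i$ in a soliton graph is a piecewise linear curve in the $xy$-plane whose two asymptotic rays, at $y \to \pm\infty$, are $\{i, i+N\}$- or $\{i, i-N\}$-solitons whose directions are fixed by the $\kappa$-parameters. The plan is to deform the bounded portion of each trip to a straight line matching these asymptotic rays without changing the combinatorial crossing pattern, using that the line-soliton structure of $\mathcal{C}_{\mathbf{t}}(\mathcal{M}(A))$ is controlled by the piecewise linear function $f_{\mathcal{M}(A)}$ and, in particular, that any two of the planes $\Theta_I, \Theta_J$ in $\R^3$ meet in a single straight line. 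With stretchability in hand, the counterexample is immediate: by Ringel's theorem cited in Section \ref{subsec: combo background}, there exists a simple non-stretchable arrangement of $9$ pseudolines in $\R^2$, so the corresponding reduced plabic graph for $\Gr(9,18)_{>0}$ cannot be a soliton graph, even up to contraction equivalence; equivalently, its set of face labels is a maximal weakly separated collection in $\binom{[18]}{9}$ which is not realizable.

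For $N > 9$, the plan is to extend Ringel's arrangement by appending $N - 9$ further pseudolines in general position; any such extension remains non-stretchable, since deleting a pseudoline from a stretchable arrangement gives a stretchable arrangement, and the resulting plabic graph for $\Gr(N, 2N)_{>0}$ is again non-realizable. The hard part will be the stretchability step: turning the intuition that soliton trips are piecewise linear with controlled asymptotic slopes $-1/(\kappa_i + \kappa_j)$ into a rigorous statement that the entire trip arrangement admits an ambient isotopy onto a straight-line arrangement. Once that obstacle is overcome, the rest of the argument is a direct translation between plabic graphs and pseudoline arrangements, with the non-realizability passed cleanly between the two languages via the trip construction.
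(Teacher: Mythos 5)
Your overall strategy coincides with the paper's: take Ringel's simple non-stretchable arrangement of $9$ pseudolines (padded with further pseudolines in general position for $N>9$), turn it into a reduced plabic graph for $\Gr(N,2N)_{>0}$ by replacing each crossing with a bicolored square, verify reducedness and the trip permutation via Definition \ref{reduced}, and argue that realizability as a soliton graph would force the arrangement to be stretchable.

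The genuine gap is exactly the step you flag as ``the hard part.'' You propose to straighten each trip by an ambient isotopy matching its asymptotic rays, but proving that such an isotopy exists and preserves the crossing pattern is essentially equivalent to proving stretchability, so as stated the argument is circular. The paper closes this gap with no deformation at all: if the graph were a soliton graph, then every edge of the trip $T_k$ that corresponds to a segment of the pseudoline $L_k$ is a $\{k,k-N\}$-soliton, i.e.\ it separates a region labelled $I$ (with $k\in I$) from the region labelled $I\setminus\{k\}\cup\{k-N\}$, and the equation $\Theta_I=\Theta_{I\setminus\{k\}\cup\{k-N\}}$ reduces to $\theta_k=\theta_{k-N}$, a single straight line in the $xy$-plane \emph{independent of} $I$. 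Hence all such segments of $T_k$ are already collinear, and since the line $\theta_k=\theta_{k-N}$ must meet $\theta_\ell=\theta_{\ell-N}$ inside the square replacing the crossing of $L_k$ and $L_\ell$, these $N$ honest lines realize the combinatorics of the arrangement --- a stretching, and the desired contradiction. You should replace the isotopy plan with this observation, which is the one substantive idea your write-up is missing. Two smaller points: your opening ``bijection'' between simple pseudoline arrangements and reduced plabic graphs up to contraction equivalence is false (square moves change the graph's weakly separated collection but not its trip arrangement, already for $\Gr(2,4)_{>0}$), and, like the assertion that the trips of an \emph{arbitrary} reduced plabic graph for this permutation pairwise cross exactly once, it is not needed: the proof only uses the one direction from an arrangement to a graph.
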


\begin{proof}
Consider a simple, non-stretchable arrangement $\mathcal{L}$ of $N$ affine pseudolines in the Euclidean plane.  Without loss of generality, assume we can construct a circle $C$ such that:
\begin{enumerate}
\item $C$ encloses all intersections of pseudolines in $\mathcal{L}$,
\item Each pseudoline in $\mathcal{L}$ intersects $C$ exactly twice, and
\item All intersections of the pseudolines in $\mathcal{L}$ with the circle $C$ are transversal.
\end{enumerate}

We erase the part of each pseudoline outside of $C$, place a boundary vertex at each intersection of a pseudoline with $C$, and label the boundary vertices $1,2,\ldots,2N$ in counterclockwise order.  Next, we replace each intersection of pseudolines with a bicolored square, as shown in Fig.~\ref{addsquare}. Let $G$ be the resulting graph, which is embedded in a disk with boundary $C$.  

\begin{figure}[ht]
\centering
\includegraphics[scale = 0.75, trim = {1in 9in 1in 1in}, clip]{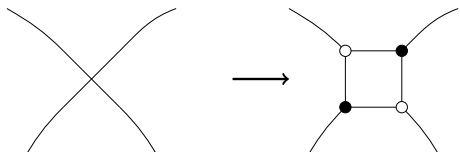}
\caption{Replacing a crossing between two Pseudolines with a black-white square.}
\label{addsquare}
\end{figure} 

We claim that $G$ is a reduced plabic graph.  First, note that each pseudoline in $\mathcal{L}$ connects some boundary vertex $k$ to the boundary vertex $k - N$, where indices are taken modulo $2N$. Label the pseudolines in $\mathcal{L}$ as $L_1,\ldots,L_N$, with indices taken modulo $N$, so that $L_k$ contains boundary vertex $k$.  The trip $T_k$ in $G$ follows $L_k$, taking a detour around two sides of each added square which intersects the pseudoline.  Hence no trip in $G$ crosses itself.  The common edges of the trips $T_k$ and $T_{k - N}$ are precisely those which correspond to segments of $L_k$, and $T_k$ and $T_{k - N}$ pass through those edges in opposite order.  

If $k \not\equiv \ell \pmod{2N}$, then $T_k$ and $T_{\ell}$ have a single common edge; this edge occurs in the square corresponding to the intersection of $L_k$ and $L_{\ell}$.  If follows from \cite[Theorem 13.2]{postnikov} that $G$ is a reduced plabic graph.  The trip permutation of $G$ is defined by $k \mapsto k - N$, so by \cite[Lemma 17.6]{postnikov}, the plabic graph $G$ corresponds to $\Gr(N,2N)_{>0}$.

Assume for the sake of contradiction that $G$ is a soliton graph, up to contraction equivalence.  In the corresponding contour plot, each edge in $T_k$ which represents a segment of $L_k$ separates a region where $\Theta_k$ is dominant from one where $\Theta_{k -N}$ is dominant.  Hence each such edge is a segment of the line defined by $\Theta_k = \Theta_{k - N}$.  Moreover, if $k \not\equiv \ell \pmod{2N}$, then the line $\Theta_k = \Theta_{k - N}$ must intersect the line $\Theta_{\ell} = \Theta_{\ell - N}$ inside the square corresponding to the intersection of $L_k$ and $L_{\ell}$. Hence, replacing $L_k$ with the line $\Theta_k = \Theta_{k-N}$ gives a stretching of the affine pseudoline arrangement $\mathcal{L}$.  (Note that we may contract any unicolor edges without affecting the substance of the argument, since each trip must still pass through the corresponding vertices of each black-white square after an edge-contraction move.) This is a contradiction, and the proof is complete.
\end{proof}

The smallest counterexample given by the proof of Theorem \ref{NotRealizable} is a plabic graph for $\Gr(9,18)_{>0}$.  We conjecture that much smaller non-realizable plabic graphs exist, but have yet to find them.

\raggedright


\begin{thebibliography}{2}



\bibitem{CK:08}
    S. Chakravarty and Y. Kodama,
    Classification of the line-solitons of KPII,
    {\it J. Phys. A: Math. Theor.}, {\bf 41} (2008) 275209 (33pp).
    
\bibitem{CK:09}
     S. Chakravarty and Y. Kodama,
     Soliton solutions of the KP equation and application to shallow water waves,
    {\it Stud. Appl. Math.}, {\bf 123} (2009) 83-151.
    
     

      
\bibitem{DRS:10}
  J. A. DeLoera, J. Rambau and F. Santos,
  {\it Triangulations, Algorithm and computation in Mathematics}, (Springer, Berlin Heidelberg, 2010).     

    

\bibitem{fomin} S. Fomin and A. Zelevinsky. Cluster algebras I: Foundations. \textit{Journal of the American Mathematical Society}, vol 15, pp 497-529, 2002. 
arXiv:math/0104151 [math.RT].

\bibitem{FZ:03}
    S. Fomin and A. Zelevinsky,
    Cluster algebras II: Finite type classification,
   {\it Invent. Math.} {\bf 154} (2003) 63-121.   


\bibitem{galashin}P. Galashin, Plabic graphs and zonotopal tilings.  \textit{Preprint}, 2017. arXiv:1611.00492 [math.CO].

\bibitem{goodman}J. E. Goodman and R. Pollack. Proof of Gr\"{u}nbaum's conjecture on the stretchability of certain arrangements of pseudolines. \textit{J. Combin. Theory, Ser. A}, volume 29, pp 38-390, 1980.

\bibitem{H:04}
R. Hirota,
 \textit{The Direct Method in Soliton Theory}
   (Cambridge University Press, Cambridge, 2004).
 


\bibitem{H:15}
    J. Huang,
    \textit{Classification of soliton graphs on totally positive Grassmannian},
     PhD thesis,  The Ohio State University (2015).  

 \bibitem{KP:70}
    B. B. Kadomtsev and V. I. Petviashvili,
    On the stability of solitary waves in weakly dispersive media,
    {\it Sov. Phys. - Dokl.} {\bf 15} (1970) 539-541.

\bibitem{K:10}
     Y. Kodama,
     KP soliton in shallow water,
     {\it J. Phys. A: Math. Theor.} {\bf 43} (2010) 434004 (54pp).    
     
\bibitem{K:17}
    Y. Kodama,
    {\it KP solitons and the Grassmannian},
    SpringerBriefs in Mathematical Physics, vol.22 (Springer, Singapore, 2017).     
    
\bibitem{KW:11}
 Y. Kodama and L. Williams,
  KP solitons, total positivity, and cluster algebras,
   {\it PNAS} {\bf 108} (22), (2011) 8984-8989.
    
\bibitem{KW:13}
 Y. Kodama and L. Williams,
     The Deodhar decomposition of the Grassmannian and the regularity of KP soliton,
     {\it Adv. Math.} {\bf 244} (2013) 979-1032.
     

\bibitem{kodama}Y. Kodama and L. Williams. KP solitons and total positivity for the Grassmannian.  \textit{Inventiones mathematicae}. volunume 198, number 3, pp 637-699, December 2014.

\bibitem{KY:16}
     Y. Kodama and H. Yeh,
     The KP theory and Mach reflection,
     {\it J. Fluid Mech.} {\bf 800} (2016) 766-786.


\bibitem{leclerc} B. Leclerc and A. Zelevinsky. Quasicommuting
families of quantum Pl\"{u}cker
coordinates. In \textit{Advances in Math. Sciences (Kirillov’s seminar), AMS Trans
lations}, volume 181, pp 85–108, 1998.

\bibitem{Le:91}
C. Lee, 
Regular triangulations of convex polytopes,
{\it Applied Geometry and  Discrete Mathematics - The Victor Klee Festschrift}, (P. Gritzmann and B. Sturmsfels, eds.), veol 4, AMS, Dimacs Series, Providence, RI, (1991) pp. 443--456.




\bibitem{MJD:00}
T. Miwa, M. Jimbo and E. Date,
{\it Solitons: Differential equations, symmetries and infinite-dimensional algebras}, (Cambridge University Press, 2000).     



\bibitem{oh} S. Oh, A. Postnikov, and D. E Speyer. Weak separation and plabic graphs.  \textit{Proceedings of the London Mathematical Society}, volume 110, issue 3, pp 721-754, March 2015. arXiv:1109.4434 [math.CO]

\bibitem{postnikov} A. Postnikov. 
\textit{Total positivity, Grassmannians, and networks}, \textit{preprint}, 2006.
\textsf{http://front.math.ucdavis.edu/math.CO/0609764}.

\bibitem{richter} J. Richter-Gebert and G. Ziegler. Oriented matriods, in \textit{Handbook of Discrete and Computational Geometry}, (J. E. Goodman and J. O'Rourke, Eds.), CRC Press, 1997, pp 111-132.

\bibitem{ringel}
G. Ringel: Teilungen der Ebene durch Geraden oder topologische Geraden, \textit{Math. Zeitschrift},  volume 64 (1956), pp 79-102.


    

\bibitem{scott} J.S. Scott. Grassmannians and Cluster Algebras. \textit{Proceedings of the London Mathematical Society}, volume 92, issue 2, pp 345-380, March 2006. arXiv:math/0311148 [math.CO]


\bibitem{T:06}
    R. R. Thomas,
   {\it  Lectures in geometric combinatorics},
   Student mathematical library, IAS/PARK city mathematical subseries, {\bf 33} (AMS, 2006).


    

 




   





    
    
    
     
     
   
     




     

    


    





 




\end{thebibliography}
\end{document}